 \newtheorem{theorem}{Theorem}%[section]
  \newtheorem{proposition}{Proposition}%[section]
  \newtheorem{lemma}{Lemma}
  \newtheorem{conj}{Conjecture}
\numberwithin{equation}{section}
\def\lfig#1#2#3#4#5{
\begin{figure}[t]
 \centerline{\includegraphics[width=#3]{#2}}
 \vspace{#5}
  \caption{#1 \label{#4}}
 \end{figure}
}
\def\det{\,{\rm det}\, }
\def\diag{{\rm diag}}
\def\sign{{\rm sgn}}
\def\Ch{{\rm Ch}}
\def\Sym{\,{\rm Sym}\, }
\def\Im{\,{\rm Im}\,}
\def\({\left(}
\def\){\right)}
\def\[{\left[}
\def\]{\right]}
\def\<{\left\langle}
\def\>{\right\rangle}
\def\hf{{1\over 2}}
\def\haf{\textstyle{1\over 2}}
\newcommand{\de}{\mathrm{d}}
\newcommand{\I}{\mathrm{i}}
\newcommand{\cL}{\mathcal{L}}
\newcommand{\cD}{\mathcal{D}}
\def\vrh{\varrho}
\def\vph{\varphi}
\newcommand{\p}{\partial}
\newcommand{\cV}{\mathcal{V}}
\newcommand{\cM}{\mathcal{M}}
\newcommand{\cN}{\mathcal{N}}
\newcommand{\cJ}{\mathcal{J}}
\DeclareSymbolFont{AMSa}{U}{msa}{m}{n}
\DeclareSymbolFont{AMSb}{U}{msb}{m}{n}
\DeclareMathSymbol{\fieldR}{\mathalpha}{AMSb}{"52}
\newcommand{\Fb}{{\mathbb F}}
\newcommand{\Bb}{{\mathbb B}}
\newcommand{\cI}{\mathcal{I}}
\newcommand{\cO}{\mathcal{O}}
\newcommand{\cQ}{\mathcal{Q}}
\newcommand{\pa}{\partial}
\newcommand{\nn}{\nonumber}
\newcommand{\eps}{\epsilon}
\newcommand{\veps}{\varepsilon}
\newcommand{\IT}{\mathds{T}}
\newcommand{\IR}{\mathds{R}}
\newcommand{\IB}{\mathds{B}}
\newcommand{\IC}{\mathds{C}}
\newcommand{\IZ}{\mathds{Z}}
\newcommand{\IH}{\mathds{H}}
\newcommand{\IP}{\mathds{P}}
\newcommand{\IN}{\mathds{N}}
\newcommand{\sgn}{\mbox{\rm sgn}}
\newcommand{\q}{\mbox{q}}
\newcommand{\td}{\tilde d}
\newcommand{\trho}{\tilde\rho}
\def\bea{\begin{eqnarray}}
\def\eea{\end{eqnarray}}
\def\be{\begin{equation}}
\def\ee{\end{equation}}
\def\ba{\begin{align}}
\def\ea{\end{align}}
\def\bse{\begin{subequations}}
\def\ese{\end{subequations}}
\def\ba{\bar a}
\def\btau{\bar \tau}
\def\hmu{\hat\mu}
\def\bw{\bar w}
\newcommand{\cB}{\mathcal{B}}
\def\cij#1{c}
\def\ci#1{c}
\def\XXint#1#2#3{{\setbox0=\hbox{$#1{#2#3}{\int}$}
\vcenter{\hbox{$#2#3$}}\kern-.5\wd0}}
\def\gamD#1{\tilde\gamma}
\def\CY{\mathfrak{Y}}
\DeclareMathOperator{\Erf}{Erf}
\def\cl0{\tilde c_0}
\newcommand{\bpt}{{\scriptstyle\boldsymbol{*}}}
\newcommand{\bfw}{{\boldsymbol w}}
\newcommand{\bfv}{{\boldsymbol v}}
\newcommand{\bfk}{{\boldsymbol k}}
\newcommand{\bfp}{{\boldsymbol p}}
\newcommand{\bfx}{{\boldsymbol x}}
\newcommand{\bfy}{{\boldsymbol y}}
\newcommand{\bfmu}{{\boldsymbol \mu}}
\newcommand{\bfnu}{{\boldsymbol \nu}}
\newcommand{\bfrho}{{\boldsymbol \rho}}
\newcommand{\bfxi}{{\boldsymbol \xi}}
\newcommand{\bfzeta}{{\boldsymbol \zeta}}
\newcommand{\bfDelta}{{\boldsymbol \Delta}}
\def\sE{\mathscr{E}}
\def\Zv{\mathscr{Z}}
\def\Zv{\mathscr{Z}}
\def\Rv{\mathscr{R}}
\def\bOm{\bar\Omega}
\def\whh{\widehat h}
\def\whg{\widehat g}
\def\whPhi{\widehat\Phi}
\def\hrVW{h^{\rm VW,ref}}
\def\sEp{\sE^{(+)}}
\def\sEf{\sE^{(0)}}
\def\Lat{\mathbf{\Lambda}}
\def\Latc{\Lambda_S^v}
\def\Latci#1{\Lambda_{#1}^v}
\def\ptt{\mathscr{N}}
\def\bfptt{\boldsymbol{\mathfrak{p}}}
\def\cl{c^{(\ell)}}
\def\vu{\mathbb{u}}
\def\vv{\mathbb{v}}
\def\vg{\mathbb{g}}
\def\vd{\mathbb{d}}
\newcommand{\tvtheta}{\lefteqn{\smash{\mathop{\vphantom{<}}\limits^{\;\sim}}}\vartheta}
\newcommand{\cvtheta}{\lefteqn{\smash{\mathop{\vphantom{<}}\limits^{\;\circ}}}\vartheta}
\def\gama{\check\gamma}
\def\Vop{\mathbb{V}}
\def\sN{M}
\def\y{y}
\def\nb{b^{\,0}}
\def\nv{v_0}
\def\cv{v_1}
\def\qv{q^v}
\def\cdv{r_1}
\def\under#1#2{\mathop{#1}\limits_{#2}}
\def\bcup{\mathop{\Large\vphantom{A_a}\mbox{$\cup$}}}
\title{Vafa-Witten invariants from modular anomaly}
\author{Sergei Alexandrov
\\
{\it
Laboratoire Charles Coulomb (L2C), Universit\'e de Montpellier,
CNRS, F-34095, Montpellier, France}\\
{\it
Department of High Energy and Elementary Particle Physics,
Saint Petersburg State University,
7/9 Universitetskaya nab., St. Petersburg 199034, Russia}\\

\vspace*{2mm} {\tt e-mail:
\email{sergey.alexandrov@umontpellier.fr}
}

\vspace*{-3mm}

}
\abstract{Recently, a universal formula for a non-holomorphic modular completion of the generating functions
of refined BPS indices in various theories with $N=2$ supersymmetry has been suggested.
It expresses the completion through the holomorphic generating functions of lower ranks.
Here we show that for $U(N)$ Vafa-Witten theory on Hirzebruch and del Pezzo surfaces this formula can be used to extract
the holomorphic functions themselves, thereby providing the Betti numbers of instanton moduli spaces on such surfaces.
As a result, we derive a closed formula for the generating functions and their completions for {\it all} $N$.
Besides, our construction reveals in a simple way instances of fiber-base duality, which can be used to
derive new non-trivial identities for generalized Appell functions. It also suggests the existence of new invariants,
whose meaning however remains obscure.
}
\begin{document}

\section{Introduction}
\label{sec-intro}

The determination of BPS indices is an outstanding problem in both physics and mathematics.
On the physics side, the indices encode the BPS spectrum in supersymmetric gauge and string theories,
which provides an important information about their low energy effective theories and
quantum corrections to physical observables. On the mathematical side, they often turn out to coincide
with various topological invariants of the manifolds that the corresponding physical theory is defined on.
A typical example is given by a (generalized) Donaldson-Thomas (DT) invariant $\Omega(\gamma)$ of a Calabi-Yau (CY)
threefold $\CY$, which on one hand encodes the microscopic entropy of a black hole with charge $\gamma$ in
string theory compactified on $\CY$, and on the other hand, if $\CY$ allows a non-trivial local limit,
counts the number of BPS states of the same charge in a supersymmetric gauge theory resulting from decoupling gravity.

Remarkably, sometimes the BPS indices can be organized into generating functions possessing
some beautiful symmetry properties.
Often mysterious from the mathematical definition of the BPS indices, these properties
can be explained using a proper physical interpretation.
For instance, the DT invariants $\Omega(\gamma)$ supported on an irreducible divisor $\cD\subset\CY$ define a function
transforming as a modular form under $SL(2,\IZ)$ \cite{Maldacena:1997de,Gaiotto:2005gf,deBoer:2006vg}.
Whereas the origin of the modular group is obscure in the Calabi-Yau geometry, it can be traced back to
the S-duality of type IIB string theory or to the torus appearing in the description of the same physical system
as M-theory compactified on $\CY\times T^2$.

Another example is given by Euler numbers of the moduli spaces of semi-stable sheaves on a complex surface $S$,
which have a physical interpretation as moduli spaces of instantons in the topologically twisted $\cN=4$
super-Yang-Mills (SYM), known as Vafa-Witten (VW) theory, defined on this surface \cite{Vafa:1994tf}.
The generating functions of these numbers, called also Vafa-Witten invariants,
appear as modular forms. This fact can be understood as a consequence of S-duality of the supersymmetric gauge theory.
In fact, this example is closely related to the previous one since for non-compact CY
threefolds given by the canonical bundle over $S$, the DT invariants supported on the divisor $\cD=N[S]$
coincide with the VW invariants of this surface for gauge group $U(N)$
\cite{Minahan:1998vr,Alim:2010cf,Gholampour:2017bxh}.

The modular symmetry is so restrictive that it can be used to find the actual values of the BPS indices
defining modular generating functions. For instance, the Rademacher expansion allows to compute
all Fourier coefficients of a modular form of negative weight in terms of just its first few coefficients
(the so-called polar terms) \cite{Rademacher:1938,Rademacher:1939,Niebur:1974}.
However, in many interesting cases the modular properties of the generating functions are not so simple and
these functions acquire a modular anomaly.
In the above examples this is the case when the divisor $\cD$ is {\it reducible} and when the surface $S$ has $b_2^+(S)=1$.
Remarkably, this anomaly is typically of a very special type
which implies that the generating functions are examples of {\it mixed mock} modular forms
or their higher depth generalizations \cite{Vafa:1994tf,Alexandrov:2016tnf,Alexandrov:2018lgp}.
Although in some simple cases a generalization of the Rademacher expansion can still be elaborated
\cite{Bringmann:2011,Bringmann:2010sd,Ferrari:2017msn,Bringmann:2018cov}, in general this appears to be out of reach.

In this paper we propose an alternative method to find BPS indices which is similar
to the one used to solve the topological string in \cite{Huang:2015sta,Gu:2017ccq}.
The idea is to trade the modular anomaly for a holomorphic anomaly, and then to fix
the holomorphic ambiguity in the resulting solution using restrictions from modularity and regularity.
The first step implies finding a modular completion of the original generating function,
i.e. its non-holomorphic modification which transforms as a true (vector valued) modular form.
Note that this completion is highly important by itself because typically the modular symmetry is more fundamental
than holomorphicity and physical quantities are always expected to be expressed through the completed modular functions.

Recently, using the string theory interpretation,
a general formula has been found for the modular completion of the generating functions of DT invariants $\Omega(\gamma)$
assigned to a divisor $\cD$ in arbitrary compact CY
and evaluated in the so-called attractor chamber of the moduli space \cite{Alexandrov:2018lgp}.
It expresses the completion $\whh_{\cD,\mu}(\tau,\bar\tau)$, where $\mu\in\Lambda/\Lambda^*$ (with $\Lambda=H_4(\CY,\IZ)$)
keeps track of the residual flux after spectral flow, as an expansion
in terms of products of the holomorphic functions $h_{\cD_i,\mu_i}(\tau)$ such that $\sum \cD_i=\cD$.
Later in \cite{Alexandrov:2019rth}, this solution was generalized to include
a complex refinement parameter $y=e^{2\pi\I z}$, in which case the formula for the completion even simplifies,
as well as extended to the case of non-compact CYs. Due to the relation mentioned above,
this provided a modular completion for the generating functions $\hrVW_{N,\mu}(\tau,z)$
of (refined) VW invariants of $S$ with $b_2^+(S)=1$
and $b_1(S)=0$\footnote{The second condition is needed to ensure that $S$ is rigid inside $\CY$.}
for gauge group $U(N)$ of {\it any} rank $N$, evaluated in the so called ``canonical" chamber of the moduli space,
which corresponds to the attractor chamber of the CY geometry.
From now on we restrict ourselves to this case and to avoid cluttering, we drop the label ``VW,ref"
so that $h_{N,\mu}$ will denote the generating functions of refined VW invariants defined below in \eqref{defhVWref}.

The construction of \cite{Alexandrov:2019rth} gives $\whh_{N,\mu}$ in terms of $h_{N_i,\mu}$, $N_i\le N$, and
ensures that it transforms as a vector valued Jacobi form with a certain multiplier system and with weight and index given by
\be
\label{indexconjVW}
w_S = -\frac12\, b_2(S),
\qquad
m_S(N) = -\frac16\, K_S^2 (N^3-N)-2N ,
\ee
where $K_S=-c_1(S)$ is the canonical class of $S$.
At this point, the holomorphic functions $h_{N,\mu}$ remain undetermined and represent the unknown part of the completion.
However, it is clear that the transformation properties of $\whh_{N,\mu}$ impose on them severe restrictions.
In this paper we show that they can actually be uniquely fixed up to a holomorphic modular function.
Furthermore, taking into account that $h_{N,\mu}$ must have a simple pole at $z=0$ allows to fix this ambiguity as well,
up to a finite number of parameters: the choice of a null vector $\nv\in\Lambda_S= H^2(S,\IZ)$, i.e. satisfying $\nv^2=0$,
and $b_2(S)-2$ integer parameters $\kappa_I$.
All these parameters are easily fixed by comparing, for example, $h_{2,\mu}$ with the known results in the literature.

As a result, we arrive at explicit representations for both $h_{N,\mu}$ and $\whh_{N,\mu}$
in the canonical chamber in terms of various theta series and certain modular forms.
In this paper we concentrate on the case of Hirzebruch and del Pezzo surfaces for which
the generating functions are found to be\footnote{In the main text
we mainly work in terms of the normalized functions $g_{N,\mu}=h_{N,\mu} \, h_{1,0}^{-N}$
and $\phi_{N,\mu}=\vph_{N,\mu} \, h_{1,0}^{-N}$.}
\be
h_{N,\mu}(\tau,z)=\sum_{n=1}^\infty\frac{1}{2^{n-1}}
\sum_{\sum_{i=1}^n \gama_i=\gama}
\Phi_n(\{\gama_i\})
\, \q^{\hf Q_n(\{\gama_i\})}
\, \y^{c_1(S)\cdot\sum\limits_{i=1}^n \ptt_i q_i }\prod_{i=1}^n \vph_{N_i,\mu_i}(\tau,z),
\label{mainres}
\ee
where $\gama=(N,\mu-\frac{N}{2}\, c_1(S))$, $\gama_i=(N_i,q_i)$ with $q_i$ decomposed as in \eqref{quant-q},
$Q_n$ is the quadratic form \eqref{defQlr} and $\ptt_i=\sum_{j<i}N_j-\sum_{j>i}N_j$. The kernel $\Phi_n$
is specified in Theorem \ref{th-main}, Eq.~\eqref{kerg}, and $\vph_{N,\mu}$ is a vector valued Jacobi form given by
\be
\vph_{N,\mu}=H^S_{N,\mu}(\tau,z;\nv)=\delta^{(N)}_{\nv\cdot\mu}\,
\frac{\I (-1)^{N-1} \eta(\tau)^{2N-3}}
{\theta_1(\tau,2Nz)\, \prod_{m=1}^{N-1} \theta_1(\tau,2mz)^2}
\prod_{\alpha=3}^{b_2}B_{N,\mu^\alpha}(\tau,z),
\ee
where $\delta^{(n)}_x$ is the Kronecker delta \eqref{defdelta},
$\nv$ is the relevant null vector, $H^S_{N,\mu}$ is the generating function of the so called stack invariants
evaluated in the chamber of the moduli space with $J=\nv$, $B_{N,\ell}$
are the ``blow-up functions" \eqref{defBNk} and $\mu^\alpha$, $\alpha\ge 3$,
are the components of the first Chern class of the sheaf along $m-1$ exceptional divisors of the del Pezzo surface $\Bb_m$.
The formula for the completion $\whh_{N,\mu}$ has exactly the same form, but with the kernel replaced by $\whPhi_n$ \eqref{kerhg}.

Furthermore, the existence of solutions with other parameters, in particular, generated by other null vectors $\nv'\ne \nv$,
leads to interesting consequences.
First, it turns out that certain pairs of null vectors give rise to the same generating functions.
This can be seen as a manifestation of the fiber-base duality \cite{Katz:1997eq,Mitev:2014jza}.
For the generating functions of refined VW invariants, we find this phenomenon for $\Fb_0$, $\Fb_2$ and $\Bb_m$ with $m\ge 4$.
The equality of the two sets of generating functions implies certain identities between
Jacobi theta functions, Dedekind function and generalized Appell functions introduced in \cite{Manschot:2014cca}.
Whereas for $N=2$ they reduce to the periodicity property of the classical Appell--Lerch function, for higher $N$
they appear to be new and very non-trivial.

Second, expanding the alternative solutions in Fourier series, one may extract new rational numbers.
It is an interesting question whether they can be interpreted as some BPS indices or topological invariants.
Here we do not try to answer it and restrict ourselves just to noticing their existence.

The outline of the paper is the following. In the next section we define the generating functions
of refined VW invariants and describe the expression for their modular completion found in \cite{Alexandrov:2019rth}.
Then in $\S$\ref{sec-indef} we find the holomorphic generating functions up to a modular ambiguity,
which is then fixed in $\S$\ref{sec-holom} by studying the behavior near $z=0$.
In $\S$\ref{sec-ident} we discuss the generating functions based on different null vectors,
reveal the fiber-base duality and derive its consequences for the generalized Appell functions.
Finally, $\S$\ref{sec-concl} presents our conclusions.
A few appendices review relevant information about indefinite theta series, Hirzebruch and del Pezzo surfaces,
and contain details of some calculations.

\section{Vafa-Witten invariants and modular completion}
\label{sec-compl}

\subsection{Generating functions of refined VW invariants}
\label{subsec-VWinv}

As was shown in the seminal paper \cite{Vafa:1994tf}, twisting $\cN=4$ super Yang-Mills gives rise to a topological
theory which can be defined on any smooth compact 4-dimensional manifold $S$. We assume that $S$
is a smooth almost Fano surface with $b_2^+(S)=1$ and $b_1(S)=0$ so that it can appear as the base of
a smooth elliptic fibration $\CY\to S$ with a single section and the total space
being a CY threefold where the divisor $S$ is rigid. These restrictions are needed
to borrow the results explained below, which have been obtained originally for compact CYs.
The non-compact CY, providing a bridge to VW theory, arises by taking the so-called local limit,
which plays the central role in geometric engineering of supersymmetric gauge theories \cite{Katz:1996fh,Katz:1997eq}.
This limit is obtained by zooming in on the region near a singularity in the moduli space, and is realized
mathematically by sending to infinity the K\"ahler modulus associated
to the elliptic fiber \cite{Alexandrov:2017mgi}.

After twisting, the path integral localizes on solutions of hermitian Yang-Mills equations\footnote{For general complex surfaces
this statement is not true and there are additional contributions from the so-called monopole branch \cite{Dijkgraaf:1997ce}.
However, Fano surfaces are K\"ahler manifolds with a positive anti-canonical class
which ensures the absence of the monopole contributions \cite{Vafa:1994tf}.}
for the field strength $F$, which means that $F^{(0,2)}=F^{(2,0)}=0$ and $\int_S F\wedge J$
is proportional to the identity matrix, where $J$ is
the K\"ahler form on $S$. For gauge group $U(N)$, the solutions span a moduli space $\cM_{N,\mu,n,J}$
classified by $\mu=-c_1(F)\in \Lambda_S\equiv H^2(S,\IZ)$ and $n=\int_S c_2(F)\in \IZ$.
In fact, the parameter $\mu$ can be restricted to $\Lambda_S/N\Lambda_S$
because the moduli space does not change upon tensoring $F$ with a line bundle $\cL$
which leads to $\mu\to \mu-Nc_1(\cL)$, but leaves invariant the Bogomolov discriminant
\be
\Delta(F):= \frac{1}{N} \left( n - \frac{N-1}{2N} \mu^2 \right),
\label{Bogom}
\ee
where $\mu^2\equiv \int_S\mu^2$.

According to the results of \cite{Vafa:1994tf}, the partition function of this theory
is expressed through Euler numbers of the moduli spaces $\cM_{N,\mu,n,J}$.
We however will be interested in the refined invariants defined by the Betti numbers of the moduli spaces
\be
\Omega_J(\gamma,y)=\frac{\sum_{p=0}^{2d_{\IC}(\cM_{\gamma,J})} y^{p-d_{\IC}(\cM_{\gamma,J})}\, b_p(\cM_{\gamma,J})}{y-y^{-1}}\, ,
\label{intOm}
\ee
where $y=e^{2\pi\I z}$ is the refinement parameter, $d_{\IC}(\cM)$ is the complex dimension of $\cM$,
and we introduced $\gamma=(N,\mu,-n+\frac12\, \mu^2)$.
As usual (see, e.g. \cite{Manschot:2010xp,Manschot:2017xcr}),
for discussion of modularity it is more convenient to work in terms of their rational counterparts given by
\be
\bOm_J(\gamma,y) =  \sum_{m|\gamma} \frac{1}{m}\, \Omega_J(\gamma/m, - (-y)^{m}).
\label{defcref}
\ee
Clearly, the dependence on $J$ is only piecewise constant and moreover was found to be absent when $b_2^+(S)>1$.
For $b_2^+(S)=1$ and $b_2(S)>1$, which is our case of interest, it is present, but is captured
by the standard wall-crossing formulas \cite{ks,Joyce:2008pc,Joyce:2009xv},
familiar in the context of supersymmetric gauge theories, $N=2$ supergravity and DT invariants.
We will be interested in one particular chamber with $J=-K_S$, which is called canonical and corresponds
to the attractor chamber for DT invariants where the results of \cite{Alexandrov:2018lgp,Alexandrov:2019rth} are applied.
Hence, we define
\be
\label{defhVWref}
h_{N,\mu}(\tau,z) =
\sum_{n\geq 0}
\bOm_{-K_S}(\gamma,y)\,
\q^{N\(\Delta(F) - \tfrac{\chi(S)}{24}\)},
\ee
where we used the standard notation $\q=e^{2\pi\I\tau}$.
Note that due to \eqref{intOm} this function has single poles at $z=0$ and $z=\hf$ with the residues
given by the generating function of the unrefined VW invariants.

For $N=1$, the generating function is known for any $S$ \cite{Gottsche:1990} and when
$b_1(S)=0$ is given by
\be
\label{h10anySref}
h_{1,0}(\tau,z) = \frac{\I}{\theta_1(\tau,2z)\, \eta(\tau)^{b_2(S)-1}},
\ee
where $\theta_1(\tau,z)$ is the Jacobi theta function and $\eta(\tau)$ is the Dedekind eta function,
whose definitions are recalled in Appendix \ref{ap-thJ}.
For $N>1$, the situation is more complicated, although many explicit expressions are already available in the literature.
For instance, up to $N=3$ they exist for $S=\IP^2$ \cite{Yoshioka:1994,Manschot:2010nc,Manschot:2011ym},
Hirzebruch surfaces \cite{Yoshioka:1995,Manschot:2011dj,Manschot:2011ym} and $S=\IB_9$ \cite{yoshioka1999euler,Klemm:2012sx},
even in generic chamber of the moduli space. (Ref. \cite{Manschot:2011ym} gives also $h_{4,0}$ in a specific chamber).
For other del Pezzo surfaces, $h_{2,\mu}$ can be found in \cite{Haghighat:2012bm}.
In principle, a general procedure is known \cite{Manschot:2014cca} which allows to compute
$h_{N,\mu}$ using the blow-up formula \cite{Yoshioka:1996,0961.14022,Li:1998nv} and wall-crossing.
But it is complicated by the fact that one should pass through the so-called stack invariants, which are polynomial
combinations of $\bOm_J(\gamma,y)$ having simpler transformation properties under wall-crossing.
Recently, in \cite{Beaujard:2020sgs} another general method to compute VW invariants
has been proposed based on the relation to quivers and the flow tree index introduced in \cite{Alexandrov:2018iao}.
That paper also provided many explicit expressions, including the VW invariants for higher ranks.
However, to the best of our knowledge, up to now there was no closed formula for the generating functions
of arbitrary rank $N$ for any relevant $S$. The goal of this paper is to fill this gap using the constraints imposed by modularity.

\subsection{Modular completion}

S-duality of $U(N)$ $\cN=4$ super-Yang-Mills suggests that the partition function of VW theory transforms as
a modular form so that one can expect that the generating functions $h_{N,\mu}$ behave as Jacobi forms under
\be
\tau\to \frac{a\tau+b}{c\tau+d}\, ,
\qquad
z\to \frac{z}{c\tau+d}\, ,
\qquad
\(\begin{array}{cc}
a & b \\ c & d
\end{array}\)\in SL(2,\IZ)
\label{transf-tz}
\ee
(see Appendix \ref{ap-Jacobi} for the definition of Jacobi form).
And indeed, for $N=1$ the function \eqref{h10anySref} is a Jacobi form of weight $-\frac12 b_2(S)$ and index $-2$.
However, this expectation turns out to be naive for $h_{N,\mu}$ with $N\ge 2$ and $b_2^+(S)=1$
in which case a modular anomaly has been found \cite{Vafa:1994tf}.
This does not imply the failure of S-duality yet. In fact, it is supposed to be more fundamental than holomorphicity, and
therefore one expects that the partition function is expressed through a non-holomorphic modular completion $\whh_{N,\mu}$
which does transform as a true Jacobi form.
As was shown explicitly in \cite{Dabholkar:2020fde},
the non-holomorphic contributions to the path integral are generated by Q-exact terms due to
boundaries of the moduli space, similarly to the holomorphic anomaly in
the topological string theory \cite{Bershadsky:1993ta}.
Thus, the determination of the completion is an important problem both
for the purpose of finding the physical partition function
and as a characterization of the modular anomaly of the original generating function.

Until recently, only very limited results existed in that respect, not going beyond $N=2$
\cite{Vafa:1994tf,Bringmann:2010sd,Manschot:2011dj} and $N=3$ for $\IP^2$ \cite{Manschot:2017xcr}.
The breakthrough came from the analysis of D-instantons in Calabi-Yau compactifications of type II string theory
(see \cite{Alexandrov:2011va} for a review).
S-duality of type IIB string theory implies that the hypermultiplet moduli space
of the compactified theory carries an isometric action of $SL(2,\IZ)$.
In particular, it must be consistent with instanton corrections coming from D3-branes wrapping a divisor in CY.
Since, on one hand, these instanton contributions are weighted by DT invariants $\Omega(\gamma)$,
and on the other hand, their description is known to all orders \cite{Alexandrov:2008gh,Alexandrov:2009zh},
this can be used to derive a restriction on $\Omega(\gamma)$, which is realized as a constraint
on the transformation property of their generating function $h_{\cD,\mu}$ evaluated at the attractor point.
The constraint turns out to depend on the properties of the divisor $\cD$ wrapped by D3-brane.
Whereas for an irreducible divisor $h_{\cD,\mu}$ must be modular \cite{Alexandrov:2012au},
for reducible $\cD$, i.e. decomposable as a sum of effective divisors $\cD=\sum\cD_i$,
it was shown to have a modular anomaly \cite{Alexandrov:2016tnf}.
However, using an expansion of $\Omega(\gamma)$ in terms of their values at the attractor point \cite{Alexandrov:2018iao}
derived from the attractor flow conjecture \cite{Denef:2001xn}, it was possible to find a non-holomorphic combination of
the holomorphic generating functions that must transform as a vector valued modular form
to ensure the isometric action of $SL(2,\IZ)$ on the hypermultiplet moduli space \cite{Alexandrov:2018lgp}.
This combination is nothing else but the modular completion $\whh_{\cD,\mu}$.

Furthermore, in \cite{Alexandrov:2019rth} this construction was generalized to the refined case with a non-trivial
refinement parameter $y$. Although this refined construction remains conjectural
since it does not have a rigorous justification from a well-established S-duality, it passed several
non-trivial consistency checks.  Finally, choosing the CY to be an elliptic fibration over $S$ and taking a local limit,
one arrives at the following expression for the modular completion of the generating function of
refined VW invariants\footnote{This expression is the specification of Eq. (3.12) in \cite{Alexandrov:2019rth}
to the case of interest, i.e. the local limit of elliptically fibered CY.
Most of relevant data was computed in $\S$4.3 of that paper.
Comparing to \cite{Alexandrov:2019rth}, we also extracted the power of $y$ from the coefficient and called the new one $\Rv_n$.}
\be
\whh_{N,\mu}(\tau,z)= \sum_{n=1}^\infty\frac{1}{2^{n-1}}
\sum_{\sum_{i=1}^n \gama_i=\gama}
\Rv_n(\{\gama_i\},\tau_2,\beta)
\, \q^{\hf Q_n(\{\gama_i\})}
\, \y^{\sum_{i<j} \gamma_{ij} }
\prod_{i=1}^n h_{N_i,\mu_i}(\tau,z).
\label{exp-whhr}
\ee

Let us explain various notations appearing in this formula.
First, we introduced charges $\gama=( N, q)$ where $q\in \Lambda_S+\frac{N}{2} K_S$.
To take into account the spectral flow invariance discussed above \eqref{Bogom},
we further decompose $q$ into the part spanning $N\Lambda_S$ and the residue class $\mu\in\Lambda_S/N\Lambda_S$.
In a basis $D_\alpha$, $\alpha=1,\dots, b_2(S)$, of $H_2(S,\IZ)$ this decomposition is given by\footnote{In \cite{Alexandrov:2019rth}
the last term was slightly different with $N$ replaced by $N^2$.
However, the two decompositions are related by a shift of $\mu$ in \eqref{quant-q}, as discussed around Eq. (4.35) of that paper.
Here we choose to work in the conventions accepted in VW theory to facilitate comparison to the literature.
For the same reason we changed also the sign of the refinement parameter in \eqref{exp-whhr}.
}
\be
\label{quant-q}
q_\alpha =  N \, C_{\alpha\beta} \epsilon^{\beta} + \mu_\alpha- \frac{N}{2}\, C_{\alpha\beta} c_1^\beta,
\qquad
\eps^\alpha\in \IZ,
\ee
where $C_{\alpha\beta}=D_\alpha\cap D_\beta$ is the intersection matrix on $S$
and $c_1^\alpha$ are the components of the first Chern class.
Then the sum in \eqref{exp-whhr} goes over all decompositions of the charge $\gama$ such that
\be
\sum_{i=1}^n N_i=N,
\qquad
\sum_{i=1}^n q_{i,\alpha}=\mu_\alpha- \frac{N}{2}\, C_{\alpha\beta} c_1^\beta,
\label{sumgam}
\ee
where $q_{i,\alpha}$ are quantized as in \eqref{quant-q} with $N$ replaced by $N_i$.

Next, we defined the combination
\be
\label{gam12}
\gamma_{ij}=
c_1^{\alpha} (N_i q_{j,\alpha} -N_j q_{i,\alpha})
\ee
and the quadratic form $Q_n$ given by
\be
Q_n(\{\gama_i\})= \frac{1}{N}\, q^2-\sum_{i=1}^n \frac{1}{N_i} q_{i}^2
=-\sum_{i<j}\frac{(N_i q_j - N_j q_i)^2}{NN_iN_j}\, ,
\label{defQlr}
\ee
where $q^2=C^{\alpha\beta}q_\alpha q_\beta$ and $C^{\alpha\beta}$ is the inverse of $C_{\alpha\beta}$.
Note that they are both invariant under an overall shift of $\eps_i^\alpha$.
The same is true for the coefficients $\Rv_n$ so that the r.h.s. of \eqref{exp-whhr}
is invariant under shifts of $\eps^\alpha$,
which explains why it is possible to put it zero in \eqref{sumgam}.

\lfig{An example of Schr\"oder tree contributing to $\Rv_8$. Near each vertex we showed the corresponding factor
using the shorthand notation $\gamma_{i+j}=\gamma_i+\gamma_j$.}
{WRtree-new3}{9.75cm}{fig-Rtree}{-1.2cm}

The non-holomorphicity of the completion is due to the coefficients $\Rv_n$. They depend
on the imaginary parts of both $\tau$ and $z$, defined as $\tau=\tau_1+\I\tau_2$ and $z=\alpha-\tau\beta$,
and are given by
\be
\Rv_n(\{\gama_i\},\tau_2,\beta)= \Sym\left\{\sum_{T\in\IT_n^{\rm S}}(-1)^{n_T-1}
\sEp_{v_0}\prod_{v\in V_T\setminus{\{v_0\}}}\sEf_{v}\right\},
\label{solRnr}
\ee
where $\Sym$ denotes symmetrization (with weight $1/n!$) with respect to the charges $\gama_i$.
Here the sum goes over so-called Schr\"oder trees with $n$ leaves (see Figure \ref{fig-Rtree}), i.e. rooted planar
trees such that all vertices $v\in V_T$ (the set of vertices of $T$ excluding the leaves) have $k_v\geq 2$ children,
$n_T$ is the number of elements in $V_T$, and $v_0$ labels the root vertex.
The vertices of $T$ are labelled by charges so that the leaves carry charges $\gama_i$, whereas the charges assigned to other vertices
are given recursively by
the sum of charges of their children, $\gama_v\in\sum_{v'\in\Ch(v)}\gama_{v'}$.
Finally, to define the functions $\sEf_v$ and $\sEp_v$, let us consider
a set of functions $\sE_n$ depending on $n$ charges, $\tau_2$ and $\beta$, whose explicit expressions will be given shortly.
Given this set, we take
\be
\begin{split}
\sEf_n(\{\gama_i\})=&\,\under{\lim}{\tau_2\to\infty}\sE_n\(\{\gama_i\},\tau_2, -\frac{\Im z}{\tau_2}\),
\\
\sEp_n(\{\gama_i\},\tau_2,\beta)=&\,\sE_n(\{\gama_i\},\tau_2,\beta)-\sEf_n(\{\gama_i\}),
\end{split}
\label{redef-cErf}
\ee
so that $\sEf_n$ does not depend on $\tau_2$ (and $\beta$),
whereas the second term $\sEp_n$ turns out to be exponentially suppressed as $\tau_2\to\infty$ keeping
the charges $\gama_i$ fixed. Then, given a Schr\"oder tree $T$,
we set $\sE_{v}\equiv \sE_{k_v}(\{\gama_{v'}\})$ (and similarly for $\sEf_{v}, \sEp_{v}$)
where $v'\in \Ch(v)$ runs over the $k_v$ children of the vertex $v$.

It remains to provide the functions $\sE_n$. They are given by
\be
\sE_n(\{\gama_i\},\tau_2,\beta)= \Phi^E_{n-1}(\{ \bfv_{\ell}\};\bfx),
\label{Erefsim}
\ee
where $\Phi^E_{n-1}$ are (boosted) generalized error functions described in Appendix \ref{ap-generr},
which depend on $nb_2$-dimensional vectors with the following components
\be
\begin{split}
\bfv_{\ell,i}^\alpha =&\, \(\sN_\ell\delta_{i>\ell}-(N-\sN_\ell)\delta_{i\le \ell}\)c_1^\alpha,
\qquad \ell=1,\dots, n-1,
\\
\bfx_i^\alpha=&\, \sqrt{2\tau_2}\(\tfrac{1}{N_i}\,C^{\alpha\beta}q_{i,\beta}+\beta \ptt_i c_1^\alpha\),
\qquad i=1,\dots, n,
\end{split}
\label{vectors}
\ee
where $\sN_\ell=\sum_{k=1}^\ell N_k$ and $\ptt_i=\sum_{j<i}N_j-\sum_{j>i}N_j$.
In particular, with respect to the bilinear form\footnote{We use different multiplication symbols to distinguish between
bilinear forms on different spaces: $\cdot$ denotes contraction of $b_2$-dimensional vectors using $C_{\alpha\beta}$
(or its inverse), whereas $\bpt$ is used for $nb_2$-dimensional vectors.}
\be
\bfx\bpt\bfy=\sum_{i=1}^n N_i \, x_i\cdot y_i,
\label{biform}
\ee
these vectors satisfy
\be
\bfv_\ell\,\bpt \bfx =\sqrt{2\tau_2}\(\Gamma_\ell+N\sN_\ell (N-\sN_\ell) c_1^2\,\beta\),
\qquad
\Gamma_{\ell}=\sum_{i=1}^\ell\sum_{j=\ell+1}^n \gamma_{ij}.
\label{defGam}
\ee
Evaluating the limit in \eqref{redef-cErf}, one obtains \cite{Alexandrov:2019rth}
\be
\sEf_n(\{\gama_i\})=
e_{|\cI|}\prod_{\ell\in \Zv_{n-1}\setminus \cI}\sgn(\Gamma_{\ell}),
\quad \mbox{where }\cI\subseteq \Zv_{n-1}:\
\left\{ \begin{array}{ll}
\Gamma_\ell=0 & \mbox{ for } \ell\in \cI,
\\
\Gamma_\ell\ne 0 & \mbox{ for } \ell\notin \cI.
\end{array}\right.
\label{rel-gEf-zero}
\ee
Here $\Zv_{n}=\{1,\dots,n\}$, $|\cI|$ is the cardinality of the set, and
$e_{m-1}$ is the $m$-th Taylor coefficient of ${\rm arctanh}$, namely
\be
e_m=\left\{ \begin{array}{cc}
0 & \mbox{ if }m \mbox{ is odd},
\\
\frac{1}{m+1}& \mbox{ if }m \mbox{ is even}.
\end{array}\right.
\label{valek}
\ee
Thus, the configurations of charges leading to the vanishing arguments of sign functions should be treated separately.

The analysis of \cite{Alexandrov:2019rth} suggests that the completion defined in \eqref{exp-whhr}
transforms as a vector-valued Jacobi form (see \eqref{Jacobi}) of
weight $(-\frac12 b_2(S),0)$, index $m_S(N)$ given in \eqref{indexconjVW}, and the following multiplier system
\be
M^{(N)}_{\mu\nu}(T)=e^{-\pi\I \tfrac{N-1}{N}\, \mu^2 -\tfrac{\pi\I}{12}N \chi(S)}\,\delta_{\mu\nu},
\qquad
M^{(N)}_{\mu\nu}(S)=
\frac{\I(-1)^{N-1}}{(\I N)^{b_2(S)/2}}\,
e^{- \frac{2\pi\I}{N}\mu\cdot \nu}.
\ee

\section{Generating functions and indefinite theta series}
\label{sec-indef}

In this section we show how modularity fixes the generating functions $h_{N,\mu}$ up to holomorphic {\it modular} functions.
Our analysis is restricted to the Hirzebruch surfaces $\Fb_m$ with $0\leq m\leq 2$
and the del Pezzo surfaces $\Bb_m$ with $1\leq m \leq 8$, for which the relevant geometric data are reviewed
in Appendices \ref{sec-Fm} and \ref{sec-Bm}. In particular, we do not consider $\IB_0=\IP^2$ for which $b_2(S)=1$
since, as we will see shortly, it requires special attention.
In all cases of interest the lattice $\Lambda_S=H^2(S,\IZ)$ is unimodular
of signature $(1,b_2(S)-1)$, i.e. the corresponding quadratic form always has only one positive eigenvalue.

The form of the modular completion \eqref{exp-whhr} suggests that it is convenient to define
\be
g_{N,\mu}=h_{N,\mu} \, h_{1,0}^{-N},
\qquad
\whg_{N,\mu}=\whh_{N,\mu} \, h_{1,0}^{-N},
\label{defgmu}
\ee
where $h_{1,0}$ is given explicitly in \eqref{h10anySref}.
It follows that $\whg_{N,\mu}$ transforms as a vector valued Jacobi form of weight $\frac12(N-1)b_2(S)$ and
index $ -\frac16\, (N^3-N)K_S^2$.

\subsection{$N=2$}

We start with the simplest case $N=2$ to demonstrate the mechanism fixing the generating functions $g_{N,\mu}$ in detail.
We will be able to fix $g_{2,\mu}$ up to a holomorphic vector valued Jacobi form.
In the next subsection, we extend this derivation to arbitrary rank, whereas the holomorphic modular ambiguity will be fixed in
$\S$\ref{sec-holom}.

For $N=2$, the modular completion \eqref{exp-whhr} reads
\be
\begin{split}
\whg_{2,\mu}=&\, g_{2,\mu}+\hf \sum_{q_1+q_2=\mu+K_S}
\(E_1\(\sqrt{\frac{\tau_2}{c_1^2}}\(\gamma_{12}+2 c_1^2\beta\)\)-\sgn(\gamma_{12})\)
\q^{-\frac14 (q_2-q_1)^2}\, \y^{\gamma_{12} }
\\
=&\, g_{2,\mu}+\hf \sum_{k\in \Lambda_S+\hf\mu}
\(E_1\(2\sqrt{\frac{\tau_2}{c_1^2}}\(c_1\cdot k+c_1^2\beta\)\)-\sgn(c_1\cdot k)\)
\q^{- k^2}\, \y^{ 2c_1\cdot k},
\end{split}
\label{compl2}
\ee
where $E_1(u) = \Erf(\sqrt{\pi}u)$ and we took into account that $q_i\in \Lambda_S+\hf\, K_S$.
The second term is a theta series which belongs to the class
of theta series described in Appendix \ref{ap-theta}.
Comparing with the definition \eqref{Vignerasth}, we read off its data
\be
\begin{split}
&
\Lat=\Lambda_S \mbox{ with bilinear form } 2C_{\alpha\beta},
\\
\bfmu^\alpha&=\hf\, C^{\alpha\beta}\mu_\beta, \qquad \bfp^\alpha=0, \qquad \bfptt^\alpha=c_1^\alpha,
\end{split}
\label{datalat2}
\ee
and the kernel
\be
\Phi(x)=\hf\( \Phi_1^E(c_1;x)-\sgn(c_1\cdot x)\).
\ee
This theta series is not modular due to the sign function in the kernel.
However, it can be made modular by adjusting the holomorphic term $g_{2,\mu}$.
Indeed, this term can cancel the troubling sign function because it is holomorphic.
But this is not enough since a theta series with the kernel given by a single sign or a single error function
is divergent. As follows from Theorem \ref{th-converg} in Appendix \ref{ap-converge},
to make it convergent, another sign function
should be added to the kernel, say $\sgn(\nv\cdot x)$. It ensures the convergence provided
\be
\nv\cdot c_1>0,
\label{c0c1}
\ee
and does not spoil modularity only if the vector $\nv$ is null and belongs to the lattice.
Thus, we arrive at the following ansatz
\be
g_{2,\mu}=\phi_{2,\mu}+\hf\sum_{k\in \Lambda_S+\hf\mu}
\Bigl(\sgn(c_1\cdot k)-\sgn(\nv\cdot (k+\beta c_1))\Bigr)\,
\q^{- k^2}\, \y^{ 2c_1\cdot k},
\label{resg2}
\ee
where $\phi_{2,\mu}(\tau,z)$ is a holomorphic vector valued Jacobi form of weight $\hf b_2(S)$ and index $-K_S^2$,
which remains undetermined at this point.
Substituting this ansatz into \eqref{compl2}, we do get a function with correct transformation properties
\be
\whg_{2,\mu}=\phi_{2,\mu}+\hf\,\vartheta_{\mu/2}(\tau,z;0,c_1,\whPhi_2),
\ee
where the second term is the theta series \eqref{Vignerasth} defined by the lattice \eqref{datalat2}
and the kernel given by
\be
\whPhi_2(x)=\ \Phi_1^E(c_1;x)-\sgn(\nv\cdot x).
\ee

Note that this solution does not work for $S=\IP^2$ because in this case $b_2=1$ and there are no null vectors in one dimension.
This is why this case should be treated separately and we leave it for a future work.

Instead, for Hirzebruch and del Pezzo surfaces a null lattice vector always exists.
Moreover, it is not unique. Therefore, the natural question is whether its choice affects the construction?
We will return to this question below in $\S$\ref{sec-ident}, and for the moment continue with a generic choice of $\nv$.

\subsection{Arbitrary rank}
\label{subsec-anyN}

The construction done for $N=2$ in the previous subsection can be repeated for any $N$.
There are however two features which lead to more complicated final expressions.
First, at each rank one has to introduce a new holomorphic function $\phi_{N,\mu}$,
transforming as a vector valued Jacobi form, which then propagates to all higher ranks.
Second, starting from $N=3$, one must take into account the contributions to functions $\sEf_n$
from charge configurations giving rise to $\Gamma_\ell=0$ and appearing in \eqref{rel-gEf-zero} with weights $e_m$.
For instance, accepting the convention $\sgn(0)=0$, one has
\be
\sEf_3=\sgn(\Gamma_1)\,\sgn(\Gamma_2) +\frac13\, \delta_{\Gamma_1=\Gamma_2=0}.
\ee

To formulate the final result of the analysis, it is convenient to set $\phi_{1,0}=1$ and introduce
a class of theta functions
\be
\vartheta^{(\vec N)}_{\mu,\vec\mu}(\tau,z;\Phi)
=\!\!\!\sum_{\sum_{i=1}^n q_i=\mu-\frac{N}{2}\, c_1}\!\!\!
\Phi(\{\gama_i\})
\, \q^{\hf Q_n(\{\gama_i\})}
\, \y^{c_1\cdot\sum\limits_{i=1}^n \ptt_i q_i },
\label{newtheta}
\ee
where vectors denote collections of $n$ components, $N=\sum_{i=1}^n N_i$,
and in contrast to \eqref{exp-whhr}, the sum is performed keeping the residue classes
$\mu_i\in\Lambda_S/N_i\Lambda_S$ fixed.
These theta series will always appear in the following combinations
\be
\Theta_{N,\mu}(\tau,z;\{\Phi_n\})=\sum_{n=1}^\infty\frac{1}{2^{n-1}}
\sum_{\sum_{i=1}^n N_i=N}
\sum_{\vec\mu}\vartheta^{(\vec N)}_{\mu,\vec\mu}(\tau,z;\Phi_n)
\prod_{i=1}^n\phi_{N_i,\mu_i}(\tau,z).
\label{combTh}
\ee
Then, requiring the proper modular transformations of the completion \eqref{exp-whhr},
one arrives at the following

\begin{theorem}
\label{th-main}
The normalized generating functions and their modular completions are expressed through the combinations  \eqref{combTh}
\be
g_{N,\mu}=\Theta_{N,\mu}(\tau,z;\{\Phi_n\}),
\qquad
\whg_{N,\mu}(\tau,z)=\Theta_{N,\mu}(\tau,z;\{\whPhi_n\}),
\label{whgN}
\ee
where the kernels are given by
\bea
\Phi_n(\{\gama_i\})&=&\(\sum_{\cJ\subseteq \cI} e_{|\cJ|}\prod_{k\in \cI\setminus\cJ}\Bigl(-\sgn\(\nv\cdot b_k\)\Bigr)\)
\prod_{k\in \Zv_{n-1}\setminus \cI}\Bigl(\sgn(\Gamma_k)-\sgn\(\nv\cdot b_k\)\Bigr),
\label{kerg}
\\
\whPhi_n(\{\gama_i\})&=&\sum_{\cJ\subseteq \Zv_{n-1}} \Phi_{|\cJ|}^E(\{ \bfv_{\ell}\}_{\ell\in \cJ};\bfx)
\prod_{k\in \Zv_{n-1}\setminus \cJ}\Bigl(-\sgn\(\nv\cdot b_k\)\Bigr).
\label{kerhg}
\eea
Here $\Gamma_k$ was defined in \eqref{defGam}, $e_m$ in \eqref{valek},
$\cI\subseteq \Zv_{n-1}$ is the subset of indices for which $\Gamma_k=0$ as in \eqref{rel-gEf-zero},
$\Phi_n^E$ are the generalized error functions \eqref{generrPhiME},
the vectors $\bfv_{\ell}$ and $\bfx$ are from \eqref{vectors},
and
\be
b_k=N_k q_{k+1}-N_{k+1} q_k+\beta N_k N_{k+1}(N_k+N_{k+1})c_1.
\label{defbk}
\ee
\end{theorem}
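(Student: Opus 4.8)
The plan is to prove Theorem \ref{th-main} by induction on the rank $N$, treating the two identities in \eqref{whgN} in tandem. For $N=1$ the statement is trivial since $\Theta_{1,0}(\tau,z;\{\Phi_n\})=\phi_{1,0}=1$, consistent with $g_{1,0}=h_{1,0}h_{1,0}^{-1}=1$, and likewise for the completion. For $N=2$ the statement has already been established explicitly in the previous subsection, which identifies $\Phi_2$ and $\whPhi_2$ and exhibits the holomorphic Jacobi form $\phi_{2,\mu}$. So I would assume the result for all ranks $<N$ and prove it at rank $N$.

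First I would substitute the inductive hypothesis $h_{N_i,\mu_i}=h_{1,0}^{N_i}\,\Theta_{N_i,\mu_i}(\tau,z;\{\Phi_m\})$ for all $N_i<N$ into the master formula \eqref{exp-whhr} for $\whh_{N,\mu}$, divide by $h_{1,0}^N$ to pass to the normalized functions, and collect the result into the form \eqref{combTh}. The combinatorial content here is that a Schröder tree appearing in $\Rv_n$ of \eqref{solRnr}, whose leaves are themselves expanded via the trees produced by the lower-rank $\Theta_{N_i,\mu_i}$, assembles into a single larger Schröder-type tree; the assignment of the factor $\Rv$-data at the root together with the $\sEf$ data at the internal vertices must reproduce, after the $\Sym$-operation and the $1/2^{n-1}$ normalizations, exactly the kernel $\whPhi_n$ of \eqref{kerhg}. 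I would check that the quadratic form $Q_n$, the power of $y$, and the vectors $\bfv_\ell,\bfx$ of \eqref{vectors} are compatible with this regrouping — this is essentially bookkeeping of the bilinear form \eqref{biform} under refinement of a partition, using \eqref{defGam}. The key structural input is that $\whg_{N,\mu}=\Theta_{N,\mu}(\tau,z;\{\whPhi_n\})$ must then transform as a vector-valued Jacobi form of the stated weight and index, which follows from Vignéras' criterion applied to the generalized error function kernels $\Phi^E$ (Appendix \ref{ap-generr}) together with convergence guaranteed by the added sign factors $\sgn(\nv\cdot b_k)$ and the condition \eqref{c0c1}; here the null-vector property $\nv^2=0$ is what keeps the completion modular.

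Next, with $\whg_{N,\mu}$ known to be modular, I would extract $g_{N,\mu}$ by isolating the $n=1$ term in $\Theta_{N,\mu}(\tau,z;\{\whPhi_n\})$, i.e.\ writing $\whg_{N,\mu}=g_{N,\mu}+(\text{theta series with } n\ge 2)$, where the $n\ge2$ part is built from the already-determined lower-rank $\phi_{N_i,\mu_i}$. The holomorphic part $g_{N,\mu}$ is then forced to cancel the non-modular (non-holomorphic, and sign-function) pieces of that theta series. Exactly as in the $N=2$ computation, the non-holomorphic generalized error functions $\Phi^E$ must be traded for their boundary sign-function values $\sEf$, which by \eqref{rel-gEf-zero} produces the $\sum_{\cJ\subseteq\cI}e_{|\cJ|}$ structure over vanishing-$\Gamma_k$ indices; demanding convergence of each resulting indefinite theta series forces the insertion of precisely one extra $\sgn(\nv\cdot b_k)$ per index $k\notin\cI$, with $b_k$ as in \eqref{defbk} being the natural refined combination of charges (its $\beta$-dependent shift is what makes the pairing with $\nv$ come out right under the Jacobi transformation of $z$). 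Matching this against \eqref{kerg} identifies $\Phi_n$. The leftover freedom is precisely a holomorphic vector-valued Jacobi form, which is the new function $\phi_{N,\mu}$ of the stated weight $\frac12(N-1)b_2(S)$ and index $-\frac16(N^3-N)K_S^2$; its existence is consistent and it is the one undetermined datum at this rank.

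The main obstacle I expect is the combinatorial identity in the inductive step: showing that the composition of Schröder trees (trees whose leaves are decorated by lower-rank trees) reorganizes so that the product of a root $\Rv$-factor with internal $\sEf$-factors, after symmetrization and the $2^{1-n}$ weights, collapses to the single closed-form kernel $\whPhi_n$ in \eqref{kerhg} — and, correspondingly in the holomorphic sector, to $\Phi_n$ in \eqref{kerg}. This requires a careful treatment of the $\Gamma_k=0$ degenerate configurations, where the $e_m$ Taylor coefficients of $\mathrm{arctanh}$ enter: one must verify the convolution identity $\sum_{\cJ\subseteq\cI}e_{|\cJ|}(\text{sign factors})^{\cI\setminus\cJ}$ is the correct ``boundary limit'' of a nested sum of $\sEf$'s, which amounts to a generating-function identity for $\mathrm{arctanh}$ under iterated composition. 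Everything else — the modular covariance via Vignéras, the convergence via Theorem \ref{th-converg}, and the weight/index arithmetic — is then either quoted from the earlier sections or a direct computation with \eqref{biform} and \eqref{defGam}.
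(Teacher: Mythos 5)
Your overall strategy is the same as the paper's: Appendix \ref{ap_theorem} likewise proves the theorem by verifying (i) convergence of the theta series via Theorem \ref{th-conv} with the extra $\sgn(\nv\cdot b_k)$ insertions, (ii) that substituting the ansatz for $g_{N,\mu}$ into \eqref{exp-whhr} reproduces the kernel $\whPhi_n$, and (iii) modularity of the completion. However, your plan leaves the two genuinely hard steps open, and in both cases the paper relies on a specific device that your proposal does not supply.

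First, the ``collapse of composed Schr\"oder trees'' that you correctly flag as the main obstacle is not handled by direct tree manipulation. The paper introduces auxiliary kernels $\Phi_n^{(0)}$ defined in \eqref{deffN}, observes that they obey the recursion \eqref{recurs-fN}, $\Phi_n^{(0)}=\Phi_n-\sum\sEf_m\prod_i\Phi_{n_i}^{(0)}$, and proves by induction on $n$ the closed form $\Phi_n^{(0)}=\prod_k\bigl(-\sgn(\nv\cdot b_k)\bigr)$ (the Lemma, Eq.~\eqref{expr-fTh}); only with this closed form in hand do the nested sums over trees and over the degenerate sets $\cI$, $\cJ$ telescope into \eqref{kerhg}. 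Without identifying this intermediate object, the ``convolution identity for $\mathrm{arctanh}$'' you allude to has no obvious handle, so this is a missing idea rather than bookkeeping. Second, ``quoting Vign\'eras'' does not apply directly to $\vartheta^{(\vec N)}_{\mu,\vec\mu}$ because the sum in \eqref{newtheta} is constrained by $\sum_i q_i=\mu-\frac{N}{2}\,c_1$, so the relevant lattice is a quotient rather than a direct sum and its residue classes do not match the naive index set $(\mu,\vec\mu)$. The paper resolves this by contracting with the Siegel--Narain theta series to remove the constraint, subtracting its weight, and then invoking Proposition \ref{prop-part} together with a determinant computation to show that the components still assemble into a vector of the correct length, in particular when $\gcd(\vec N)>1$. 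As written, your argument would stall at both of these points.
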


The proof of this theorem is a bit long and technical, and we relegate it to Appendix \ref{ap_theorem}.
Although the expressions for $g_{N,\mu}$ and their completions might seem to be complicated,
the complications are mainly due to the two features mentioned above:
the sum over partitions of $N$ is needed to account for the contributions of the holomorphic modular ambiguities $\phi_{N_i,\mu_i}$
appearing at each rank, and the first factor in \eqref{kerg} takes into account the charge configurations
giving rise to vanishing arguments of sign functions. If none of them is vanishing, this factor is absent
and the kernel $\Phi_n$ reduces to the standard kernel for indefinite theta series (c.f. \eqref{kerconverge})
\be
\Phi_n(\{\gama_i\})=
\prod_{k=1}^{n-1}\Bigl(\sgn(\Gamma_k)-\sgn\(\nv\cdot b_k\)\Bigr),
\label{kerg-no0}
\ee
where $\nv\cdot b_k$ can be equally written as (c.f. \eqref{defGam} and \eqref{vectors})
\be
\sqrt{2\tau_2}\, \nv\cdot b_k=\bfw_{k,k+1}\bpt\bfx,
\qquad
\bfw_{k\ell,i}^\alpha= \(N_k\delta_{i\ell}-N_\ell\delta_{ik}\) \nv^\alpha,
\label{nullvec}
\ee
and the vectors on the r.h.s. are contracted using the bilinear form \eqref{biform}.
Furthermore, the kernel $\whPhi_n$ \eqref{kerhg} defining the completion is just the one obtained from \eqref{kerg-no0}
by applying the recipe to construct modular completions of indefinite theta series, explained in Appendix \ref{ap-generr}:
expand the product and replace each monomial by the (boosted) generalized error function with parameters determined by
arguments of the sign functions entering the monomial.
However, since the vectors $\bfw_{k\ell}$ are null, the rank of some generalized error functions can be
reduced by the property \eqref{Phinull}, which finally gives \eqref{kerhg}.

\section{Holomorphic modular ambiguity}
\label{sec-holom}

The results of the previous section reduce the unknown part of the generating functions $g_{N,\mu}$
to the {\it holomorphic modular} functions $\phi_{N,\mu}$. In this section we show how this holomorphic modular ambiguity
can be fixed by requiring the proper behavior in the unrefined limit $\y\to 1$.
Unfortunately, we do not have proofs for all our statements and some of them are left as conjectures.

\subsection{The unrefined limit}
\label{subsec-smooth}

Let us recall that the generating functions $h_{N,\mu}$ \eqref{defhVWref} by construction have single poles at $z=0$ and $z=\hf$.
This implies that the normalized functions $g_{N,\mu}$ \eqref{defgmu} have zeros of order $n-1$ at these points.
This condition must be imposed on the theta series representation \eqref{whgN} derived in the previous section
and can be viewed as a restriction on the functions $\phi_{N,\mu}$. As we will se now, it turns out to be so restrictive
that fixes these functions almost uniquely.

But first we should make this condition more explicit.
To this end, we reveal the behavior of the theta functions \eqref{newtheta} in the unrefined limit $\y\to 1$.
These theta functions are very close to the ones defining the so-called {\it tree index} and extensively studied in \cite{Alexandrov:2018iao},
where it was shown that certain sign identities may ensure the required vanishing property upon proper choice of the kernel.
Inspired by these findings, we suggest the following

\begin{conj}
\label{conj-smooth}
Let $\nb_{k\ell}=\nv\cdot(N_k q_{\ell}-N_{\ell} q_k)$. Provided all $\nb_{k\ell}$
are non-vanishing, the function
\be
F_n(\{\gama_i\},y)=\Sym\left\{\y^{\sum_{i<j} \gamma_{ij} }\, \Phi_n(\{\gama_i\})\right\},
\ee
where $\Phi_n$ is defined in \eqref{kerg}, has zero of order $n-1$ at $y=1$.
\end{conj}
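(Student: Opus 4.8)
\emph{Proof strategy.} The plan is to strip away the analytic content, reduce the claim to a purely combinatorial sign identity, and then attack it by the tree (charge--merging) technique used for the tree index in \cite{Alexandrov:2018iao}.

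First, in the unrefined limit $\y\to1$ one has $\beta=-\Im z/\tau_2\to0$, and since $\nv$ is null with $\nv\cdot c_1>0$ by \eqref{c0c1} while all $\nb_{k\ell}$ are non-vanishing by hypothesis, the arguments $\nv\cdot b_k=\nb_{k,k+1}+\beta\,N_kN_{k+1}(N_k+N_{k+1})\,\nv\cdot c_1$ from \eqref{defbk} keep the constant sign $\sgn(\nb_{k,k+1})$ throughout a neighbourhood of $z=0$. Hence there $\Phi_n$ is independent of $z$, $F_n(\{\gama_i\},y)$ is a genuine Laurent polynomial in $y$ with constant integer coefficients, and ``zero of order $n-1$ at $y=1$'' is equivalent to $\Sym\{\,\Gamma_{\rm tot}^{\,r}\,\Phi_n(\{\gama_i\})\}=0$ for $r=0,\dots,n-2$, where $\Gamma_{\rm tot}=\sum_{i<j}\gamma_{ij}$ is symmetrised together with the charges. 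The cases $n=1$ ($F_1=1$) and $n=2$ are immediate; for $n=2$ the kernel is $\sgn(\Gamma_1)-\sgn(\nb_{12})$ and symmetrising over the transposition gives $\tfrac12(\y^{\gamma_{12}}-\y^{-\gamma_{12}})(\sgn\gamma_{12}-\sgn\nb_{12})$, which exhibits the mechanism: the swap flips simultaneously the $\y$-exponent and the kernel.

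For $n\geq3$ I would use, as in \cite{Alexandrov:2018iao}, that $\Gamma_\ell$ is affine in the partial sums $\sum_{i\le\ell}(c_1\cdot q_i)$. This ``partial-sum'' structure lets one rewrite the symmetrised product $\Sym\{\prod_k\sgn\Gamma_k\}$ as a sum over nested splittings of the $n$ charges in which $n-1$ DSZ-type pairings $\gamma_v$ (of the two sub-charges at a vertex $v$, as in \eqref{gam12}) become manifest, each odd under the involution that interchanges those two sub-charges, while the $\y$-exponent splits accordingly, $\Gamma_{\rm tot}=\sum_v\gamma_v$. Averaging over the $2^{n-1}$ such sign flips then produces the factor $\prod_v(\y^{\gamma_v}-\y^{-\gamma_v})$ --- each flip $\gamma_v\mapsto-\gamma_v$ inverting simultaneously the monomial $\y^{\gamma_v}$ and the associated sign, exactly as for $n=2$ --- so that $F_n$ becomes, up to $z$-independent rationals, a $\Z[\y,\y^{-1}]$-combination of products of $n-1$ factors each divisible by $\y-1$, hence by $(\y-1)^{n-1}$. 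Equivalently, one closes an induction by splitting off the outermost vertex: grouping the charges into two blocks of sizes $n'+n''=n$ yields $F_n=\sum(\y^{\gamma_{\rm root}}-\y^{-\gamma_{\rm root}})(\text{sign difference})\,F_{n'}F_{n''}$, and $(n'-1)+(n''-1)+1=n-1$.

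The obstruction --- and the reason the statement is only conjectured --- is the second family of sign functions. The $\nb_{k,k+1}$ in $\Phi_n$ are \emph{nearest-neighbour} quantities $\nv\cdot(N_kq_{k+1}-N_{k+1}q_k)$ rather than partial sums, so under the merging above they do not reassemble at the junction into the composite $\nv\cdot(N_j\,q_{a+b}-N_{a+b}\,q_j)$ with $j$ the adjacent charge. One must therefore show this discrepancy does not spoil the vanishing, presumably by using that the vectors $\bfw_{k,k+1}$ of \eqref{nullvec} are null --- which reduces the rank of the associated error functions, as in passing from \eqref{kerg-no0} to \eqref{kerhg} --- and/or that replacing each $\nb_{k,k+1}$ by the $\nv$-contracted analogue of $\Gamma_k$, with which it coincides when $n=2$, leaves $F_n$ unchanged. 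A further delicate point is the degenerate locus where some $\Gamma_k=0$: there the extra factor $\sum_{\cJ\subseteq\cI}e_{|\cJ|}\prod_{k\in\cI\setminus\cJ}(-\sgn(\nv\cdot b_k))$ of \eqref{kerg} must be shown compatible with the same bookkeeping --- morally it is forced by modularity, being the $\tau_2\to\infty$ limit of the generalized error function $\Phi^E_{|\cI|}$ in the completion kernel \eqref{kerhg} (cf.\ \eqref{rel-gEf-zero}) --- but a naive continuity argument fails because deforming $\Gamma_k$ off zero also deforms $\Gamma_{\rm tot}$, hence $F_n$ itself.
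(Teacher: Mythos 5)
The statement you are addressing is left as a conjecture in the paper: the author offers no proof, only a Mathematica verification up to $n=7$ together with the remark that the hypothesis that all $\nb_{k\ell}$ be non-vanishing is essential. There is therefore no argument in the paper to compare yours against, and the real question is whether your proposal closes the gap. It does not, and you say so yourself. Your preliminary reductions are sound: near $z=0$ the $\beta$-dependence drops out of the kernel because $\nv\cdot b_k=\nb_{k,k+1}+\beta\,N_kN_{k+1}(N_k+N_{k+1})\,\nv\cdot c_1$ by \eqref{defbk} and all $\nb_{k,k+1}\ne 0$ in every ordering produced by $\Sym$, so $F_n$ becomes a Laurent polynomial in $\y$ with $z$-independent rational coefficients and the claim reduces to a finite family of symmetrized sign identities; the $n=2$ computation is correct and exhibits the mechanism one would like to propagate.

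However, the two places where your argument stops are precisely the content of the conjecture. First, the partial-sum/tree decomposition of \cite{Alexandrov:2018iao} works for products of $\sgn(\Gamma_k)$ because the $\Gamma_k$ are built from partial sums and reassemble under merging of consecutive charges; the $\sgn(\nb_{k,k+1})$ are nearest-neighbour quantities and do not, so the factorization $F_n\sim\sum\bigl(\y^{\gamma_{\rm root}}-\y^{-\gamma_{\rm root}}\bigr)(\cdots)F_{n'}F_{n''}$ is asserted but not established. This is not a technicality: the paper notes that the vanishing fails when some $\nb_{k\ell}=0$, so any correct proof must use the non-degeneracy in an essential way at exactly this step, and your sketch never does. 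Second, on the locus where some $\Gamma_k=0$ ($k\in\cI$) the kernel \eqref{kerg} carries the rational weights $e_m$ of \eqref{valek}, and, as you observe, no continuity argument reduces this stratum to the generic one because deforming $\Gamma_k$ off zero also deforms the exponent of $\y$. Until these two steps are supplied, what you have is a plausible plan consistent with the paper's numerical evidence rather than a proof; its status matches that of the paper, where the statement is Conjecture \ref{conj-smooth} and not a theorem.
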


We have checked this conjecture on Mathematica up to order $n=7$ which provides already a good level of confidence.
The condition that all $\nb_{k\ell}$ are non-vanishing is essential since it is easy to find examples where it is broken
and one does not get the zero of the correct order.
Some of them will be considered below.

This conjecture ensures that the theta function $\vartheta^{(\vec N)}_{\mu,\vec\mu}(\tau,z;\Phi_n)$
can be split into two parts
\be
\vartheta^{(\vec N)}_{\mu,\vec\mu}(\tau,z;\Phi_n)=\cvtheta^{(\vec N)}_{\mu,\vec\mu}(\tau,z)+\tvtheta^{(\vec N)}_{\mu,\vec\mu}(\tau,z),
\label{septheta}
\ee
where the first term takes into account only charge configurations spoiling the condition of Conjecture \ref{conj-smooth},
which it is natural to call ``zero modes", whereas the second term is given by the sum over all other charges.
It is clear that $\tvtheta^{(\vec N)}_{\mu,\vec\mu}$ has zero of order $n-1$ at $z=0$ and
only $\cvtheta^{(\vec N)}_{\mu,\vec\mu}$, the contribution of zero modes, requires a special attention,
It can be written as
\be
\cvtheta^{(\vec N)}_{\mu,\vec\mu}(\tau,z)=
\!\!\!\sum_{\sum_{i=1}^n q_i=\mu-\frac{N}{2}\, c_1 \atop \exists k,\ell \ : \ \nb_{k\ell}=0}\!\!\!
F_n(\{\gama_i\},y)
\, \q^{\hf Q_n(\{\gama_i\})},
\label{defctheta}
\ee
i.e. as the sum of the terms in the theta function that give rise to $\nb_{k\ell}=0$ for some $k$ and $\ell$.
The number of vanishing $\nb_{k\ell}$ will be called ``the order of the zero mode".
Our goal is to understand the behavior of \eqref{defctheta} in the unrefined limit.
But first, one should adapt our lattice for this purpose.

\subsection{Lattice factorization}
\label{subsec-lat}

Let us split the lattice of charges, which one sums over in \eqref{newtheta}, into the two parts
contributing respectively to the two terms in \eqref{septheta}.
For this purpose, it is convenient to start from the lattice $\Lambda_S$ and factorize it into two sublattices:
the first is a two-dimensional lattice $\Latc$ spanned by the integer linear combinations
of\footnote{We assume that $\nv$ is chosen
to be primitive, i.e. $\gcd(\nv^\alpha)=1$. The first Chern class however $c_1(S)$ is not always primitive,
as is the case for $\Fb_0$ and $\Fb_2$. Therefore, to obtain the second basis vector $\cv$, we divide $c_1$ by the $\gcd$
of its components. \label{foot-c1}}
$\nv$ and $\cv=c_1/\gcd(c_1^\alpha)$
and the second is the orthogonal complement to $\Latc$, which will be denoted by $\Lambda_S^\perp$.
The idea behind this factorization is that the kernel $\Phi_n$ is independent of the components along $\Lambda_S^\perp$
so that we reduce our problem to a two-dimensional one. (Of course, if $\Lambda_S$ is two-dimensional as for
Hirzebruch surfaces, $\Lambda_S^\perp$ does not arise.)
Furthermore, working in the basis $\{\nv,\cv\}$ allows to identify the part of the lattice contributing to the zero modes.
Indeed, since $\nv$ is null, the condition $\nb_{k\ell}=0$ can be rewritten as
\be
N_k q_{\ell}^{\cv}-N_{\ell} q_k^{\cv}=0,
\ee
where $q^{\cv}$ denotes the $\cv$-component of the vector $q$.
As a result, the lattice defining $\cvtheta^{(\vec N)}_{\mu,\vec\mu}$ is simply obtained by fixing this component
of the charge vectors.

The problem however is that generically $\Lambda_S$ does {\it not} coincide with $\Latc\oplus\Lambda_S^\perp$.
Whereas, the original lattice is unimodular, the determinant of $\Latc$
is equal to $r^2=(\nv\cdot \cv)^2$, which follows from the Gram matrix of the basis vectors
\be
\(\begin{array}{cc}
0  & \nv\cdot \cv
\\
\nv\cdot \cv & \cv^2
\end{array}\).
\ee
This implies that for $r_S\equiv r\sqrt{\det\Lambda_S^\perp}>1$
not all elements of $\Lambda_S$ can be obtained by linear combinations with integer coefficients of
the elements of $\Latc$ and $\Lambda_S^\perp$. In such case one has to introduce
the so called glue vectors \cite{CSbook}. They can be viewed as elements of the dual lattice and are given by the sum
of representatives of the two cosets,
$\vg_a=\vv_a+\vd_a$ where $\vv_a\in (\Latc)^*/\Latc$ and $\vd_a\in (\Lambda_S^\perp)^*/\Lambda_S^\perp$
with $a=0,\dots, r_S-1$, including the trivial vector $\vg_0=0$.
In terms of these vectors, the original lattice is given by
\be
\Lambda_S=\bcup\limits_{a=0}^{r_S-1}\Bigl[(\Latc+\vv_a)\oplus(\Lambda_S^\perp+\vd_a)\Bigr].
\ee
As a result, the charge vectors appearing in the definitions of the theta functions \eqref{newtheta} and \eqref{defctheta}
can be represented as a sum of two orthogonal components
\be
\label{factor-q}
\begin{split}
q_i =&\, \qv_i+q_i^\perp,
\\
\qv_i=&\, N_i \(m^0 \nv + \(m^1-\tfrac{\gcd(c_1^\alpha)}{2}\) \cv\) + \vrh_i+\vv_{a_i},
\qquad
m^0,
m^1\in \IZ,
\quad
\vrh_i\in \Latc/N_i\Latc,
\\
q_i^\perp=&\, N_i\lambda_i+\rho_i+\vd_{a_i},
\qquad\qquad\qquad\qquad\qquad\qquad\qquad
\lambda_i\in \Lambda_S^\perp,
\quad
\rho_i\in \Lambda_S^\perp/N_i\Lambda_S^\perp.
\end{split}
\ee
Due to the orthogonality properties, as already mentioned above, the kernel $\Phi_n$ is independent of $q_i^\perp$.
Taking into account that the quadratic form also factorizes, all theta functions can be written in a factorized form.
For instance,
\be
\vartheta^{(\vec N)}_{\mu,\vec\mu}(\tau,z)=\sum_{a_i=0}^{r_S-1}\vartheta^{\, v(\vec N)}_{\vrh,\vec\vrh+\vec \vv_a}(\tau,z)
\, \vartheta^{\, \perp(\vec N)}_{\rho,\vec\rho+\vec\vd_a}(\tau),
\label{factheta}
\ee
where $\vrh$ and $\rho$ denote projections of $\mu\in\Lambda_S/N\Lambda_S$ onto $\Latc\otimes \IR$
and $\Lambda_S^\perp\otimes \IR$, respectively, and we defined
\be
\vartheta^{\, v(\vec N)}_{\vrh,\vec\vrh+\vec \vv_a}(\tau,z)=
\!\!\!\sum_{\sum_{i=1}^n \qv_i=\vrh-\frac{N}{2}\, c_1}\!\!\!
F_n(\{\gama_i\},y)
\, \q^{\hf Q_n(\{\qv_i\})},
\qquad
\vartheta^{\, \perp(\vec N)}_{\rho,\vec\rho+\vec\vd_a}(\tau)=
\!\!\!\sum_{\sum_{i=1}^n q_i^\perp=\rho}\!\!\!
\, \q^{\hf Q_n(\{q_i^\perp\})}.
\label{twothetas}
\ee
Note that the second theta series is independent of the refinement parameter,
and the summation condition in the first one can be written more explicitly as
\be
\sum_{i=1}^n N_i m^\eps_i=\hat\vrh^\eps, \qquad \eps=0,1,
\quad m^\eps_i\in\IZ,
\ee
where $\hat\vrh= \hat\vrh^0\nv+\hat\vrh^1 \cv\equiv \vrh-\sum_i (\vrh_i+\vv_{a_i})$.
A similar factorized formula can be written for $\cvtheta^{(\vec N)}_{\mu,\vec\mu}(\tau,z)$.
It has the {\it same} second factor, whereas the sum in the first is further restricted by the condition
that for some $i,j$ one has
\be
m_i^1-m_j^1=\frac{\Bigl(N_i (\vrh_j+\vv_{a_j})-N_j (\vrh_i+\vv_{a_i})\Bigr)\cdot \nv}{r\,N_i N_j}\, .
\label{condzero}
\ee

\subsection{Determination of the ambiguity}
\label{subsec-amb}

After the preliminary work done in the previous subsections, we are ready to approach the problem
of finding the holomorphic modular ambiguities $\phi_{N,\mu}$.
We do this explicitly for ranks 2 and 3. The results obtained in these cases will be suggestive enough to guess the general
answer, which we leave as a conjecture.

\subsubsection{$N=2$}

In this case, we can start with the explicit result for the normalized generating function $g_{2,\mu}$ given in \eqref{resg2}.
It is clear that in the limit $z\to 0$, each term in the sum with $\nv\cdot k\ne 0$ cancels the corresponding term with $-k$
so that this part of the theta series (denoted by $\tvtheta^{(\vec N)}_{\mu,\vec\mu}$ in \eqref{septheta}) vanishes,
in agreement with Conjecture \ref{conj-smooth}.
Moreover, since $2c_1\cdot k\in\IZ$, this is true for $z=1/2$ as well.

After factorization of the lattice, the remaining terms become
\be
\cvtheta^{(1,1)}_{\mu}=\sum_{a=0}^{r_S-1}  \delta^{(2)}_{\vrh^1+2\vv_a^1}
\!\!\!\!
\sum_{m^0\in \IZ+\hf\vrh^0 +\vv_a^0}\Bigl[\sgn(m^0) -\sgn(\beta)\Bigr]
\,  y^{2\cdv m^0 }
\!\!\sum_{k^\perp\in \Lambda_S^\perp+\hf\rho+\vd_a}\!\!
\, \q^{-(k^\perp)^2},
\label{cvthetaN2}
\ee
where $\cdv=r\gcd(c_1^\alpha)=\nv\cdot c_1$ and
\be
\delta^{(n)}_x=\left\{ \begin{array}{l} 1 \quad \mbox{if } x=0 \mbox{ mod }n, \\ 0 \quad \mbox{otherwise.} \end{array} \right.
\label{defdelta}
\ee
The last factor is nothing else but the theta function
$\vartheta^{\, \perp(\vec N)}_{\rho,\vec\rho+\vec\vd_a}$ introduced in \eqref{twothetas} and specified to the case $\vec N=(1,1)$,
whereas the first factor, together with the delta symbol, corresponds to $\cvtheta^{\, v(\vec N)}_{\vrh,\vec\vrh+\vec \vv_a}$.
The sum over $m_0$ gives rise to a geometric progression which leads to
\be
\cvtheta^{\, v(1,1)}_{\vrh,\vv_a}=\delta^{(2)}_{\vrh^1+2\vv_a^1}\(
\frac{2\,y^{2\cdv\{\hf\vrh^0+\vv_a^0\}}}{1-y^{2\cdv}}-\delta^{(2)}_{\vrh^0+2\vv_a^0}\) ,
\ee
where the brackets $\{\,\cdot\, \}$ denote the fractional part.
This result shows that, instead of vanishing, the function \eqref{cvthetaN2} has poles at $z=0$ and $z=\hf$.
Hence, the two leading terms in the Laurent expansion around these poles must be cancelled by the holomorphic modular ambiguity $\phi_{2,\mu}$.
Thus, we arrive at the additional condition that, besides being a Jacobi form of weight $\hf b_2$
and index $-c_1^2$, this function must also have poles at $z=0$ and $z=\hf$ with the leading behavior
(near $z=0$) given by
\be
\phi_{2,\mu}(\tau,z)\sim \sum_{a=0}^{r_S-1}\delta^{(2)}_{\vrh^1+2\vv_a^1}\,
\(\frac{1}{4\pi\I \cdv z}+\Delta(\vrh^0+2\,\vv_a^0)\)\vartheta^{\, \perp}_{\hf\rho+\vd_a}\!(\tau),
\label{resN2}
\ee
where
\be
\Delta(x)=\hf\(\delta^{(2)}_{x}-1\)+\left\{\frac{x}{2}\right\}
\label{FunDx}
\ee
and we used the notation \eqref{defthetaL2} for the theta series defined by $\Lambda_S^\perp$
(replacing in the upper index the lattice by $\perp$ to avoid cluttering).

The r.h.s. of \eqref{resN2} crucially depends on the lattice $\Lambda_S$ and on the choice of the null vector $\nv$.
It also involves the glue vectors which are not unique and should drop out from the final result.
Due to all these complications, we are not able to proceed in full generality anymore.
Instead, in Appendices \ref{sec-Fm} and \ref{sec-Bm}, we analyze the condition \eqref{resN2}
case by case: for all possible null vectors for Hirzebruch surfaces
and several null vectors for del Pezzo surfaces.
In all these cases we find that it has a solution possessing
the required modular properties.
Furthermore, all found solutions (Eqs. \eqref{phi2F0}, \eqref{phi2F1}, \eqref{phi2F1p}, \eqref{phi2F2}, \eqref{phi2Bm1},
\eqref{phi2Bm2} and \eqref{phi2Bm3}) can be summarized by a single equation.
To this end, let $\nv'\in\Lambda_S$ be a null vector such that $\nv\cdot\nv'=1$.\footnote{For the Hirzebruch surface $\Fb_1$
such $\nv'$ does not exist. But for $b_2=2$, which is the case for $\Fb_1$, this vector is not required to define
the solution \eqref{genphi2}.}
Then the orthogonal complement to the two vectors, $\nv$ and $\nv'$, in $\Lambda_S$
is a unimodular and negative definite sublattice.
We denote $E_I$, $I=1,\dots, b_2-2$, its orthonormal basis.
In terms of these data, the solution is given by
\be
\phi_{2,\mu}(\tau,z)=\delta^{(2)}_{\nv\cdot\mu}\,
\frac{2\I\kappa\,\eta(\tau)^3}{\cdv\theta_1(\tau,4 \kappa z)} \prod_{I=1}^{b_2-2}
\theta_{\mu\cdot E_I}^{(2)}(\tau,\kappa_I z),
\label{genphi2}
\ee
where $\theta_{\ell}^{(2)}(\tau,z)$ is the vector valued Jacobi form of weight 1/2 and index $-1$ defined in \eqref{deftheta2z}
and $\kappa,\kappa_I$ are integer\footnote{They must be integer to ensure the existence of zero not only for $z=0$, but also for $z=\hf$.}
parameters restricted by the condition
\be
8\kappa^2-\sum_{I=1}^{b_2-2}\kappa_I^2=c_1^2.
\label{kappac}
\ee
We conjecture that this solution continues to hold for {\it all} null vectors of del Pezzo surfaces satisfying $\cdv>0$.

The solution \eqref{genphi2} still has some ambiguity: the parameters $\kappa_I$ and the choice of the second null vector
$\nv'$ defining the basis $E_I$. Fortunately, the latter is delusive because it is easy to see that
for $\nv\cdot\mu=0$, the quantity $\mu\cdot E_I$ is independent of this choice.\footnote{This follows
from the observation that changing $\nv'$, one changes $E_I$ by a vector proportional to $\nv$. In particular,
it should now be clear that although the choice of $E_I$ described above seems to disagree
with the one implicitly done in \eqref{phi2Bm1} and \eqref{phi2Bm2} (our prescription implies
$E_I=n\nv - D_{I+2}$ with some $n\ne 0$, whereas both results suggest $E_I=-D_{I+2}$), the solution is actually the same. \label{footE}}
The former however remains. At this stage it is not clear to us what condition would allow to fix it.

\subsubsection{$N=3$}
\label{subsubsec-phiN3}

At the next order, Theorem \ref{th-main} tells us that
\be
g_{3,\mu}= \phi_{3,\mu}+\hf \sum_{\mu'\in \Lambda_S/2\Lambda_S}
\(\vartheta^{(2,1)}_{\mu,\mu'}(\Phi_2)+\vartheta^{(1,2)}_{\mu,\mu'}(\Phi_2)\)\phi_{2,\mu'}
+\vartheta^{(1,1,1)}_{\mu}(\Phi_3),
\label{g3expl}
\ee
where
\bea
&& \vartheta^{(2,1)}_{\mu,\mu'}(\Phi_2)= \!\!\!\sum_{k\in\Lambda_S+\frac{1}{3}\mu+\frac{1}{2}\mu'}\!\!\!
\Bigl(\sgn(c_1\cdot k)-\sgn(\nv\cdot(k+\beta c_1))\Bigr)
\,\q^{-3k^2}\, y^{6c_1\cdot k},
\\
&&\vartheta^{(1,1,1)}_{\mu}(\Phi_3)=
\frac14\sum_{k_1\in \Lambda_S+\frac13\mu}\sum_{k_2\in \Lambda_S+\frac23\mu}
\q^{-k_1^2-k_2^2+k_1k_2}\, y^{2c_1\cdot (k_1+k_2)}\, \biggl[
\frac13\, \delta_{c_1\cdot k_1=c_1\cdot k_2=0}
\biggr.
\label{theta21}\\
&&\biggl.\quad
+\Bigl(\sgn(c_1\cdot k_1)-\sgn(\nv\cdot(2k_1-k_2+2\beta c_1))\Bigr)
\Bigl(\sgn(c_1\cdot k_2)-\sgn(\nv\cdot(2k_2-k_1+2\beta c_1))\Bigr)\biggr],
\nn
\eea
and $\vartheta^{(1,2)}_{\mu,\mu'}$ differs from $\vartheta^{(2,1)}_{\mu,\mu'}$ only by the sign of $\mu$
appearing in the range of summation.
Using the sign identity \eqref{signident}, it is easy to show that the terms with non-vanishing
$\nv\cdot k_i$ in \eqref{theta21} can be recombined into a function having zero of second order at $z=0$,
in agreement with Conjecture \ref{conj-smooth}.\footnote{Note that the presence of the delta symbol term is essential for this property.}
Moreover, since the power of $y$ is $2c_1\cdot (k_1+k_2)\in 2\IZ$, at $z=1/2$ one also has zero of the same order.

The remaining zero mode terms are split into two classes depending on the order of the zero mode.
The first class consists of the terms for which one and only one of the following three scalar products
vanishes: $\nb_{12}=\nv\cdot(2k_1-k_2)$, $\nb_{23}=\nv\cdot(2k_2-k_1)=0$, or $\nb_{13}=\nv\cdot (k_1+k_2)$.
Combining these three contributions, for $\beta\ll 1$ one can obtain (see Appendix \ref{ap-detail3} for details)
\bea
&& \frac14\sum_{k_1\in \Lambda_S+\frac13\mu}\sum_{k_2\in \Lambda_S+\frac23\mu}
\q^{-k_1^2-k_2^2+k_1k_2}\,
\biggl[\delta_{\nb_{13}=0}\(y^{3c_1\cdot k_1}-y^{-3c_1\cdot k_1}\)\(y^{c_1\cdot(k_1-2k_2)}-y^{c_1\cdot(2k_2- k_1)}\)
\biggr.
\nn\\
&& \times \Bigl(\sgn(c_1\cdot k_1)-\sgn(c_1\cdot(k_1+k_2))\Bigr)
\Bigl(\sgn(c_1\cdot(k_1- k_2))-\sgn(\nv\cdot (k_1-k_2))\Bigr)
\label{theta21one}\\
&&
+\delta_{\nb_{12}=0}\,y^{2c_1\cdot (k_1+k_2)}\Bigl(\sgn(c_1\cdot(2k_1-k_2))-\sgn(\beta)\Bigr)
\Bigl(\sgn(c_1\cdot k_2)-\sgn(\nv\cdot(k_2+2\beta c_1))\Bigr)
\nn\\
&& \biggl.
+\delta_{\nb_{23}=0}\,y^{2c_1\cdot (k_1+k_2)}\Bigl(\sgn(c_1\cdot(2k_2-k_1))-\sgn(\beta)\Bigr)
\Bigl(\sgn(c_1\cdot k_1)-\sgn(\nv\cdot(k_1+2\beta c_1))\Bigr)\biggr].
\nn
\eea
The first contribution has zero of second order at $y=\pm 1$ and therefore does not play any role in our analysis.
To understand the other two contributions,
one should diagonalize the quadratic form by a change of the summation variables.
Then following the discussion in Appendix \ref{ap-ident} around \eqref{ident2lat}, one obtains
that these contributions are equal to
\be
\frac14\sum_{\mu'\in \Lambda_S/2\Lambda_S}
\(\tvtheta^{(2,1)}_{\mu,\mu'}+\tvtheta^{(1,2)}_{\mu,\mu'}\)
\sum_{\ell\in \Lambda_S+\hf\mu'}\delta_{\nv\cdot\ell=0}\,
\Bigl(\sgn(c_1\cdot\ell)-\sgn(\beta)\Bigr)
\,\q^{-\ell^2}\,y^{2c_1\cdot \ell},
\label{theta21one2}
\ee
where we used the notation introduced in \eqref{septheta} for the part of the theta series
which excludes the zero modes, i.e. terms with $\nv\cdot k=0$.
The second factor here is exactly the function $\cvtheta^{(1,1)}_{\mu'}$ analyzed in the previous subsection,
whose singularity is canceled by $\phi_{2,\mu'}$.
Therefore, the contribution \eqref{theta21one2} is combined with the second term in \eqref{g3expl},
more precisely, with its part obtained by replacing $\vartheta_{\mu,\mu'}$ with $\tvtheta_{\mu,\mu'}$.
Taking into account that $\tvtheta^{(2,1)}_{\mu,\mu'}+\tvtheta^{(1,2)}_{\mu,\mu'}$
has zero at $y=\pm 1$, what can be easily seen by replacing $k\to -k$ in the second theta function,
one concludes that the zero modes of first order, together with the corresponding part of the
$\phi_{2,\mu}$-dependent term, satisfy the required regularity property.

The second class of terms that we need to consider consists of those which satisfy $\nb_{12}=\nb_{23}=0$.
To evaluate them, one should use again the technique of lattice factorization. Applying the same factorization
as in $\S$\ref{subsec-lat} for the two lattices $\Lambda_S$ one sums over in \eqref{theta21},
one finds that the relevant terms are given by
\be
\begin{split}
&\,
\frac14\sum_{a_1,a_2=0}^{r_S-1}  \delta^{(3)}_{\vrh^1+3\vv_{a_1}^1}\delta^{(3)}_{2\vrh^1+3\vv_{a_2}^1}
\vartheta^{\, \perp,2}_{\frac13\rho+\vd_{a_1},\frac23\rho+\vd_{a_2}}\!\!(\tau)\,
\Biggl[\frac13\, \delta^{(3)}_{\vrh^0+3\vv_{a_1}^0}\delta^{(3)}_{2\vrh^0+3\vv_{a_2}^0}
\Biggr.
\\
+ & \Biggl.
\(\frac{2\,y^{2\cdv\{\frac13\vrh^0+\vv_{a_1}^0\}}}{1-y^{2\cdv}}-\delta^{(3)}_{\vrh^0+3\vv_{a_1}^0}\)
\(\frac{2\,y^{2\cdv\{\frac23\vrh^0+\vv_{a_2}^0\}}}{1-y^{2\cdv}}-\delta^{(3)}_{2\vrh^0+3\vv_{a_2}^0}\)
\Biggr] ,
\end{split}
\label{cvthetaN3}
\ee
where $\vartheta^{\, \perp,2}_{\rho_1,\rho_2}$ denotes the theta series \eqref{defthetaL3} for $\Lambda=\Lambda_S^\perp$.
The two factors in the second line come from the geometric progressions as in \eqref{cvthetaN2},
which are factorized due to vanishing of all cross-terms in the quadratic form.
Finally, one has to take into account the part of the $\phi_{2,\mu}$-dependent term which was missed so far.
It is given by
\bea
&& \hf \sum_{\mu'\in \Lambda_S/2\Lambda_S}
\(\cvtheta^{(2,1)}_{\mu,\mu'}+\cvtheta^{(1,2)}_{\mu,\mu'}\)\phi_{2,\mu'}
\label{N3termphi}
\\
&=&
\hf \sum_{\mu'\in \Lambda_S/2\Lambda_S}\phi_{2,\mu'}
\sum_{a=0}^{r_S-1}\sum_{\eps=\pm}
\delta^{(1)}_{\frac{\eps}{3}\vrh^1+\frac12\vrh'^1+\vv_a^1}
\(\frac{2\,y^{6\cdv\{\frac{\eps}{3}\vrh^0+\hf\vrh'^0+\vv_a^0\}}}{1-y^{6\cdv}}
-\delta^{(1)}_{\frac{\eps}{3}\vrh^0+\hf\vrh'^0+\vv_a^0}\)\vartheta^{\, \perp}_{\frac{\eps}{3}\rho+\hf\rho'+\vd_a}\!(3\tau),
\nn
\eea
where the theta series are evaluated in the same way as $\cvtheta^{(1,1)}_{\mu}$ in the previous subsection.

The next step is to analyze the expansion of \eqref{cvthetaN3} and \eqref{N3termphi} around $z=0$
keeping terms up to $O(z)$. This can be done exactly as in Appendices \ref{sec-Fm} and \ref{sec-Bm},
although the analysis for del Pezzo surfaces becomes very cumbersome and requires the extensive use of various identities
between theta functions derived in Appendix \ref{ap-ident}. We provide a few intermediate steps in Appendix \ref{ap-detail3}.
As a result, one arrives at the following condition on $\phi_{3,\mu}$,
ensuring the existence of zero of second order in \eqref{g3expl},
\be
\phi_{3,\mu}\sim
\delta^{(3)}_{\nv\cdot\mu}\,
\frac{1+2\(\,\sum\limits_{I=1}^{b_2-2}\kappa_I^2\,\p_z^2 \log\theta^{(3)}_{\mu\cdot E_I}(\tau,0)
-32\kappa^2\cD(\tau)\) z^2}{3(4\pi \cdv z)^2}
\,\prod_{I=1}^{b_2-2}\theta^{(3)}_{\mu\cdot E_I}(\tau,0)
+O(z^2),
\label{expan-phi3}
\ee
where $\theta^{(3)}_{\ell}$ is defined in \eqref{deftheta3z}
and we denoted by $\cD$ the subleading term in the expansion around $z=0$ of $\theta_1(\tau,z)$,
\be
\begin{split}
\theta_1(\tau,z)=&\, -2\pi z\,\eta(\tau)^3 \Bigl(1+z^2 \cD(\tau) + O(z^4)\Bigr).
\end{split}
\label{defderiv}
\ee
Remarkably, the solution to the condition \eqref{expan-phi3} can be found without explicit knowledge of this function.
Indeed, it is straightforward to see that the following function satisfies all the required properties
\be
\phi_{3,\mu}=
-\delta^{(3)}_{\nv\cdot\mu}\,
\frac{4\kappa^2\,\eta(\tau)^{6}\,\theta_1(\tau,2 \kappa z)}{\cdv^2\,\theta_1(\tau,6\kappa z)\,\theta_1(\tau,4\kappa z)^2}
\prod_{I=1}^{b_2-2}\theta^{(3)}_{\mu\cdot E_I}(\tau,\kappa_I z).
\label{genphi3}
\ee

\subsubsection{Arbitrary rank}
\label{subsubsec-anyN}

The results \eqref{genphi2} and \eqref{genphi3} are in fact very suggestive: the factor constructed from
the Dedekind eta function and the Jacobi theta function can be recognized as
the generating function of stack invariants, mentioned in the end of section \ref{subsec-VWinv}.
For Hirzebruch surfaces, evaluated in the chamber of the moduli space with $J=[f]$ corresponding to
the fiber of the bundle defining $\Fb_m$ (see Appendix \ref{sec-Fm}),
this function is given by \cite{Manschot:2011ym,Mozgovoy:2013zqx}
\be
\label{defHN}
H_{N}(\tau,z) = \frac{\I (-1)^{N-1} \eta(\tau)^{2N-3}}
{\theta_1(\tau,2Nz)\, \prod_{m=1}^{N-1} \theta_1(\tau,2mz)^2}\, .
\ee
After normalization by $(\eta(\tau)\theta_1(\tau,2z))^N=(\eta(\tau)^{b_2-2} h_{1,0}(\tau,z))^{-N}$,
for $N=2$ and 3 (and evaluated at $\kappa z$) it exactly coincides with
the corresponding factors.
Furthermore, the theta series $\theta^{(N)}_{\ell}$ appearing in the last factor
in \eqref{genphi2} and \eqref{genphi3} are nothing else but a rescaled version of
the ``blow-up functions" \cite{Yoshioka:1996,0961.14022,Li:1998nv}
\be
\label{defBNk}
B_{N,\ell}(\tau,z) = \frac{1}{\eta(\tau)^N} \sum_{\sum_{i=1}^N a_i=0 \atop a_i\in\IZ+\frac{\ell}{N}}
\q^{- \sum_{i<j} a_i a_j}\, y^{\sum_{i<j}(a_i-a_j)},
\ee
which relate the generating functions of stack invariants on manifolds connected by the blow-up of an exceptional divisor.
This observation suggests the following
\begin{conj}
\label{conj-phi}
Choosing the holomorphic modular ambiguity as
\be
\phi_{N,\mu}=
\delta^{(N)}_{\nv\cdot\mu}\(\frac{2\kappa}{\cdv}\)^{\! N-1}
\frac{H_{N}(\tau,\kappa z)}{h_{1,0}^{N}(\tau,\kappa z)}\,
\prod_{I=1}^{b_2-2}B_{N,\mu\cdot E_I}(\tau,\kappa_I z),
\label{genphiN}
\ee
where $\kappa,\kappa_I$ are restricted to satisfy \eqref{kappac},
ensures the normalized generating functions $g_{N,\mu}$ to have zeros of order $N-1$ at $z=0$ and $z=\hf$.
\end{conj}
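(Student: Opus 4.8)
The plan is to argue by induction on $N$, the base case $N=1$ being immediate since $\phi_{1,0}=1=g_{1,0}$ has a zero of order $0$. Assume \eqref{genphiN} makes $g_{N',\mu}$ vanish to order $N'-1$ at $z=0$ and $z=\hf$ for all $N'<N$; by Theorem \ref{th-main},
\be
g_{N,\mu}=\phi_{N,\mu}+\sum_{n\ge2}\frac{1}{2^{n-1}}\sum_{\sum_i N_i=N}\sum_{\vec\mu}\vartheta^{(\vec N)}_{\mu,\vec\mu}(\Phi_n)\prod_{i=1}^n\phi_{N_i,\mu_i}.
\ee
The organizing observation, already highlighted in $\S$\ref{subsubsec-anyN}, is that the building block $H_{N}(\tau,\kappa z)/h_{1,0}^{N}(\tau,\kappa z)$ in \eqref{genphiN} is (a rescaled version of) the generating function of stack invariants of $S$ in the chamber $J=\nv$ --- equal, for the choice $\kappa=\kappa_I=1$ permitted by \eqref{kappac} on all the surfaces considered, to $\vph_{N,\mu}\,h_{1,0}^{-N}$ with $\vph_{N,\mu}=H^S_{N,\mu}$ --- while the $B_{N,\ell}$ are the blow-up functions \eqref{defBNk}, which for del Pezzo surfaces reconstruct $H^S_{N,\mu}$ from the Hirzebruch (or $\IP^2$) case by iterated blow-up, one factor per exceptional divisor. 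This makes the pole structure of $\phi_{N,\mu}$ explicit: $H_{N}$ has a pole of order $2N-1$ at $z=0$ and $h_{1,0}$ a simple one, so $\phi_{N,\mu}$ has a pole of order $N-1$; the conjecture asserts that this pole, together with the poles of every other term above, cancels down to a zero of order $N-1$.

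\textbf{Isolating the poles.} Among the terms above, only $2^{1-N}\,\tvtheta^{(1,\dots,1)}_{\mu}(\Phi_N)$ is free of poles: with $\vec N=(1,\dots,1)$ the accompanying product $\prod_i\phi_{N_i,\mu_i}$ is trivial, and by Conjecture \ref{conj-smooth} this term already carries a zero of order $N-1$ at $z=0$. Every other term --- $\phi_{N,\mu}$ itself, the products $\vartheta^{(\vec N)}(\Phi_n)\prod_i\phi_{N_i,\mu_i}$ with some $N_i\ge2$, and the zero-mode lattice sectors --- is singular at $z=0$ and $z=\hf$, and the claim is that, added with their weights, these singularities cancel down to a zero of order $N-1$. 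I would make this visible by splitting each theta series as in \eqref{septheta}, $\vartheta^{(\vec N)}=\cvtheta^{(\vec N)}+\tvtheta^{(\vec N)}$, using the inductive hypothesis to control the lower-rank factors, and then treating the zero-mode sectors by the lattice factorization of $\S$\ref{subsec-lat}: projecting the charge lattice onto $\Latc=\langle\nv,\cv\rangle$ tensored with $\Lambda_S^\perp$ and summing over glue vectors as in \eqref{factheta}, the sums along the null direction become geometric progressions in powers of $\y$, yielding explicit rational functions with prescribed poles at $z=0$ and $z=\hf$, the higher-$N$ analogues of \eqref{cvthetaN2} and \eqref{cvthetaN3}. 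Recombining neighbouring zero-mode sectors via the sign identity \eqref{signident} --- the analogue of the passage from \eqref{theta21one} to \eqref{theta21one2} --- the poles telescope across partitions, and the uncancelled remainder is precisely what \eqref{genphiN} is built to cancel; this is how the explicit $N=2,3$ solutions \eqref{genphi2} and \eqref{genphi3} get promoted to all $N$. The statement at $z=\hf$ follows in parallel, since $c_1\cdot\sum_i\ptt_i q_i\in\IZ$ makes $y=-1$ behave like $y=1$, while the integrality of $\kappa,\kappa_I$ ensures that $H_{N}(\tau,\kappa z)$ and $B_{N,\ell}(\tau,\kappa_I z)$ acquire the same pole and zero orders at $z=\hf$ as at $z=0$.

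\textbf{Main obstacle, and an alternative.} The hard part is the combinatorics of this nested cancellation for all $N$ simultaneously: for $N\ge4$ the poles of the inner factors $\phi_{N_i,\mu_i}$ combine with the geometric-series poles of the zero-mode sectors to produce singularities of order up to $N-1$ that have to cancel the leading pole of $\phi_{N,\mu}$, and making this cancellation manifest amounts to proving a family of identities among generalized Appell functions, theta series and $\eta$-quotients --- the same identities whose $N=2$ shadow is the periodicity of the Appell--Lerch function and which $\S$\ref{sec-ident} shows to be new for $N\ge3$. A secondary difficulty, specific to del Pezzo surfaces, is that identifying $\phi_{N,\mu}\,h_{1,0}^N$ with the $J=\nv$ stack generating function for an \emph{arbitrary} admissible null vector $\nv$ --- not just the distinguished one coming from a Hirzebruch or $\IP^2$ presentation --- is itself a fiber-base duality, so a fully general proof must also invoke the equivalences of $\S$\ref{sec-ident}; in practice I would first prove the conjecture for one convenient $\nv$ per surface and then transport it along those dualities. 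An alternative that sidesteps the direct computation is to observe that, once $\phi_{N,\mu}\,h_{1,0}^N$ is identified with the $J=\nv$ stack generating function, $h_{N,\mu}=h_{1,0}^N\,\Theta_{N,\mu}(\{\Phi_n\})$ becomes the (refined) attractor-flow wall-crossing transform of those stack invariants --- $\Phi_n$ of \eqref{kerg} being that kernel \cite{Alexandrov:2018iao,Beaujard:2020sgs}, with $\sgn(\Gamma_k)-\sgn(\nv\cdot b_k)$ recording the two chambers and the $e_m$ implementing the stack-to-rational conversion --- so that the conjecture reduces to the fact that $h_{N,\mu}$ in the canonical chamber has at most a simple pole at $z=0$ and $z=\hf$; this last fact is manifest from \eqref{defhVWref}--\eqref{intOm}, since there the only $z$-singularity is the overall $1/(y-y^{-1})$ while $\sum_p b_p(\cM_{\gamma,J})\,y^{p-d_{\IC}}$ is a Laurent polynomial regular at $y=\pm1$. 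That route, however, only shifts the burden onto making the identification of $\Phi_n$ with the flow-tree kernel rigorous, which carries its own subtleties at degenerate charge configurations.
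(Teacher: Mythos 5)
The first thing to say is that the statement you are asked to prove is labelled a \emph{conjecture} in the paper, and the paper does not prove it: the only things actually established there are the explicit $N=2$ and $N=3$ cases (via the zero-mode analysis of $\S$\ref{subsec-amb} and Appendices \ref{sec-Fm}--\ref{ap-detail3}), together with the observation that the resulting $\phi_{2,\mu}$, $\phi_{3,\mu}$ are the $N=2,3$ members of the family \eqref{genphiN} built from $H_N$ and the blow-up functions. Your proposal reconstructs exactly this mechanism --- splitting $\vartheta^{(\vec N)}$ into $\tvtheta+\cvtheta$, invoking Conjecture \ref{conj-smooth} for the non-zero-mode part, factorizing the lattice along $\{\nv,\cv\}$ with glue vectors, resumming the null direction into geometric progressions, and recombining zero-mode sectors with the sign identity \eqref{signident} --- and then, like the paper, stops short of carrying out the all-$N$ combinatorics. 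Your second route, reducing the claim to the simple-pole property of $h_{N,\mu}$ via the identification of $\Phi_n$ with the attractor-flow/wall-crossing kernel transporting stack invariants from $J=\nv$ to $J=-K_S$, is essentially what the paper gestures at in $\S$\ref{subsec-true} when it observes that the zero modes reproduce the rational functions of \cite{Manschot:2011ym,Haghighat:2012bm}; it is probably the most promising path to an actual proof, but, as you say, it is not made rigorous here either. So your write-up is an honest strategy outline at the same level of completeness as the paper's own treatment, and the obstacles you name are the genuine ones.

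Two small corrections. First, your claim that ``every other term is singular at $z=0$'' overstates the case: for a partition into $n$ parts the product $\prod_i\phi_{N_i,\mu_i}$ has a pole of order $N-n$ while $\tvtheta^{(\vec N)}(\Phi_n)$ vanishes to order $n-1$, so terms with $(N+1)/2\le n<N$ are regular but merely fail to vanish to order $N-1$; only the terms with small $n$ (and the zero-mode sectors) are actually singular. The accounting of which orders must cancel should be stated in terms of the deficit $N-1-(2n-N-1)$ rather than of poles alone. Second, the identification of $\phi_{N,\mu}h_{1,0}^N$ with the $J=\nv$ stack generating function is only literal for the distinguished null vectors and $\kappa=\kappa_I=1$ of $\S$\ref{subsec-true}; for other admissible $(\nv,\kappa,\kappa_I)$ satisfying \eqref{kappac} the conjecture still asserts the vanishing property (this is the point of $\S$\ref{sec-ident} on ``new invariants''), so a proof routed through stack invariants would cover only part of the conjecture's scope --- your induction-plus-cancellation route, if completed, would be the more general one.
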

The modular properties of \eqref{genphiN} follow from that of $H_{N}$ and $B_{N,\mu}$: the former is
a Jacobi form of weight $-1$ and index $-\frac23(2N^3+N)$, whereas the latter is a vector valued Jacobi form of weight $-\hf$
and index $\frac16(N^3-N)$. Then, as required, $\phi_{N,\mu}$ is a vector valued Jacobi form of weight $\frac12(N-1)b_2(S)$ and
index $-\frac16\, (N^3-N)K_S^2$ due to Proposition \ref{prop-part} of Appendix \ref{ap-theta} and the condition on $\kappa$'s.

\subsection{Comparison to the known results}
\label{subsec-true}

So far we kept various parameters of our solution, such as $\kappa,\kappa_I$ and the null vector $\nv$, arbitrary.
However, the generating functions constructed using different parameters are unlikely the same.
So what is the right choice of these parameters?

It can easily be established by comparing with the known results at $N=2$.
It is immediate to see that for Hirzebruch surfaces one should take the null vector to be
$\nv=D_1=[f]$, whereas the parameter $\kappa$ is automatically fixed by condition \eqref{kappac} to $\kappa=1$.
This implies the following normalized generating function
\be
g_{2,\mu}^{\Fb_m}=\delta^{(2)}_{\mu_1}\,
\frac{\I \,\eta(\tau)^3}{\theta_1(\tau,4 z)}
+\hf\!\!\sum_{k_\alpha\in \IZ+\hf\mu_\alpha}\!\!
\Bigl(\sgn((2-m)k_1+2k_2)-\sgn(k_1+2\beta)\Bigr)\,
\q^{ mk_1^2-2k_1k_2}\, \y^{ 2((2-m)k_1+2k_2)},
\label{resg2Fm}
\ee
which coincides, for instance, with the results presented in $\S$3.1 of \cite{Manschot:2011dj} after
their specification to the canonical chamber $J=-K_{\Fb_m}$.\footnote{Eq. \eqref{resg2Fm} is obtained by setting
$k=C^{\alpha\beta}k_\alpha D_\beta$ in \eqref{resg2}. To match \cite{Manschot:2011dj},
one should change the variables $m=\ell$, $(k_1,k_2)=(b-\hf\beta,-a+\hf\alpha)$ and set there the parameters of the K\"ahler form as
$(m,n)=(2,2-\ell)$.}
Similarly, the result for $g_{3,\mu}$, which can be read off from \eqref{g3expl}, matches the generating function
found in section 5.2 of \cite{Manschot:2011ym}.

For del Pezzo surfaces, the right null vector is $\nv=D_1-D_2$ (``choice I" analyzed in Appendix \ref{sec-Bm})
which allows to take $E_I=-D_{I+2}$ (see footnote \ref{footE}), whereas $\kappa$'s are trivial:
\be
\kappa=\kappa_I=1.
\label{choickap}
\ee
Then the holomorphic modular ambiguity becomes identical to the normalized generating function of stack invariants in
the chamber $J=\nv$, obtained from \eqref{defHN} by applying the ``blow-up formula" $m-1$ times \cite{Haghighat:2012bm}:
\be
\phi_{2,\mu}^{\Bb_m}(\tau,z)=h_{1,0}^{-N}(\tau, z)\, H_{N,\mu}^{\Bb_m}(\tau,z;J=\nv).
\ee
Again, one can check that our results are in perfect agreement with those given in $\S$4.2 of \cite{Haghighat:2012bm}.

Note that typically the generating functions in the cited references are given in a {\it different} form
than the one obtained in this paper. As follows from Theorem \ref{th-main}, we derive them as combinations
of theta functions $\vartheta^{(\vec N)}_{\mu,\vec\mu}(\Phi_n)$ and the modular functions $\phi_{N,\mu}$.
Instead, in \cite{Manschot:2011ym,Haghighat:2012bm} they are represented as combinations of the same objects plus
certain rational functions of $y$. These functions arise due to sheaves which are strictly semi-stable for $J=\nv$
and due to differences between stack and VW invariants. Essentially, their computation represents
the most non-trivial part of constructing $g_{N,\mu}$. In turns out that these rational functions
are nothing else but the zero modes discussed in $\S$\ref{sec-holom} and used to determine the holomorphic modular ambiguity.
Thus, in our case they are {\it automatically} included into the theta series.
This crucial simplification is achieved due to the presence of the parameter
$\beta=-\frac{\Im(y)}{2\pi\tau_2}$ in a half of the sign functions
building the kernel of the theta series (see \eqref{defbk}). We emphasize that we did not put it there by hand, but
it is a direct consequence of modularity and the form of the modular completion.\footnote{One may wonder
how the dependence on $\beta$ could arise given that the refined VW invariants are defined as holomorphic polynomials in $y$ \eqref{intOm}.
In fact, this piecewise dependence is an artefact of the representation of the generating functions through theta series
and appears due to their divergence at a discrete set of poles. Indeed, as discussed below, the $\beta$-dependent part
of theta series can be resumed into generalized Appell functions. They are meromorphic functions with poles corresponding to
the same values of $\beta$ where the $\beta$-dependent sign functions of theta series jump.
Viewed as poles in the $\tau$-plane, they give rise to an ambiguity in the choice of the contour
defining the Fourier coefficients, and the original theta series representation corresponds
to the prescription where the integration is performed keeping fixed $\alpha$ and $\beta$, but not $z$.
For instance, in \eqref{ApFm} this is manifested by the fact that
it reproduces the original theta series by expanding the denominator in the two opposite ways
depending on whether the value of $|\q^{2n}y^4|=e^{4\pi\tau_2(2\beta-n)}$ is greater or less than 1.
Clearly, this condition corresponds to the last sign function in \eqref{resg2Fm}.
Of course, for the purpose of extracting VW invariants, refined or unrefined, it is sufficient to restrict to
the region $|\beta|\ll 1$ where all these subtleties can be ignored.
}

Another representation used to express the generating functions for Hirzebruch surfaces
involves generalized Appell functions \cite{Manschot:2011dj,Manschot:2014cca}.
It can be obtained from our theta functions by splitting them into two parts:
one corresponds to the contributions generated by wall-crossing from $J=D_2$ to $J=-K_S$,
whereas the other can be resumed into an Appell function.
For instance, for $N=2$ and $m>0$ this representation is given by \cite{Manschot:2011dj}
\be
\begin{split}
g_{2,\mu}^{\Fb_m}=&\, \delta^{(2)}_{\mu_1}\,
\frac{\I \,\eta(\tau)^3}{\theta_1(\tau,4 z)}+A^{(m)}_{\mu_1,\mu_2}(\tau,z)
\\
&\,
+\hf\sum_{k_\alpha\in \IZ+\hf\mu_\alpha}
\Bigl(\sgn((2-m)k_1+2k_2)-\sgn(k_2)\Bigr)\,
\q^{ mk_1^2-2k_1k_2}\, \y^{ 2((2-m)k_1+2k_2)},
\end{split}
\label{resg2FmAp}
\ee
where
\be
A^{(m)}_{\mu_1,\mu_2}(\tau,z)=\sum_{n\in\IZ+\hf\mu_1}\frac{\q^{mn^2+\mu_2 n}y^{2(m-2)n+2\mu_2}}{1- \q^{2n}y^4}
-\hf\,\delta^{(2)}_{\mu_2}\, \theta^{(2)}_{\mu_1}(m\tau,(m-2)z)
\label{ApFm}
\ee
and the sum over $n$ can be expressed through the level-$m$ Appell function \eqref{Alevel}.
The zero modes are hidden in $A^{(m)}_{\mu_1,\mu_2}$ and can be extracted as its $\q$-independent contribution.
Again we observe that our representation is significantly simpler.
It is this simplification that allowed to provide a closed formula for all ranks and to make explicit the modular properties
of the generating functions.

\section{Null vectors and duality}
\label{sec-ident}

\subsection{Fiber-base duality}

Let us consider the Hirzebruch surface $\Fb_0$.
As noticed in Appendix \ref{sec-Fm}, the intersection matrix of $\Fb_0$ and its first Chern class are symmetric under the exchange
of the basis vectors $D_1$ and $D_2$. This implies that our construction based on the second null vector $\nv'$ in \eqref{nullF0}
gives rise to the same generating functions as the one based on $\nv$, up to the exchange of the components of the residue class $\mu$.
On one hand, these components have different geometric meaning since one corresponds to the fiber and the other to the base
of the bundle. On the other hand, $\Fb_0=\IP^1\times\IP^1$ and hence everything should be symmetric under the exchange
of the fiber and the base. This is the simplest example of the fiber-base duality \cite{Katz:1997eq,Mitev:2014jza}.
In our case, this duality requires that
\be
g^{\Fb_0}_{N,(\mu_1,\mu_2)}=g^{\Fb_0}_{N,(\mu_2,\mu_1)}.
\label{dualF0}
\ee
Let us check this relation for low ranks.

At $N=2$, using \eqref{resg2Fm}, one finds
\be
\begin{split}
g^{\Fb_0}_{N,(1,0)}=g^{\Fb_0}_{N,(0,1)}\ \Rightarrow\ &\,
\hf\sum_{k_1\in\IZ\atop k_2\in \IZ+\hf}\Bigl( \sgn(k_1+2\beta)-\sgn(k_2+2\beta)\Bigr) \q^{-2k_1 k_2} \, y^{4(k_1+k_2)}
\\
=&\, \sum_{n\in \IZ} \frac{\q^n y^{4n-2}}{1-\q^{2n} y^{-4}}
=\frac{\I \,\eta(\tau)^3}{\theta_1(\tau,4 z)}\, .
\end{split}
\ee
This relation is easily checked numerically, but it can also be verified analytically.
An instructive way to do this is to pass via the surface $\Fb_2$.
In Appendix \ref{sec-Fm} it is shown that all geometric data of $\Fb_2$ are mapped to those of $\Fb_0$ by
a simple change of basis \eqref{chbF0F2}. Since our construction is uniquely determined by these data
and by the choice of the null vector, which are also mapped to each other,
one immediately concludes that\footnote{Recall that
the components $\mu_\alpha$ in this section are related to $\mu^\alpha$ in Appendix \ref{sec-Fm} by $\mu_\alpha=C_{\alpha\beta}\mu^\beta$.}
\be
g^{\Fb_2}_{N,(\mu_1,\mu_2)}=g^{\Fb_0}_{N,(\mu_1,\mu_2-\mu_1)}.
\label{dualF2F0}
\ee
This proves the relation between VW invariants of $\Fb_2$ and $\Fb_0$ noticed in \cite{Beaujard:2020sgs}.
Combined with \eqref{dualF0}, this implies a fiber-base duality for $\Fb_2$ which gives
\be
g^{\Fb_2}_{N,(\mu_1,\mu_2)}=g^{\Fb_2}_{N,(\mu_2-\mu_1,\mu_2)}.
\label{dualF2}
\ee
Taking $\mu_1=\mu_2=1$ and using the representation \eqref{resg2FmAp}, which is convenient because the last term vanishes for $m=2$,
on can show that it boils down to the following identity for the Appell--Lerch function \eqref{clAp}
\be
A(\tau,4z,0)
=-\frac{\I\,\eta(\tau)^3}{\theta_1(\tau,4z)}\,.
\ee
This identity in turn follows directly from the periodicity relation \eqref{Appellperiod} evaluated at $u=4z$, $v=0$.
Note that it specifies an Appell--Lerch function which is a true modular form.

At $N=3$, the normalized generating function is obtained from \eqref{g3expl}.
A direct evaluation gives
\bea
g^{\Fb_2}_{3,\mu}&=& -\delta^{(3)}_{\mu_1}\,
\frac{\eta(\tau)^{6}\,\theta_1(\tau,2 z)}{\theta_1(\tau,6 z)\,\theta_1(\tau,4 z)^2}
+\frac{\I \,\eta(\tau)^3}{\theta_1(\tau,4 z)}
\sum_{\eps=\pm}\sum_{k\in\IZ+\frac{\eps\mu_1}{3}}\q^{6k^2}\(\frac{\q^{-2\eps \mu_2 k} \, y^{-4\eps\mu_2}}{1- \q^{3k}\, y^6}
-\hf\,\delta^{(3)}_{\mu_2}\)
\nn\\
&+&
\sum_{k_1\in \IZ+\frac13\, \mu_1\atop k_2\in \IZ+\frac23\, \mu_1}
\q^{2(k_1^2+k_2^2-k_1k_2)}
\(\frac{\q^{\frac{\mu_2}{3}(k_2-2k_1)} \, y^{\frac43 \mu_2}}{1- \q^{k_2-2k_1}\, y^4}-\hf\,\delta^{(3)}_{\mu_2}\)
\(\frac{\q^{\frac{2\mu_2}{3}(k_1-2k_2)} \, y^{\frac83 \mu_2}}{1- \q^{k_1-2k_2}\, y^4}-\hf\,\delta^{(3)}_{\mu_2}\).
\eea
At this order the only non-trivial relation in \eqref{dualF2} is again obtained for $\mu_1=\mu_2=1$. Expressing
the sums through the generalized Appell functions \eqref{Alevel} and \eqref{Ap2}, it can be represented as
\be
\begin{split}
&\, y^4 \Bigl( A^{(A_2)}_2\(\tau, 4z,4z,\tfrac{1}{3}\tau,\tfrac{2}{3}\tau\)-
\q\, A^{(A_2)}_2\(\tau, 4z-\tau,4z+\tau,-\tfrac{1}{3}\tau,\tfrac{4}{3}\tau\)
\Bigr)
\\
&\,+ \frac{\I \,\eta(\tau)^3}{\theta_1(\tau,4 z)}
\Bigl[ y^{4}\( A_4\(3\tau,6z, \tfrac{1}{2}\,\tau\)-A_4\(3\tau,6z-\tau, -\tfrac{1}{2}\,\tau\)\)
\Bigr.
\\
&\, \Bigl.\qquad
+ y^{-4}\(A_4\(3\tau,6z, -\tfrac{1}{2}\,\tau\)- A_4\(3\tau,6z+\tau, \tfrac{1}{2}\,\tau\)\)\Bigr]
= \frac{\eta(\tau)^{6}\,\theta_1(\tau,2 z)}{\theta_1(\tau,6 z)\,\theta_1(\tau,4 z)^2}.
\end{split}
\label{indentA}
\ee
We have checked this identity numerically which provides a strong evidence that it does hold.
In fact, there is a striking similarity between \eqref{indentA} and an identity proven in \cite[Theorem 1.1]{Bringmann:2015},
although they are not exactly the same differing in precise arguments of Appell and theta functions.
The identity in \cite{Bringmann:2015} is a consequence of the blow-up formula relating the generating functions for
$\IP^2$ and $\Fb_1$, whereas in our case the relevant surface is either $\Fb_0$ or $\Fb_2$.
This explains the differences between the two cases and shows that
the fiber-base duality can be taken as a source of non-trivial identities for the generalized Appell functions.

\subsection{Generalizations}

Remarkably, the duality relation \eqref{dualF0} generalizes to the del Pezzo surfaces as well.
We formulate this statement as
\begin{conj}
\label{conj-dual}
Let the two null vectors $\nv,\nv'\in \Lambda_S $ satisfy
\be
\nv\cdot\nv'=1,
\qquad
\cdv\equiv c_1\cdot \nv
=c_1\cdot \nv',
\label{conddual}
\ee
and the parameters $\kappa,\kappa_I$ are given by\footnote{They satisfy the condition \eqref{kappac} because
it becomes equivalent to the decomposition of $c_1^2$ in the basis $\{\nv,\nv', E_1,\dots,E_{b_2-2}\}$.}
\be
\kappa=\frac{\cdv}{2}\, ,
\qquad
\kappa_I=|c_1\cdot E_I|,
\label{choickapc}
\ee
where $\{E_I\}_{I=1}^{b_2-2}$ is the orthonormal basis of the orthogonal complement to $\nv$ and $\nv'$, as in \eqref{genphiN}.
Then the two sets of generating functions, constructed out of the null vectors $\nv$ and $\nv'$
using Theorem \ref{th-main} with $\phi_{N,\mu}$ given in \eqref{genphiN}, are the same
\be
g_{N,\mu}[\nv]=g_{N,\mu}[\nv'].
\label{dualDP}
\ee
\end{conj}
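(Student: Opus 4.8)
\medskip
\noindent\textit{Proof strategy.} The plan is to argue by induction on the rank $N$, the base case $N=1$ being trivial since $g_{1,0}\equiv 1$. Assume $g_{N',\mu}[\nv]=g_{N',\mu}[\nv']$ for all $N'<N$ and all $\mu$. The key point is that the modular completion \eqref{exp-whhr} depends on the null vector only through the lower rank generating functions $h_{N_i,\mu_i}$: the coefficients $\Rv_n$, the quadratic form $Q_n$, the combinations $\gamma_{ij}$ and the constraint \eqref{sumgam} involve only the intersection form and $c_1$. Hence, under the inductive hypothesis, all $n\ge 2$ terms in $\whh_{N,\mu}[\nv]$ and $\whh_{N,\mu}[\nv']$ coincide, and subtracting leaves only the $n=1$ term, so that
\be
D_\mu:=g_{N,\mu}[\nv]-g_{N,\mu}[\nv']=\whg_{N,\mu}[\nv]-\whg_{N,\mu}[\nv'].
\label{pp-Dmod}
\ee
Since the weight $\frac12(N-1)b_2(S)$, the index $-\frac16(N^3-N)K_S^2$ and the multiplier system of $\whg_{N,\mu}$ do not depend on the null vector, the right-hand side of \eqref{pp-Dmod} is a (real-analytic) vector-valued Jacobi form with this data; being equal to the holomorphic left-hand side, its non-holomorphic parts cancel and $\vec D=(D_\mu)$ is a \emph{meromorphic} vector-valued Jacobi form. (Equivalently, the first condition in \eqref{conddual} produces the lattice isometry $\sigma$ of $\Lambda_S=\langle\nv,\nv'\rangle\oplus\langle\nv,\nv'\rangle^\perp$ that exchanges $\nv\leftrightarrow\nv'$ and fixes $\{\nv,\nv'\}^\perp$; the equality $c_1\cdot\nv=c_1\cdot\nv'$ makes it fix $c_1$, so relabelling the summation charges by $\sigma$ maps the $\nv$-construction to the $\nv'$-construction up to $\mu\mapsto\sigma\mu$, and \eqref{dualDP} becomes the hidden symmetry $g_{N,\mu}[\nv]=g_{N,\sigma\mu}[\nv]$.)

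\medskip
\noindent Next I would pin down the pole structure. Writing $g_{N,\mu}[\nv]=\Theta_{N,\mu}(\tau,z;\{\Phi_n\})$ as in Theorem \ref{th-main}, the theta series $\vartheta^{(\vec N)}_{\mu,\vec\mu}(\Phi_n)$ are entire in $z$ for $|\beta|\ll 1$, so all poles of $g_{N,\mu}[\nv]$ come from the ambiguities $\phi_{N_i,\mu_i}[\nv]$; with $\kappa=\cdv/2$ and $\kappa_I=|c_1\cdot E_I|$ these are simple and supported on $P:=\frac{1}{2\cdv}(\IZ\tau+\IZ)$, arising from the factor $H_{N_i}(\tau,\frac{\cdv}{2}z)/h_{1,0}^{N_i}(\tau,\frac{\cdv}{2}z)$ in \eqref{genphiN}. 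The same $P$ governs the poles of $g_{N,\mu}[\nv']$ because $c_1\cdot\nv'=c_1\cdot\nv=\cdv$ by \eqref{conddual}. On the other hand, Conjecture \ref{conj-phi} gives $g_{N,\mu}[\nv]$ and $g_{N,\mu}[\nv']$ zeros of order $N-1$ at $z=0,\frac12$, hence — by quasi-periodicity in $z\mapsto z+\tau$ — on all of $\frac12\IZ+\IZ\tau$. Thus $\vec D$ is a meromorphic vector-valued Jacobi form with at most simple poles on $P$ and zeros of order $\ge N-1$ on $\frac12\IZ+\IZ\tau$.

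\medskip
\noindent Since $\theta_1(\tau,2\cdv z)$ vanishes simply on all of $P$, the product $\vec D\,\theta_1(\tau,2\cdv z)$ is a \emph{holomorphic} vector-valued Jacobi form of weight $\frac12(N-1)b_2(S)+\frac12$ and index $-\frac16(N^3-N)K_S^2+2\cdv^2$, with a zero of order $\ge N$ at $z=0,\frac12$. A holomorphic vector-valued Jacobi form of negative index vanishes identically; this index is negative for the admissible choices once $N$ is large enough compared with $K_S^2$ and $\cdv$ (for the geometries at hand this already covers $N\ge 3$ except for a few surfaces of small $K_S^2$). In that range $\vec D=0$, the induction step closes, and \eqref{dualDP} follows. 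Using a Jacobi form of minimal index vanishing only on $P\setminus(\frac12\IZ+\IZ\tau)$ instead of $\theta_1(\tau,2\cdv z)$ sharpens the admissible range only slightly.

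\medskip
\noindent The \textbf{main obstacle} is the remaining low rank cases — in particular $N=2$ (and, for some del Pezzo surfaces, $N=3$), which are precisely the cases that have been checked — where the index of $\vec D\,\theta_1(\tau,2\cdv z)$ is non-negative and modularity alone does not force $\vec D$ to vanish. There one needs the extra input that both constructions reproduce the same generating functions of \emph{unrefined} VW invariants (the fiber-base duality at $y\to 1$): this fixes the leading Laurent coefficients of $\vec D$ at $z=0,\frac12$, upgrading its vanishing there to order $N$ and, combined with the index count, killing the residual finite-dimensional space of Jacobi forms. Equivalently, one is reduced to a family of identities among $\eta(\tau)$, $\theta_1(\tau,z)$ and the generalized Appell functions of \cite{Manschot:2014cca}, of the type of \eqref{indentA}, generalizing the periodicity of the classical Appell--Lerch function; proving these for all $N$ is the crux and is why the statement is left as a conjecture.
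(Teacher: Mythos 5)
Your inductive framing is sound and partly overlaps with the paper's: under the hypothesis $g_{N',\mu}[\nv]=g_{N',\mu}[\nv']$ for $N'<N$, the $n\ge 2$ terms of \eqref{exp-whhr} cancel in the difference, so $D_\mu=g_{N,\mu}[\nv]-g_{N,\mu}[\nv']$ is holomorphic and transforms as a Jacobi form; and your observation that $\nv\cdot\nv'=1$ together with $c_1\cdot\nv=c_1\cdot\nv'$ yields a lattice involution $\sigma$ fixing $c_1$, turning \eqref{dualDP} into the hidden symmetry $g_{N,\mu}=g_{N,\sigma\mu}$ of a single construction, is correct and is exactly what underlies the paper's reduction. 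From there, however, the routes diverge. The paper works combinatorially on the kernels: assuming the identity at lower rank, all $\sgn(\Gamma_k)$ factors (the ones built from $c_1$) cancel recursively in the difference, only signs involving $\nv$ and $\nv'$ survive, the splitting $\Lambda_S=\langle\nv,\nv'\rangle\oplus\langle E_I\rangle$ decouples the $E_I$ directions into the same blow-up factors appearing in $\phi_{N,\mu}$, and the residual two-dimensional identity is the $\Fb_0$ fiber-base duality with $y\to y^{\kappa}$. You instead try to force $D_\mu=0$ by a pole/zero count on the elliptic curve.

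That counting step has a concrete gap. First, the poles of $\phi_{N,\mu}$ are neither simple nor supported on $\frac{1}{2\cdv}(\IZ\tau+\IZ)$ once $N\ge 3$: after normalization, $H_N(\tau,\kappa z)\,h_{1,0}^{-N}(\tau,\kappa z)$ carries $\theta_1(\tau,2N\kappa z)\prod_{m=2}^{N-1}\theta_1(\tau,2m\kappa z)^2$ in the denominator against only $\theta_1(\tau,2\kappa z)^{N-2}$ in the numerator, producing double poles on $\frac{1}{2m\kappa}(\IZ+\IZ\tau)$, and the $n\ge 2$ terms contribute products $\prod_i\phi_{N_i,\mu_i}$ with still higher-order poles. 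Second, the indefinite theta series are holomorphic only in the chamber $|\beta|\ll 1$; applying the elliptic transformation $z\to z+\tau$ to count zeros on the torus requires their meromorphic continuation, which carries the Appell-type poles described in the paper's footnote to $\S$\ref{subsec-true} and which you have not included. Repairing the count means multiplying $\vec D$ by a theta product of index of order $\cdv^2N^3/3$, to be compared with $|{-\tfrac16(N^3-N)K_S^2}|$; negative total index then requires roughly $K_S^2>2\cdv^2$, which fails for all the surfaces at hand ($\cdv=2$, $K_S^2\le 8$). So the negative-index mechanism does not close the induction for essentially any $N$, not only the low-rank cases you flag, and the recursive sign-cancellation in the kernel --- the step the paper itself singles out as the only non-trivial one --- is still the piece a proof would need.
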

This conjecture follows from the observation that in the difference $g_{N,\mu}[\nv]-g_{N,\mu}[\nv']$ all
$\sgn(\Gamma_k)$ which appear in the kernel \eqref{kerg} of indefinite theta series cancel recursively, i.e.
assuming that the relation \eqref{dualDP} holds for $N'<N$. This is the only non-trivial step to prove.
It ensures that the only remaining sign functions are those which involve $\nv$ and $\nv'$, but not $c_1$.
Then the first condition in \eqref{conddual} allows to change the basis of the lattice to
$\{\nv,\nv', E_1,\dots,E_{b_2-2}\}$ so that $\Lambda_S$ appears as a direct sum of two unimodular orthogonal sublattices.
As a result, the part generated by $\{E_I\}_{I=1}^{b_2-2}$
decouples giving rise exactly to the same factor $\prod_{I=1}^{b_2-2}B_{N,\mu\cdot E_I}(\tau,\kappa_I z)$
(rescaled by a power of $\eta(\tau)$) which arises in $\phi_{N,\mu}$ \eqref{genphiN}. After this decoupling,
the second condition in \eqref{conddual} ensures that the relation \eqref{dualDP} reduces
to the one for $\Fb_0$ with $y$ replaced by $y^\kappa$ and hence must hold.

Of course, the pairs of null vectors \eqref{nullF0} for $\Fb_0$ and \eqref{nullF2} for $\Fb_2$ satisfy the conditions
\eqref{conddual} and the conjecture reduces to the duality relations \eqref{dualF0} and \eqref{dualF2}, respectively.
In contrast, the null vectors \eqref{nullF1} for $\Fb_1$ spoil both conditions and therefore one does not expect any
fiber-base duality for VW invariants on $\Fb_1$.

For del Pezzo surfaces there is an evident pair of vectors satisfying \eqref{conddual}: $\nv=D_1-D_2$ and $\nv'=D_1-D_{I+1}$
for any $I=1,\dots,m-1$. However, the corresponding duality is trivial since it follows from the symmetry between
the exceptional divisors. A non-trivial duality can be obtained by taking the second null vector $\nv'$
to be one of the following divisors
\be
2D_1-\sum_{\alpha=2}^5 D_\alpha,
\qquad
3D_1-2D_2-\sum_{\alpha=3}^7 D_\alpha,
\qquad
4D_1-3D_2-\sum_{\alpha=3}^9 D_\alpha,
\ee
which exist for $m\ge 4$, 6 and 8, respectively.
The resulting duality is a non-trivial prediction of our formalism and it can be traced back to the Weyl
reflection symmetry of the lattice $\Lambda_{\Bb_m}$ \cite{Iqbal:2001ye}.\footnote{The author is grateful to Boris Pioline
for this observation.}

\subsection{New invariants?}

If correct, Conjecture \ref{conj-phi} together with Theorem \ref{th-main}
provide holomorphic functions, as well as their completions,
satisfying the modular anomaly equation and regularity conditions,
for any null vector $\nv$ and parameters $\kappa,\kappa_I$ restricted only by a single constraint \eqref{kappac}.
For the vectors and parameters specified in $\S$\ref{subsec-true},
and for those which form with them the dual pairs satisfying the conditions of Conjecture \ref{conj-dual},
these holomorphic functions are the generating functions of refined VW invariants.
But what is the meaning of the holomorphic functions built out of other null vectors?

For instance, for the Hirzebruch surface $\Fb_1$ there is a null vector $\nv'=2D_2-D_1$ which has $\cdv'=4$
(see Appendix \ref{sec-Fm}) and gives rise to holomorphic functions which are {\it different} from those of VW invariants.
Nevertheless, our results indicate that they still possess the same modular transformation properties
and have a well defined unrefined limit.
Thus, one can ask a question about interpretation of the rational numbers which appear as Fourier coefficients
of these functions.

It might be that our construction still misses some important constraints which
exclude these additional solutions. For instance, there is no guarantee that the rational numbers they generate
can be related to integer numbers in some reasonable way similar to \eqref{defcref}.
However, it is also not clear why such condition should be enough to exclude them from consideration.
We leave the interpretation of these solutions as an interesting open problem.

\section{Conclusions}
\label{sec-concl}

The main result of this paper is the explicit expression, given in \eqref{mainres},
for the generating functions of refined VW invariants in the canonical chamber $J=-K_S$
for Hirzebruch and del Pezzo surfaces.
We emphasize that this formula is supposed to work for {\it any} rank $N$ of the gauge group of VW theory
(or, mathematically speaking, rank of the semi-stable sheaves).
This result not only unifies and extends
many explicit expressions for such generating functions scattered in the literature,
but also simplifies them. In particular, it represents the generating functions
as combinations of indefinite theta series and true Jacobi forms which makes
the modular properties of these functions transparent.
As a result, we also provided a similar explicit expression for the modular completion of the generating functions.

Although our result is valid only in the canonical chamber, it can be translated to
other chambers of the moduli space with help of wall-crossing formulas \cite{ks,Joyce:2008pc,Joyce:2009xv},
which typically generates additional theta series.
Thus, this is not a serious restriction, but, of course, it would be desirable to have a similar
explicit and generic expression in arbitrary chamber.

We also believe that the restriction to Hirzebruch and del Pezzo surfaces can be lifted and
similar results should hold more generally. First of all, although we restricted ourselves to the del Pezzo surfaces
which are Fano, i.e. with $m\le 8$, the comparison of our results with those for $\IB_9$
\cite{Klemm:2012sx}, known as the rational elliptic surface or half-K3, shows that they continue to hold in this case as well.
The reason why we preferred to exclude $\IB_9$ from our analysis is that for this surface the canonical class is null,
so that such case, strictly speaking, is not captured by the construction of the modular completion
in \cite{Alexandrov:2018lgp,Alexandrov:2019rth} which requires $K_S^2\ne 0$.
Nevertheless, we see that the resulting formula for the completion \eqref{exp-whhr}
seems to have much larger area of applicability that one could initially think.

In fact, the construction in this paper relies just on a few ingredients:
\begin{itemize}
\item
the existence of the unrefined limit;
\item
the formula for the completion \eqref{exp-whhr};
\item
the unimodularity of the lattice $\Lambda_S$;
\item
the existence of a null vector belonging to the lattice.
\end{itemize}
Taking into account that the first holds by definition and assuming the second, one may hope
that the same construction applies to all surfaces satisfying the last two conditions, which are rather mild.
In particular, it would be interesting to verify whether our formula for the generating functions
continues to hold for the toric almost Fano surfaces considered recently in \cite{Beaujard:2020sgs}.

At the same time, the case of $\IP^2$ appears to be special since the one-dimensional lattice does not have null vectors.
Of course, it is known how to deal with such case: one should multiply by a modular theta series
thereby extending the lattice to two-dimensional \cite{Zwegers-thesis}.
Thus, it should also be possible to include $\IP^2$ in the present framework.
However, the details of the analysis might be different and we leave this case for future research.

A particularly interesting problem is to see whether it is possible to extend the present construction
to the case of a compact CY threefold. There are however important obstacles on this way.
The main difference with respect to the non-compact case is that the divisor classes may not be
proportional to each other so that they are not classified by a single number $N$.
As a result, the spectral flow decomposition of charges \eqref{quant-q}
and their symplectic product \eqref{gam12} will not be so simple anymore.
This might have important consequences for various steps in the derivation.
Finally, the lattice of charges might also be more complicated, in particular, not unimodular.
Nevertheless, even if this can make impossible to obtain some general results,
one can still hope that some particular cases might be tractable.

Considering generalizations of the results presented in this paper to more general cases,
it is useful to keep in mind the possibility to have non-trivial $\kappa$-parameters appearing in \eqref{genphiN}.
Although for Hirzebruch and del Pezzo surfaces they must be set to one as in \eqref{choickap},
this is consistent with the constraint \eqref{kappac} only for $c_1^2(S)=10-b_2(S)$.
A more general prescription, which is also natural from the point of view of the fiber-base duality, is
the choice \eqref{choickapc}.

The duality which we observed is another interesting byproduct of our construction.
In this framework it has a very simple realization as an exchange of two null vectors satisfying
the conditions \eqref{conddual}. However, it is expected to have a geometric origin, as in the case of $\Fb_0$,
and important physical consequences, see e.g. \cite{Mitev:2014jza}.
To elucidate them, it is probably necessary to establish a precise relation with the analysis of the fiber-base duality
in the gauge theory context.

This duality has also an immediate mathematical application which we have already started to explore in this paper.
Namely, it can be taken as a generating technique of non-trivial identities between generalized Appell functions
and other special functions. We have obtained one such identity, Eq. \eqref{indentA}, but many more can be generated
by making the duality relations between generating functions explicit.

Finally, we indicated the existence of alternative solutions for the generating functions satisfying all modularity
and regularity constraints, which are constructed using different null vectors.
Are they artifacts of the formalism and should not be taken seriously?
Or they imply the existence of new topological invariants?
If yes, what is their physical and mathematical meaning?
These are open questions for future research.

\section*{Acknowledgements}

The author is grateful to Sibasish Banerjee, Jan Manschot and Boris Pioline for valuable discussions
and previous collaboration leading to this work.

\appendix

\section{Indefinite theta series and generalized error functions}
\label{ap-gener}

\subsection{Jacobi and modular forms}
\label{ap-Jacobi}

We define (in general, non-holomorphic) vector valued Jacobi form of weight $(w,\bw)$ and index $m$
as a finite set of functions $f_\bfmu(\tau, z)$ with $\tau\in \IH$, $z\in \IC$ labelled by $\bfmu$,
satisfying the following transformation properties \cite{MR781735}
\be
\begin{split}
f_{\bfmu}(\tau,z+k\tau+\ell)=&\, e^{- 2\pi\I m \( k^2\tau + 2 k z\)} \,f_{\bfmu}(\tau,z),
\\
f_{\bfmu}\(\frac{a\tau+b}{c\tau+d}, \frac{z}{c\tau+d}\)
=&\, (c\tau+d)^w(c\btau+d)^{\bw} \, e^{\frac{2\pi\I m c z^2}{c\tau+d}}\sum_\bfnu M_{\bfmu\bfnu}(\rho)\,f_{\bfnu}(\tau,z),
\end{split}
\label{Jacobi}
\ee
where $\rho=\scriptsize{\(\begin{array}{cc}
a & b \\ c & d
\end{array}\)}\in SL(2,\IZ)$ and $M_{\bfmu\bfnu}(\rho)$ is a multiplier system. Setting $z=m=0$, \eqref{Jacobi}
reduces to the definition of a vector valued modular form $f_\bfmu(\tau)$.
Since $M_{\bfmu\bfnu}(\rho)$ must furnish a representation of the group and $SL(2,\IZ)$ is generated by two transformations,
$T=\scriptsize{\(\begin{array}{cc}
1 & 1 \\ 0 & 1
\end{array}\)}$
and
$S=\scriptsize{\(\begin{array}{cc}
0 & -1 \\ 1 & 0
\end{array}\)}$,
to define the multiplier system, it is enough to specify it for $\rho=T$ and $S$.
Thus, to characterize the modular behaviour of a Jacobi form, it is sufficient to provide
its modular weight $(w,\bw)$, index $m$ and two matrices $M_{\bfmu\bfnu}(T)$ and $M_{\bfmu\bfnu}(S)$.

\subsection{Theta series}
\label{ap-theta}

In this work we consider theta series of the following type
\be
\label{Vignerasth}
\vartheta_{\bfmu}(\tau,z;\bfp,\bfptt,\Phi)=
\!\!\!\!
\sum_{{\bfk}\in \Lat+\bfmu+\hf\bfp}\!\!
(-1)^{\bfk\bpt\bfp}\,\Phi(\sqrt{2\tau_2}(\bfk+\beta\bfptt))\, \q^{-\tfrac{1}{2}\bfk^2} y^{\bfptt\bpt\bfk},
\ee
where as usual $\q=e^{2\pi\I\tau}$, $y=e^{2\pi\I z}$ and $z=\alpha-\tau\beta$.
Additionally, the definition involves the following data:
\begin{itemize}
\item
$\Lat$ is a $d$-dimensional lattice equipped with a bilinear form
$(\bfx,\bfy)\equiv \bfx\bpt\bfy$,
where $\bfx,\bfy\in \Lat \otimes \IR$, such that its associated quadratic form
has signature\footnote{It is convenient to choose mostly negative signature
(and hence the minus sign in the power of $\q$ in \eqref{Vignerasth}) because this is the case
for the signature of $H^2(S,\IZ)$ for all relevant surfaces and in agreement with conventions of
\cite{Alexandrov:2016tnf,Alexandrov:2016enp,Alexandrov:2018lgp,Alexandrov:2019rth}.}
$(n,d-n)$ and is integer valued, i.e. $\bfk^2\equiv \bfk\bpt \bfk\in\IZ$ for $\bfk\in\Lat$.

\item
$\bfp\in \Lat$ is a characteristic vector, i.e. such that $\bfk\bpt(\bfk+ \bfp)\in 2\IZ$, $\forall \,\bfk \in \Lat$.

\item
$\bfmu\in\Lat^*/\Lat$ is a glue vector or residue class.

\item
$\bfptt\in\Lat$ is an arbitrary vector in the lattice.

\item
$\Phi(\bfx)$ is the kernel of theta series satisfying suitable decay conditions on $\Phi(\bfx) e^{\pi \bfx^2}$.
\end{itemize}
The crucial result about these theta series proven in \cite{Vigneras:1977} is that,
provided the kernel satisfies the following differential equation (which we call the Vign\'eras equation)\footnote{This equation
ensures that the kernel defines an eigenfunction under a Fourier transform which appears in the Poisson resummation of the S-transformation
of the theta series explicitly computed in  \cite{Vigneras:1977}.}
\be
\label{Vigdif}
\Vop_\lambda\, \Phi(\bfx)=0,
\qquad
\Vop_\lambda= \partial_{\bfx}^2   + 2\pi \( \bfx\bpt \pa_{\bfx}  - \lambda\)
\ee
with a real parameter $\lambda$, under the standard $SL(2,\IZ)$ transformations of $\tau$ and $z$ \eqref{transf-tz},
$\vartheta_{\bfmu}$ transforms as a vector-valued Jacobi form of weight $(\hf(d+\lambda),-\hf\lambda)$, index $-\bfptt^2/2$,
and the multiplier system specified by
\be
M_{\bfmu\bfnu}(T)=e^{-\pi\I \(\bfmu+\tfrac12 \bfp\)^2}\,\delta_{\bfmu\bfnu},
\qquad
M_{\bfmu\bfnu}(S)=
\frac{(-\I)^{\lambda+\frac{n}{2}}}{\sqrt{|\Lat^*/\Lat|}}\,
e^{\frac{\pi\I}{2} \,\bfp^2} \,
e^{2\pi\I\bfmu\bpt\bfnu}.
\label{eq:thetatransforms}
\ee

There is an important particular case allowing to describe theta series which do not exactly fit the class \eqref{Vignerasth},
but still transform as in the same way. It is captured by the following

\begin{proposition}
\label{prop-part}
Let the glue vector has a decomposition
\be
\bfmu=\bfrho+a_A\bfxi^A+b^A\bfzeta_A,
\qquad
{ A=1, \dots, m_0,\atop a_A,b^A=0,\dots,n_0-1,}
\ee
where $\bfrho\in \bfDelta\subset \Lat^\star/\Lat$ and the vectors $\bfxi^A$, $\bfzeta_A$
satisfy $\bfxi^A\bpt\,\bfzeta_B=n_0^{-1}\delta^A_B$ with $n_0\in \IN$ as well as
$\bfxi^A\bpt\,\bfxi^B,\ \bfxi^A\bpt\,\bfp\in 2\IZ$ and $\bfxi^A\bpt\,\bfrho\in \IZ$.\footnote{Note that
we do not require the vectors to be linearly independent.}
Furthermore, we assume that
\be
\vartheta_{\bfrho+a_A\bfxi^A}=\vartheta^{(0)}_{\bfrho},
\qquad
\vartheta_{\bfrho+a_A\bfxi^A+b^A\bfzeta_A}= 0
\quad \mbox{\rm for } b^A\ne 0.
\ee
Then $\vartheta_{\bfmu}$ transforms according to \eqref{Jacobi} with multiplier system \eqref{eq:thetatransforms} if and only if
$\vartheta^{(0)}_{\bfrho}$ does so with $\Lat^\star/\Lat$ replaced by $\bfDelta$.
\end{proposition}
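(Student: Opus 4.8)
The plan is to reduce the statement to the behaviour under the two generators $S$ and $T$ of $SL(2,\IZ)$, since a multiplier system is determined by its values on these. For the series \eqref{Vignerasth} the transformation under $T$ is diagonal in the residue class, with phase $e^{-\pi\I(\bfmu+\frac12\bfp)^2}$, and the elliptic transformations in \eqref{Jacobi} involve only $\bfp$ and $\bfptt$; neither mixes different $\bfmu$, so both are automatically compatible with the reduction once one checks that the $T$-phase is constant along a block $\bfmu=\bfrho+a_A\bfxi^A$. I would verify this first: expanding $(\bfrho+a_A\bfxi^A+\tfrac12\bfp)^2$ and using $\bfxi^A\bpt\bfxi^B,\ \bfxi^A\bpt\bfp\in 2\IZ$ and $\bfxi^A\bpt\bfrho\in\IZ$ gives $(\bfrho+a_A\bfxi^A+\tfrac12\bfp)^2\equiv(\bfrho+\tfrac12\bfp)^2\ (\mathrm{mod}\ 2)$, so $\vartheta_{\bfrho+a_A\bfxi^A}$ carries the same $T$-phase as the one to be assigned to $\vartheta^{(0)}_{\bfrho}$. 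The whole content then sits in the $S$-transformation, the only one that genuinely mixes residue classes.

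For $S$ I would start from $\sum_{\bfnu}M_{\bfmu\bfnu}(S)\,\vartheta_{\bfnu}$, writing $\bfnu=\bfrho'+a'_A\bfxi^A+b'^A\bfzeta_A$; by hypothesis only the terms with $b'=0$ survive and each equals $\vartheta^{(0)}_{\bfrho'}$, so
\be
\sum_{\bfnu}M_{\bfmu\bfnu}(S)\,\vartheta_{\bfnu}
=\sum_{\bfrho'\in\bfDelta}\Bigl(\sum_{a'\in(\IZ/n_0)^{m_0}}M_{\bfmu,\,\bfrho'+a'_A\bfxi^A}(S)\Bigr)\vartheta^{(0)}_{\bfrho'}.
\ee
Writing $\bfmu=\bfrho+a_A\bfxi^A+b^A\bfzeta_A$ and using $\bfxi^A\bpt\bfzeta_B=n_0^{-1}\delta^A_B$ together with the integrality of $\bfxi^A\bpt\bfxi^B$ and $\bfxi^A\bpt\bfrho$, one has $\bfmu\bpt\bfxi^A\in\IZ+b^A/n_0$; hence the inner sum factorises over $A$ and equals $\bigl(\text{prefactor}\bigr)\,e^{2\pi\I\bfmu\bpt\bfrho'}\prod_A\bigl(\sum_{a'_A=0}^{n_0-1}e^{2\pi\I a'_A b^A/n_0}\bigr)=\bigl(\text{prefactor}\bigr)\,e^{2\pi\I\bfmu\bpt\bfrho'}\,n_0^{m_0}\prod_A\delta_{b^A,0}$. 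Thus for $b\neq0$ the right-hand side of the $S$-law vanishes, consistently with $\vartheta_{\bfmu}=0$; and for $b=0$, i.e.\ $\bfmu=\bfrho+a_A\bfxi^A$, the conditions $\bfxi^A\bpt\bfxi^B\in2\IZ$ and $\bfxi^A\bpt\bfrho'\in\IZ$ collapse $e^{2\pi\I\bfmu\bpt\bfrho'}$ to $e^{2\pi\I\bfrho\bpt\bfrho'}$, so the coefficient becomes $\frac{(-\I)^{\lambda+\frac{n}{2}}}{\sqrt{|\Lat^\star/\Lat|}}\,e^{\frac{\pi\I}{2}\bfp^2}\,e^{2\pi\I\bfrho\bpt\bfrho'}\,n_0^{m_0}$. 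Counting cosets one has $\sqrt{|\Lat^\star/\Lat|}=n_0^{m_0}\sqrt{|\bfDelta|}$, the factors $n_0^{m_0}$ cancel, and what remains is precisely \eqref{eq:thetatransforms} with $\Lat^\star/\Lat$ replaced by $\bfDelta$.

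Reading this identity in both directions gives the equivalence: taking $\bfmu=\bfrho\in\bfDelta$ in the $S$-law for $\vartheta_{\bfmu}$ reproduces exactly the $S$-law for $\vartheta^{(0)}_{\bfrho}$ over $\bfDelta$ (the $T$-law and the elliptic transformations having been handled already), while the components with $b\neq0$ are vacuous since both sides vanish; conversely, assuming the reduced law one recovers the full one by the same computation. It is not necessary to check separately that $M^{\bfDelta}(S),M^{\bfDelta}(T)$ furnish a representation of $SL(2,\IZ)$: that is forced once $\vartheta^{(0)}_{\bfrho}$ is known to transform consistently under both generators. The hard part, and the step I would treat most carefully, is the coset count used above: since the $\bfxi^A$, $\bfzeta_A$ are not required to be linearly independent, one must check that the map $(\bfrho',a',b')\mapsto\bfrho'+a'_A\bfxi^A+b'^A\bfzeta_A$ realises $\Lat^\star/\Lat$ as $|\bfDelta|$ uniform blocks — or, more carefully, up to a multiplicity uniform over $\bfDelta$ that cancels the over-counting of the sum over $a'$ — which is exactly what pins down the normalisation $|\Lat^\star/\Lat|=|\bfDelta|\,n_0^{2m_0}$; once this is settled, the rest is routine bookkeeping.
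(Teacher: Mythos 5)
Your proof is correct and follows essentially the same route as the paper's: the $T$-phase is checked to be constant along the blocks $\bfrho+a_A\bfxi^A$ via the integrality conditions, and the $S$-transformation is reduced by expanding $\bfnu=\bfrho'+a'_A\bfxi^A+b'^A\bfzeta_A$, keeping only $b'=0$, summing the phases $e^{2\pi\I a'_Ab^A/n_0}$ to get $n_0^{m_0}\prod_A\delta_{b^A,0}$, and matching normalisations through $|\Lat^\star/\Lat|=n_0^{2m_0}|\bfDelta|$. The only difference is that you are more explicit about the coset-counting step, which the paper simply takes as given.
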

\begin{proof}
The agreement of the $T$-transformation of the two theta series
trivially follows from the properties of the vectors $\bfxi^A$ which ensure that for $b^A=0$
the phase does not depend on $a_A$.
For the $S$-transformation of $\vartheta_{\bfmu}$, the relevant factor is given by
\be
\sum_{\bfnu\in\Lambda^*/\Lambda}
e^{2\pi\I\bfmu\bpt\bfnu}\vartheta_{\bfnu}
=\sum_{\bfrho'\in\Delta}\sum_{a'_A=0}^{n_0-1}
e^{2\pi\I\((\bfrho+b^A\bfzeta_A)\bpt\bfrho'+\frac{b^A a'_A}{n_0}\)}\vartheta^{(0)}_{\bfrho'}
=n_0^{m_0}\[\prod_{A=1}^{m_0}\delta_{b^A=0}\]
\sum_{\bfrho'\in\Delta} e^{2\pi\I\bfrho\bpt\bfrho'}\vartheta^{(0)}_{\bfrho'}.
\ee
Taking into account that $|\Lat^\star/\Lat|=n_0^{2m_0}|\bfDelta|$, this
agrees with the transformation of $\vartheta^{(0)}_{\bfrho}$.
\end{proof}

Due to this proposition, some theta series which, comparing with \eqref{Vignerasth},
seem to be only a part of a modular vector, in fact transform themselves as (vector valued) modular forms.

\subsection{Unary theta series}
\label{ap-thJ}

The most simple example of the modular theta series is the Jacobi function
\be
\theta_1(\tau,z)=\sum_{k\in \IZ+\hf}\q^{k^2/2} (-y)^k,
\ee
which corresponds to the unimodular lattice $\Lat=\IZ$ with the characteristic vector $\bfp=1$, trivial kernel $\Phi=1$, and $\bfptt=1$.
Thus, it transforms as Jacobi form of weight 1/2 and index $1/2$.
This function satisfies an important property: while it vanishes at $z=0$, its first derivative gives
\be
\p_z\theta_1(\tau,0)=-2\pi \eta(\tau)^3,
\label{derth1}
\ee
where $\eta(\tau)$ is the Dedekind eta function, modular form of weight 1/2, which is given by
\be
\eta(\tau)=\q^{\frac{1}{24}}\prod_{n=1}^\infty(1-\q^n)=\q^{\frac{1}{24}}\sum_{n\in\IZ}(-1)^n \q^{(3n^2+n)/2} ,
\label{Dedfun}
\ee
where the second representation is due to Euler’s Pentagonal Number Theorem.

One can ask how the modularity of the eta function is consistent with the statement of the previous subsection.
Slightly rewriting \eqref{Dedfun}, one obtains
\be
\eta(\tau)=-\I \sum_{k\in 3\IZ+\hf}(-1)^k \q^{k^2/6}.
\ee
This corresponds to $\Lat=3\IZ$, $\bfp=3$ and $\bfmu=-1$.\footnote{We use conventions where the lattice $\Lat=N \IZ$
has bilinear form $k_1\bpt k_2=k_1k_2/N$ for $k_1,k_2\in \Lat$ which allows to be consistent with equations in VW theory.
Then the choice of $N$ is determined by the consistency with \eqref{Vignerasth}.}
However, since the determinant of the lattice is equal to 3, one could expect that $\eta$ is only one component of
a three-dimensional modular vector $\eta_\bfmu$. However, it is easy to see that $\eta_0=0$ and $\eta_1=-\eta_{-1}$,
and these relations are preserved under transformations \eqref{Jacobi} with
multiplier system \eqref{eq:thetatransforms} so that $\eta=\eta_{-1}$ is mapped to itself.
Furthermore, it is easy to check that they reproduce the correct multiplier system of the Dedekind function,
which can be found for instance in \cite{Siegel-book}.
This example can be seen as a generalization of the situation described by Proposition \ref{prop-part}.

\subsection{Theta series identities}
\label{ap-ident}

In this work we encounter several theta series defined on lattices with a negative definite quadratic form.
In this section, we provide their definitions as well as some useful identities which they satisfy.

Let $\Lambda$ be such a lattice. Then we define
\bea
\vartheta^{\,\Lambda}_\mu(\tau)&=&\sum_{k\in\Lambda+\mu}\q^{-k^2} ,
\label{defthetaL2}
\\
\vartheta^{\,\Lambda,2}_{\mu_1,\mu_2}(\tau)&=&\sum_{k_a\in\Lambda+\mu_a, \ a\scriptscriptstyle{=1,2}}\q^{-k_1^2-k_2^2+k_1\cdot k_2} .
\label{defthetaL3}
\eea
These are theta series fitting the definition \eqref{Vignerasth} for $\Lat=2\Lambda$ and $\Lat=A_2\otimes\Lambda$, respectively,
with $\bfp=\bfptt=0$ and $\Phi=1$. Since both quadratic forms are even, both theta series are vector valued modular forms
of weights 1/2 and 1.
Note, however, that we do not restrict the residue classes $\mu$ and $\mu_i$
to run only over values corresponding to the independent components of such vector.
Rather they can take any rational values.
These theta series satisfy the following relations, which can be established by simple
changes of the summation variables:
\be
\vartheta^{\,\Lambda}_\mu=\vartheta^{\,\Lambda}_{-\mu},
\qquad
\vartheta^{\,\Lambda,2}_{\mu_1,\mu_2}=\vartheta^{\,\Lambda,2}_{\mu_2,\mu_1}=\vartheta^{\,\Lambda,2}_{-\mu_1,-\mu_2}
=\vartheta^{\,\Lambda,2}_{\mu_2-\mu_1,\mu_2}.
\label{identities-th}
\ee
Furthermore, making the change of variable $(k_1,k_2)\to (k-\ell,k+\ell)$, one diagonalizes the quadratic form in \eqref{defthetaL3}
which becomes $2k^2+6\ell^2$. However, this changes the determinant of the lattice by the factor of $4^{{\rm dim}( \Lambda)}$
indicating the necessity to introduce $2^{{\rm dim}( \Lambda)}$ glue vectors (see section \ref{subsec-lat}).
In other words, one can identify
\be
A_2\otimes\Lambda=\bcup\limits_{\mu'\in \Lambda/2\Lambda}\Bigl[ \(2\Lambda+\mu'\)\oplus \(2\Lambda+\mu'\)\Bigr],
\label{ident2lat}
\ee
which implies the following identity for theta series
\be
\vartheta^{\,\Lambda,2}_{\mu_1,\mu_2}(\tau)
=\sum_{\mu'\in \Lambda/2\Lambda}\vartheta^{\,\Lambda}_{\mu_2-\mu_1+\hf\mu'}(3\tau)\,\vartheta^{\,\Lambda}_{\mu_1+\mu_2+\hf\mu'}(\tau)
\label{ident2vth}
\ee
and a similar identity with exchange of $\mu_1$ and $\mu_2$.

Finally, we introduce two Jacobi theta series
\bea
\theta^{(2)}_\ell(\tau,z)&=&\sum_{k\in\IZ+\hf\ell}\q^{k^2} y^{2k} ,
\label{deftheta2z}
\\
\theta^{(3)}_{\ell}(\tau,z)&=&
\sum_{k_a\in\IZ+\frac{a}{3}\ell_a\ a\scriptscriptstyle{=1,2}}\q^{k_1^2+k_2^2-k_1k_2}\, y^{2(k_1+k_2)}.
\label{deftheta3z}
\eea
It is clear that they satisfy an analogue of the relation \eqref{ident2vth} which explicitly reads
\be
\theta^{(3)}_{\ell}(\tau,z)
=\sum_{\ell'=0,1}\theta^{(2)}_{\frac23\ell+\ell'}(3\tau,0)\,\theta^{(2)}_{\ell'}(\tau,2z).
\label{ident2theta}
\ee
This also implies a relation between the derivatives with respect to $z$ of these theta series.
Whereas their first derivatives evaluated at $z=0$ vanish, the second derivatives satisfy
\be
\p_z^2\theta^{(3)}_{\ell}(\tau,0)=
4\sum_{\ell'=0,1}\theta^{(2)}_{\frac23\ell+\ell'}(3\tau,0)\,\p_z^2\theta^{(2)}_{\ell'}(\tau,0).
\label{ident2thetader}
\ee

\subsection{Indefinite theta series: convergence}
\label{ap-converge}

Let us now turn to the case of the theta series with the quadratic form of indefinite signature.
The first problem here is to ensure their convergence that must be achieved by an appropriate choice of the kernel $\Phi(\bfx)$
which cannot be trivial anymore. The following theorem, which is a generalization of the ones presented in
\cite{Alexandrov:2017qhn,funke2017theta}, provides the simplest solution to this problem

\begin{theorem}
\label{th-conv}
Let the signature of the quadratic form be $(n,d-n)$ and
\be
\Phi(\bfx)=\prod_{i=1}^n\(\sgn(\bfv_{1,i}\bpt \bfx)-\sgn(\bfv_{2,i}\bpt\bfx)\).
\label{kerconverge}
\ee
Then the theta series \eqref{Vignerasth} is convergent provided:
\begin{enumerate}
\item
for all $i\in \Zv_{n}=\{1,\dots,n\}$,
$\bfv_{1,i}^2,\bfv_{2,i}^2> 0$;
\item
for any subset $\cI\subseteq \Zv_{n}$ and any set of $s_i\in \{1,2\}$, $i\in\cI$,
\be
\Delta_{\cI}(\{s_i\})\equiv \mathop{\det}\limits_{i,j\in \cI}(\bfv_{s_i,i}\bpt \bfv_{s_j,j})> 0;
\label{condDel}
\ee
\item
for all $\ell\in\Zv_n$ and any set of $s_i\in \{1,2\}$, $i\in\Zv_n\setminus\{\ell\}$,
\be
\bfv_{1,\ell\perp\{s_i\}}\bpt \bfv_{2,\ell\perp\{s_i\}}>0,
\label{condscpr}
\ee
where $_{\perp\{s_i\}}$ denotes the orthogonal projection on the subspace orthogonal to the span of
$\{\bfv_{s_i,i}\}_{i\in \Zv_n\setminus\{\ell\}}$.
\end{enumerate}
\label{th-converg}
\end{theorem}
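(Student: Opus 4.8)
The plan is to prove \emph{absolute} convergence, which reduces the whole problem to a geometric statement about the support of the kernel. First I would observe that $|{(-1)}^{\bfk\bpt\bfp}|=1$, $|\Phi|\le 2^n$, and $|y^{\bfptt\bpt\bfk}|=e^{2\pi(\Im z)\,\bfptt\bpt\bfk}$ is only of exponential-linear growth, whereas $|\q^{-\bfk^2/2}|=e^{\pi\tau_2\bfk^2}$ with $\tau_2>0$ fixed; moreover $\Phi$ is homogeneous of degree zero, so $\Phi(\sqrt{2\tau_2}(\bfk+\beta\bfptt))=\Phi(\bfk+\beta\bfptt)$ and the bounded shift by $\beta\bfptt$ is harmless. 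Since $\Phi(\bfx)\ne 0$ forces $\sgn(\bfv_{1,i}\bpt\bfx)\ne\sgn(\bfv_{2,i}\bpt\bfx)$ for every $i$, each point of the support lies, for a suitable sign vector $s\in\{1,2\}^n$, in the closed polyhedral cone
\be
C_s=\bigl\{\bfx:\ \bfv_{s_i,i}\bpt\bfx\ge 0,\ \bfv_{3-s_i,i}\bpt\bfx\le 0\ \ \text{for all }i\in\Zv_n\bigr\},
\ee
so the closed support is contained in the finite union $\bigcup_{s\in\{1,2\}^n}C_s$. Everything then comes down to the claim that $\bfx^2<0$ for every nonzero $\bfx\in C_s$ and every $s$: by homogeneity, continuity of $\bfx\mapsto\bfx^2/|\bfx|^2$ and compactness of $C_s\cap\{|\bfx|=1\}$ this yields $\bfk^2\le-\epsilon|\bfk|^2$ on the support, which turns \eqref{Vignerasth} into a convergent Gaussian sum.

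To prove that claim I would fix $s$ and exploit condition \eqref{condDel}. Applied to every subset $\cI'\subseteq\Zv_n$ with the induced signs, it says that the Gram matrix $G_{ij}=\bfv_{s_i,i}\bpt\bfv_{s_j,j}$ has all its principal minors positive, hence (Sylvester) is positive definite; therefore $\{\bfv_{s_i,i}\}_{i\in\Zv_n}$ is a basis of a \emph{maximal} positive-definite subspace $P_s$, and $P_s^\perp$ is negative definite. Writing $\bfx=\sum_i t_i\bfe_i+\bfx_\perp$ in terms of the dual basis $\{\bfe_i\}\subset P_s$ (with $\bfv_{s_j,j}\bpt\bfe_i=\delta_{ij}$) and $\bfx_\perp\in P_s^\perp$, the inequalities $\bfv_{s_i,i}\bpt\bfx\ge 0$ become $t_i\ge 0$ and
\be
\bfx^2=t^{\rm T}G^{-1}t-|\bfx_\perp|^2,
\ee
where here $|\cdot|^2=-(\,\cdot\,\bpt\,\cdot\,)$ is the positive-definite norm on $P_s^\perp$. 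The task reduces to showing that the remaining constraints $\bfv_{3-s_j,j}\bpt\bfx\le 0$, together with $t_i\ge 0$, force $|\bfx_\perp|^2>t^{\rm T}G^{-1}t$ whenever $(t,\bfx_\perp)\ne 0$.

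For $n=1$ this is the elementary estimate already used in \cite{Alexandrov:2017qhn,funke2017theta}: decomposing $\bfv_{3-s_1,1}=c_0\bfv_{s_1,1}+\bfr$ with $\bfr\in P_s^\perp$, condition \eqref{condscpr} gives $c_0=(\bfv_{1,1}\bpt\bfv_{2,1})/\bfv_{s_1,1}^2>0$ and condition~1 gives $|\bfr|^2<c_0^2\,\bfv_{s_1,1}^2$, so $\bfv_{3-s_1,1}\bpt\bfx\le 0$ reads $\bfr\bpt\bfx_\perp\le -t_1c_0$ and Cauchy--Schwarz on $P_s^\perp$ yields $|\bfx_\perp|\ge t_1c_0/|\bfr|>t_1/\sqrt{\bfv_{s_1,1}^2}$, i.e. $\bfx^2<0$. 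For general $n$ I would argue by induction on $n$, reducing to a lower-rank instance of the same statement by peeling off one of the indices; condition \eqref{condscpr}, which controls precisely the pairing of $\bfv_{1,\ell}$ and $\bfv_{2,\ell}$ \emph{after} projecting out all the remaining $\bfv_{s_i,i}$, is exactly what is needed along each boundary stratum of $C_s$ where several of the defining inequalities become simultaneously tight. I expect the main obstacle to be this inductive bookkeeping: the orthogonal projections distort the remaining inequalities, and organizing the nested projections and the dual-basis expansions of the ``wrong-sign'' vectors $\bfv_{3-s_j,j}$ so that condition \eqref{condscpr} delivers the correct sign at every stage is the technical heart of the argument. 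It is along those boundary rays that an estimate using only conditions 1--2 genuinely fails, which is the reason condition~3 is imposed.
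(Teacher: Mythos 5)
For context: the paper itself does not prove Theorem~\ref{th-conv}; it states it in Appendix~\ref{ap-converge} as ``a generalization of the ones presented in \cite{Alexandrov:2017qhn,funke2017theta}'' and relies on those references. So your proposal can only be judged against the standard argument in that literature, which is indeed the one you outline: reduce absolute convergence to the statement that the quadratic form is strictly negative on the closed cones $C_s$ supporting the kernel, then conclude by homogeneity and compactness. That reduction is correct as you present it (the bounded shift by $\beta\bfptt$ and the exponential-linear factor from $y^{\bfptt\bpt\bfk}$ are indeed harmless once $\bfk^2\le-\epsilon|\bfk|^2$ on the support), your use of condition~\eqref{condDel} via Sylvester's criterion to identify $\mathrm{span}\{\bfv_{s_i,i}\}$ as a maximal positive-definite subspace is right, and the $n=1$ case is complete and correct (modulo the trivial degenerate case $\bfr=0$, which you should dispose of explicitly before dividing by $|\bfr|$).

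The genuine gap is that the inductive step for general $n$ — which you yourself call ``the technical heart of the argument'' — is announced but not carried out, and it is not routine. Concretely: after peeling off an index $\ell$ and projecting onto $\bfv_{s_\ell,\ell}^\perp$, the constraint $\bfv_{3-s_j,j}\bpt\bfx\le 0$ for $j\ne\ell$ does \emph{not} become $\bfv_{3-s_j,j\perp}\bpt\bfx_\perp\le 0$; it acquires a cross term proportional to $t_\ell=\bfv_{s_\ell,\ell}\bpt\bfx\ge 0$ whose sign must be controlled, and this is exactly where one has to decide which constraints to sacrifice and which to keep on each boundary stratum. One must also verify that the projected system again satisfies the hypotheses of the theorem in one rank lower: condition~1 for the projected vectors is not among your stated hypotheses but follows from condition~\eqref{condDel} via $\bfv_{s,i\perp}^2=\Delta_{\{i,\ell\}}(\{s,s_\ell\})/\bfv_{s_\ell,\ell}^2$, condition~2 descends by the analogous determinant identity, and condition~\eqref{condscpr} is precisely engineered so that nested single projections compose to the $(n-1)$-fold projection appearing in its statement. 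None of this is written down, and without it the claim ``$\bfx^2<0$ on $C_s\setminus\{0\}$'' is established only for $n=1$. So the proposal is a correct and well-aimed plan with a verified base case, but the part that actually requires conditions 2 and 3 in their full strength remains unproven.
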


These conditions can be further relaxed allowing for null vectors, in which case the quantities on the l.h.s. of
\eqref{condDel} and \eqref{condscpr} also may vanish. However, the null vectors should satisfy an additional condition
that (up to an irrelevant overall factor) they belong to the lattice, namely $\bfv_{s,i}\in \Lat$
if $\bfv_{s,i}^2=0$ \cite{Alexandrov:2017qhn}.
And even in that case the theta series diverges at the points where $\bfv_{s,i}\bpt (\bfk+\beta\bfptt)=0$ for some
${\bfk}\in \Lat+\bfmu+\hf\bfp$.

\subsection{Generalized error functions}
\label{ap-generr}

The series constructed in the previous subsection are holomorphic since $\sqrt{2\tau_2}$
entering the argument of the kernel in \eqref{Vignerasth} drops out from the sign functions, but
they are not modular because the discontinuities of the signs spoil the Vign\'eras equation.
Nevertheless, there is a simple recipe to construct their completion \cite{Zwegers-thesis,Alexandrov:2016enp,Nazaroglu:2016lmr}.

To this end, we have to define the generalized error functions
introduced in \cite{Alexandrov:2016enp,Nazaroglu:2016lmr} (see also \cite{kudla2016theta}):
\bea
E_n(\cM;\vu)&=& \int_{\IR^n} \de \vu' \, e^{-\pi(\vu-\vu')^{\rm tr}(\vu-\vu')} \sign(\cM^{\rm tr} \vu'),
\label{generr-E}
\eea
where $\vu=(u_1,\dots,u_n)$ is $n$-dimensional vector, $\cM$ is $n\times n$ matrix of parameters,
and we used the shorthand notation $\sign(\vu)=\prod_{i=1}^n \sign(u_i)$.
The detailed properties of these functions can be found in \cite{Nazaroglu:2016lmr}.
This is however not enough since, to define a kernel of theta series, we need a function depending on a $d$-dimensional vector.
Such functions, called boosted generalized error functions, are defined by
\be
\Phi_n^E(\cV;\bfx)=E_n(\cB\bpt \cV;\cB\bpt\, \bfx).
\label{generrPhiME}
\ee
Here $\cV$ is $d\times n$ matrix which can be viewed as a collection of $n$ vectors,
$\cV=(\bfv_1,\dots,\bfv_n)$, and it is assumed that these vectors span a positive definite subspace,
i.e. $\cV^{\rm tr}\bpt\cV$ is positive definite, whereas
$\cB$ is $n\times d$ matrix whose rows define an orthonormal basis for this subspace.
It can be shown that $\Phi_n^E$ does not depend on $\cB$ and
solves the Vign\'eras equation \eqref{Vigdif} with $\lambda=0$.
Furthermore, $\Phi_n^E(\{\bfv_i\};\bfx)$ at large $\bfx$ reduces to
$\prod_{i=1}^n \sgn (\bfv_i\bpt\,\bfx)$.
Thus, to construct a completion of the theta series whose kernel is a combination of sign functions,
it is sufficient to replace each product of $n$ sign functions by $\Phi_n^E$ with matrix of parameters $\cV$
given by the corresponding vectors $\bfv_i$.

It is important that if one of vectors is null, it reduces the rank of the generalized error function.
Namely, for $\bfv_\ell^2=0$, one has
\be
\Phi_n^E(\{\bfv_i\};\bfx)=\sgn (\bfv_\ell\,\bpt\bfx)\,\Phi_{n-1}^E(\{\bfv_i\}_{i\in \Zv_{n}\setminus\{\ell\}};\bfx).
\label{Phinull}
\ee
In other words, for such vectors the completion is not required.

\subsection{Generalized Appell functions}
\label{ap-Appell}

The generalized Appell functions have been introduced in \cite{Manschot:2014cca} and shown to capture the generating functions
of stack invariants for Hirzebruch surfaces in a particular chamber of the moduli space corresponding to $J=D_1$
in the notations of Appendix \ref{sec-Fm}. These functions appear to be special cases of indefinite theta series and therefore
generically transform as (higher depth) mock modular forms with completions constructed following the recipe of the previous subsection.

In this paper we need only three instances of these functions:
\begin{itemize}
\item
the classical Appell--Lerch function \cite{Appell:1886,Lerch}
\be
A(\tau,u,v):=e^{\pi\I u}\sum_{n\in\IZ}\frac{(-1)^n\, \q^{\hf n(n+1)}\, e^{2\pi\I n v}}{1-e^{2\pi\I u}\, \q^n}\, ,
\label{clAp}
\ee
\item
the level-$m$ Appell function \cite{Semikhatov:2003uc}
\be
A_m(\tau,u,v):=\sum_{n\in\IZ}\frac{\q^{\frac{m}{2} n^2}\, e^{2\pi\I m n v}}{1-e^{2\pi\I u}\, \q^n}\, ,
\label{Alevel}
\ee
\item
the level-2 Appell function for lattice $A_2$
\be
A^{(A_2)}_2(\tau, u_1,u_2,v_1,v_2)=\sum_{n_1,n_2\in\IZ}\frac{\q^{2(n_1^2+n_2^2-n_1n_2)}\, e^{2\pi\I v_1(2n_1-n_2)}\, e^{2\pi\I v_2(2n_2-n_1)}}
{\(1-e^{2\pi\I u_1}\, \q^{2n_1-n_2}\)\(1-e^{2\pi\I u_2}\, \q^{2n_2-n_1}\)}\, .
\label{Ap2}
\ee
\end{itemize}
The classical Appell--Lerch function is known to satisfy the following periodicity relation
\be
\theta_1(\tau,v)\,A(\tau,u+z,v+z)-\theta_1(\tau,v+z)\,A(\tau,u,v)
=\frac{\I(\eta(\tau))^3\,\theta_1(\tau,u+v+z)\,\theta_1(\tau,z)}{\theta_1(\tau,u)\,\theta_1(\tau,u+z)}\, .
\label{Appellperiod}
\ee

\setcounter{theorem}{0}

\section{Proof of Theorem 1}
\label{ap_theorem}

In this appendix we prove Theorem \ref{th-main} from $\S$\ref{subsec-anyN}, which we copy here for convenience:
\begin{theorem}
The normalized generating functions and their modular completions are expressed through the combinations \eqref{combTh}
\be
g_{N,\mu}=\Theta_{N,\mu}(\tau,z;\{\Phi_n\}),
\qquad
\whg_{N,\mu}(\tau,z)=\Theta_{N,\mu}(\tau,z;\{\whPhi_n\}),
\label{whgNc}
\ee
where the kernels are given by
\bea
\Phi_n(\{\gama_i\})&=&\(\sum_{\cJ\subseteq \cI} e_{|\cI\cap\cJ|}\prod_{k\in \cI\setminus\cJ}\Bigl(-\sgn\(\nv\cdot b_k\)\Bigr)\)
\prod_{k\in \Zv_{n-1}\setminus \cI}\Bigl(\sgn(\Gamma_k)-\sgn\(\nv\cdot b_k\)\Bigr),
\label{kergc}
\\
\whPhi_n(\{\gama_i\})&=&\sum_{\cJ\subseteq \Zv_{n-1}} \Phi_{|J|}^E(\{ \bfv_{\ell}\}_{\ell\in \cJ};\bfx)
\prod_{k\in \Zv_{n-1}\setminus \cJ}\Bigl(-\sgn\(\nv\cdot b_k\)\Bigr).
\label{kerhgc}
\eea
\end{theorem}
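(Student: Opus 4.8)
The plan is to argue by induction on the rank $N$, taking as the inductive hypothesis that \eqref{whgN}--\eqref{kerhg} hold for all ranks $N'<N$ with $\phi_{N',\mu'}$ a holomorphic vector valued Jacobi form of the stated weight and index. The base case $N=1$ is immediate: the sums in \eqref{combTh} collapse to their $n=1$ terms, $\Phi_1=\whPhi_1=1$, and $g_{1,0}=\phi_{1,0}=1$ since $h_{1,0}$ of \eqref{h10anySref} normalizes itself. For the inductive step I would start from \eqref{exp-whhr}, divide by $h_{1,0}^N=\prod_i h_{1,0}^{N_i}$ and use \eqref{defgmu}: its $n=1$ term is the holomorphic seed $g_{N,\mu}$, while every $n\ge 2$ term involves only $g_{N_i,\mu_i}$ with $N_i<N$, to which the inductive hypothesis applies. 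Substituting $g_{N_i,\mu_i}=\Theta_{N_i,\mu_i}(\{\Phi_n\})$ and expanding produces a doubly nested sum: over a decomposition $\gama=\sum_i\gama_i$ weighted by $\Rv_n$, dressed for each $i$ by a further decomposition of $\gama_i$ weighted by $\Phi_{n_i}$ and a product of $\phi$'s of still lower rank.

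\emph{Resummation.} The first key step is to flatten this nested sum into a single sum of theta series of the type \eqref{newtheta}. Both $\Rv_n$ in \eqref{solRnr} and each lower kernel are organized over planar trees --- the former over Schr\"oder trees $\IT_n^{\rm S}$, the latter over the sign/error-function data carried by an ordering of its charges --- and grafting the lower trees onto the leaves of the outer Schr\"oder tree reproduces precisely the set of Schr\"oder trees on the total set of $\bar n=\sum_i n_i$ leaves. One must check that under this grafting the quadratic form $Q_n$ of \eqref{defQlr}, the phase $\y^{\sum_{i<j}\gamma_{ij}}$, the weights $2^{1-n}$, the lattice and residue-class sums, and the symmetrizations $\Sym$ all combine correctly, and that the functions $\sEp_v$, $\sEf_v$ built from \eqref{redef-cErf}--\eqref{rel-gEf-zero} and attached to the internal vertices recombine with the sign factors of the lower kernels. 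Concretely, each $\Phi_{n_i}$ contributes factors $\sgn(\Gamma_k)-\sgn(\nv\cdot b_k)$ with $b_k$ as in \eqref{defbk}, while $\sEf_v$ contributes the plain sign $\sgn(\Gamma_k)$; these $\sgn(\Gamma_k)$'s cancel pairwise, so that after grafting the surviving sign factors are exactly one $\sgn(\nv\cdot b_k)$ for each $k\in\Zv_{\bar n-1}$ together with the generalized error functions inherited from the $\sEp_v$. I expect this step, together with the degenerate case below, to be the main obstacle; the bookkeeping is of the same kind as in the flow-tree/stack analyses, now carried out with the explicit data of \eqref{Erefsim}--\eqref{vectors}.

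\emph{Identification of the kernels and modularity.} Collecting the output, the $n\ge 2$ part becomes $\sum_{\bar n\ge 2}2^{1-\bar n}\sum_{\vec N}\sum_{\vec\mu}\vartheta^{(\vec N)}_{\mu,\vec\mu}(\tau,z;\whPhi_{\bar n})\prod_i\phi_{N_i,\mu_i}$, with $\whPhi_{\bar n}$ obtained by expanding $\prod_k\bigl(\sgn(\Gamma_k)-\sgn(\nv\cdot b_k)\bigr)$ as in \eqref{kerg-no0} and replacing each monomial of $\bfv_\ell$-signs by the boosted generalized error function $\Phi^E_{|\cJ|}(\{\bfv_\ell\}_{\ell\in\cJ};\bfx)$ of \eqref{generrPhiME}; by the rank-reduction property \eqref{Phinull} the null directions $\bfw_{k,k+1}$ of \eqref{nullvec} survive as plain signs, which is exactly \eqref{kerhg}. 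Together with the leftover $n=1$ term this identifies the right-hand side as $\Theta_{N,\mu}(\tau,z;\{\whPhi_n\})$, i.e. proves $\whg_{N,\mu}=\Theta_{N,\mu}(\tau,z;\{\whPhi_n\})$. That each $\vartheta^{(\vec N)}_{\mu,\vec\mu}(\whPhi_{\bar n})$ is a (non-holomorphic) Jacobi form follows from Vign\'eras' theorem, since the $\Phi^E$ solve the Vign\'eras equation \eqref{Vigdif} and the null vectors $\bfw_{k,k+1}\in\Lat$ also preserve it, while convergence is ensured by Theorem \ref{th-converg} in its form allowing lattice null vectors, precisely because each $\sgn(\Gamma_k)$ comes paired with $-\sgn(\nv\cdot b_k)$ and $\nv\cdot c_1>0$. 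Since $\whg_{N,\mu}$ is modular (by the construction of \cite{Alexandrov:2019rth} reproduced in \eqref{exp-whhr}) and all the $\bar n\ge 2$ pieces are, the residual $n=1$ term $\phi_{N,\mu}$ is a holomorphic vector valued Jacobi form of the weight and index stated below \eqref{defgmu}, which closes the hypothesis on $\phi$.

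\emph{Holomorphic limit and degenerate configurations.} It remains to recover $g_{N,\mu}=\Theta_{N,\mu}(\{\Phi_n\})$. Taking the $\tau_2\to\infty$ asymptotics at fixed charges --- which commutes with the equality $\whg_{N,\mu}=\Theta_{N,\mu}(\{\whPhi_n\})$ --- one has $\Phi^E_m(\{\bfv_\ell\};\bfx)\to\prod_\ell\sgn(\bfv_\ell\bpt\bfx)$ as in \eqref{redef-cErf}, so that on the locus where all $\Gamma_k\ne 0$ the holomorphic part of $\whPhi_n$ is $\prod_k\bigl(\sgn(\Gamma_k)-\sgn(\nv\cdot b_k)\bigr)=\Phi_n$, forcing $g_{N,\mu}=\Theta_{N,\mu}(\{\Phi_n\})$ there. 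On the degenerate configurations $\Gamma_k=0$ for $k$ in a subset $\cI$, the limit $\sEf_v$ is not a product of signs but carries the Taylor coefficients $e_m$ of $\mathrm{arctanh}$ of \eqref{valek}--\eqref{rel-gEf-zero}; propagating these through the grafting --- a vanishing $\Gamma_k$ may be inherited from a lower-rank factor where it had already been resolved --- yields the first factor $\sum_{\cJ\subseteq\cI}e_{|\cJ|}\prod_{k\in\cI\setminus\cJ}\bigl(-\sgn(\nv\cdot b_k)\bigr)$ of \eqref{kerg}, the sum over $\cJ\subseteq\cI$ recording which degenerate signs are resolved by an error function and which by the $\nv$-companion. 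Matching these weights reduces to a finite identity among the $e_m$; verifying it, together with the tree grafting of the resummation step, is where I expect the real work of the proof to lie, the remainder being Vign\'eras' theorem and the large-argument asymptotics of generalized error functions.
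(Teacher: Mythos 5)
Your strategy coincides with the paper's own: substitute the ansatz into the completion formula \eqref{exp-whhr}, collapse the nested Schr\"oder-tree sums into a single kernel, identify it with \eqref{kerhg} using the rank-reduction property \eqref{Phinull} of the null directions, and then invoke Vign\'eras' theorem and the convergence criterion. However, as written the proposal is a roadmap rather than a proof, and the two places where you yourself locate ``the real work'' are precisely where the paper's argument is nontrivial and where your text supplies no argument. The central combinatorial claim --- that after grafting, the $\sgn(\Gamma_k)$'s ``cancel pairwise'' leaving one $-\sgn(\nv\cdot b_k)$ per cut, with the $e_{|\cJ|}$ weights on the degenerate locus $\Gamma_k=0$ assembling into the first factor of \eqref{kerg} --- is exactly the paper's Lemma that $\Phi_n^{(0)}=\prod_{k=1}^{n-1}\bigl(-\sgn(\nv\cdot b_k)\bigr)$. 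The paper proves it by an induction exploiting the recursion $\Phi_n^{(0)}=\Phi_n-\sum_{m\ge 2}\sEf_m\prod_i\Phi_{n_i}^{(0)}$, in which the sum over nonempty subsets $\cJ\subseteq\Zv_{n-1}$ cancels the full kernel except for the contribution one would assign to $\cJ=\emptyset$; nothing in your proposal substitutes for this cancellation, and the pairwise-cancellation picture you sketch does not by itself account for the $e_m$ bookkeeping when several $\Gamma_k$ vanish simultaneously.

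Two further steps are asserted where the paper has to work. First, modularity of $\vartheta^{(\vec N)}_{\mu,\vec\mu}$ is not a direct application of Vign\'eras' theorem, because the sum in \eqref{newtheta} runs over the constrained sublattice $\sum_i q_i=\mu-\tfrac{N}{2}c_1$; the paper either contracts with a Siegel--Narain theta series to lift the constraint or analyzes the quotient lattice directly, and in the latter case must reconcile the lattice determinant $n_0^{-2}N\prod_i N_i$ with the number of components of $(\mu,\vec\mu)$ when $n_0=\gcd(\vec N)>1$, which is where Proposition \ref{prop-part} enters. Your proposal does not mention the constraint at all. Second, convergence requires verifying condition \eqref{condDel} of Theorem \ref{th-conv}, i.e.\ the positivity of the Gram determinants $\Delta_{\cI}$ of the vectors $\bfv_k$; the paper evaluates these explicitly as $(c_1^2)^mN^{2m-1}\sN_{k_1}(N-\sN_{k_m})\prod_i(\sN_{k_{i+1}}-\sN_{k_i})$, whereas you only cite the pairing of each $\sgn(\Gamma_k)$ with $-\sgn(\nv\cdot b_k)$, which addresses a different hypothesis of that theorem. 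In short: right route, but the three load-bearing verifications (the grafting lemma with its $e_m$ identity, the constrained-lattice modularity, and the Gram-determinant positivity) are named rather than carried out.
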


To prove this theorem, one needs to show three facts:

i) the kernels \eqref{kergc} and \eqref{kerhgc} define convergent theta series \eqref{newtheta};

ii) the substitution of $g_{N,\mu}$ into the formula \eqref{exp-whhr} for the completion is consistent
with the result for $\whg_{N,\mu}$ given here;

iii) $\whg_{N,\mu}$ transforms as a vector valued Jacobi form.

\subsection*{Convergence}

The kernel $\Phi_n$ determines the theta series \eqref{newtheta} which is an indefinite theta series with lattice
$\Lat=\Bigl[\mathop{\oplus}\limits_{i=1}^n N_i\Lambda_{S}\Bigr]/(N\Lambda_S)$ and quadratic
form $-Q_n$ \eqref{defQlr} of signature $(n-1,(n-1)(b_2-1))$.
For generic charges for which all $\Gamma_k$ are non-vanishing, it takes the simple form \eqref{kerg-no0}.
It coincides with the kernel \eqref{kerconverge} where the two sets of vectors are $\bfv_k$ \eqref{vectors}
and $\bfw_{k,k+1}$ \eqref{nullvec}. Hence, to prove the convergence, one has to check whether
these vectors fulfill the three conditions of Theorem \ref{th-conv}.
Note that since $\sum_{i=1}^n N_i \bfv_{\ell,i}^\alpha=\sum_{i=1}^n N_i \bfw_{k\ell,i}^\alpha=0$, the first term in
\eqref{defQlr} does not contribute to the scalar products of these vectors, and therefore
one can use the bilinear form \eqref{biform} for their evaluation.

The first condition of Theorem \ref{th-conv} holds since $\bfv_k^2=N\sN_k(N-\sN_k)c_1^2>0$, whereas $\bfw_{k,k+1}$ are null
and belong to $\Lat/(N_k N_{k+1})$ by assumptions about $\nv$.

To check the second condition, we evaluate
\be
\begin{split}
& \bfv_k\bpt\bfv_\ell=N\sN_k(N-\sN_\ell)c_1^2 \qquad \mbox{for } k<\ell,
\\
\bfv_k\bpt \bfw_{\ell,\ell+1}&=\delta_{k\ell} N N_k N_{k+1} \nv\cdot c_1,
\qquad
\bfw_{k,k+1}\bpt\bfw_{\ell,\ell+1}=0.
\end{split}
\label{scpr}
\ee
Since $\bfw_{k,k+1}$ are null and orthogonal to all other relevant vectors except $\bfv_k$,
the determinants of the Gram matrices involving them vanish.
Thus, it remains to check the positivity of
\be
\Delta_{\cI}\equiv \mathop{\det}\limits_{k,\ell\in \cI}(\bfv_k\bpt \bfv_\ell)
\label{condDelI}
\ee
for any subset $\cI\subseteq \Zv_{n-1}$. Denoting $m=|\cI|$ and ordering $k_1<\cdots <k_m$ the elements of the subset,
one finds
\bea
\Delta_{\cI}&=&(Nc_1^2)^{m}\left| \begin{array}{cccc}
\sN_{k_1}(N-\sN_{k_1}) & \sN_{k_1}(N-\sN_{k_2}) & \cdots & \sN_{k_1}(N-\sN_{k_m})
\\
\sN_{k_1}(N-\sN_{k_2}) & \sN_{k_2}(N-\sN_{k_2}) &   & \vdots
\\
\vdots &   & \ddots & \sN_{k_{m-1}}(N-\sN_{k_m})
\\
\sN_{k_1}(N-\sN_{k_m}) & \cdots & \sN_{k_{m-1}}(N-\sN_{k_m}) & \sN_{k_m}(N-\sN_{k_m})
\end{array}\right|
\nn\\
&=&
(Nc_1^2)^{m}\sN_{k_1}(N-\sN_{k_m})\left| \begin{array}{cccc}
(N-\sN_{k_1}) & (N-\sN_{k_2}) & \cdots & 1
\\
\sN_{k_1}(N-\sN_{k_2}) & \sN_{k_2}(N-\sN_{k_2}) &   & \vdots
\\
\vdots &   & \ddots & \sN_{k_{m-1}}
\\
\sN_{k_1}(N-\sN_{k_m}) & \cdots & \sN_{k_{m-1}}(N-\sN_{k_m}) & \sN_{k_m}
\end{array}\right|
\nn\\
&=&
(c_1^2)^m N^{2m-1}\sN_{k_1}(N-\sN_{k_m})\left| \begin{array}{ccccc}
0 & 0 &  \cdots & 1
\\
\sN_{k_1}-\sN_{k_2} &\quad 0 \quad&  & \vdots
\\
\vdots &  &  \ddots & \sN_{k_{m-1}}
\\
\sN_{k_1}-\sN_{k_m} & \cdots &  \sN_{k_{m-1}}-\sN_{k_m} & \sN_{k_m}
\end{array}\right|
\nn
\eea
\bea \hspace{-6cm}
&=&(c_1^2)^m N^{2m-1}\sN_{k_1}(N-\sN_{k_m})\prod_{i=1}^{m-1}(M_{k_{i+1}}-M_{k_i}),
\eea
which is indeed positive since $M_{k_{i+1}}>M_{k_i}$.

Finally, the third condition crucially simplifies due to the orthogonality properties
of $\bfw_{k,k+1}$, which ensure that the orthogonal projections in \eqref{condscpr} do not affect the scalar product.
Hence, it reduces to the one evaluated in \eqref{scpr} and is equal to $N N_\ell N_{\ell+1} \nv\cdot c_1$.
This is positive due to the condition \eqref{c0c1} on the null vector.

Once the convergence is shown for generic charges, the configurations with vanishing $\Gamma_k$ can also be taken into account.
Indeed, since $c_1$ is timelike, i.e. $c_1^2>0$, each condition $\Gamma_k=0$ fixes one of the timelike components
of the lattice vector of charges. Hence, after imposing $m$ such conditions, one remains with a lattice of signature
$(n-m-1,(n-1)(b_2-1))$. On the other hand, the second factor in the kernel \eqref{kergc} has exactly
$n-m-1$ factors and the form suitable for Theorem \ref{th-conv}. Since it is constructed from the same vectors as above,
the conditions of the theorem are again satisfied, which proves the convergence of $g_{N,\mu}$.

Finally, the convergence of the completion $\whg_{N,\mu}$ follows from the convergence
of the theta series $\vartheta^{(\vec N)}_{\mu,\vec\mu}(\tau,z;\Phi_n)$ with the kernel \eqref{kerg-no0}
and the properties of the generalized error functions \cite{Alexandrov:2016enp,Nazaroglu:2016lmr}
ensuring the convergence of any completion constructed by the recipe of Appendix \ref{ap-generr}.

\subsection*{Functional form}

Substituting the expression \eqref{whgNc} for $g_{N,\mu}$ into the formula for the completion \eqref{exp-whhr},
combining the sums over partitions, and taking into account that
\be
\sum_{i<j} \gamma_{ij}=c_1\cdot\sum\limits_{i=1}^n \ptt_i\, q_i,
\label{ident-gammaij}
\ee
it is easy to see that $\whg_{N,\mu}$ can be written as in \eqref{whgNc}
with the following kernels
\be
\whPhi_n(\{\gama_i\})=\sum_{n_1+\cdots n_m=n}\Rv_m(\{\gama'_i\},\tau_2,\beta)\,
\prod_{i=1}^m  \Phi_{n_i}(\gama_{j_{k-1}+1},\dots,\gama_{j_k}),
\ee
where $j_k=\sum_{i=1}^k n_i$ and the first factor $\Rv_m$ depends on $m$ charges defined by $\gama'_i=\sum_{i=j_{k-1}+1}^{j_k}\gama_i$.
Furthermore, cutting the tree $T$, appearing in the expression \eqref{solRnr} for $\Rv_m$, along the edges attached to the root vertex,
these kernels can be rewritten as
\be
\whPhi_n(\{\gama_i\})=\Phi_{n}(\{\gama_i\})+\sum_{n_1+\cdots n_m=n\atop m\ge 2}
\sEp_m(\{\gama'_i\},\tau_2,\beta)\,
\prod_{i=1}^m \Phi_{n_i}^{(0)}(\gama_{j_{k-1}+1},\dots,\gama_{j_k}),
\label{whg-fN}
\ee
where we introduced
\be
\Phi_n^{(0)}(\{\gama_i\})=\sum_{n_1+\cdots n_m=n}
\[\sum_{T\in\IT_m^{\rm S}}(-1)^{n_T}\prod_{v\in V_T}\sEf_{v}(\{\gama'_i\})\]
\prod_{i=1}^m\Phi_{n_i}(\gama_{j_{k-1}+1},\dots,\gama_{j_k}).
\label{deffN}
\ee
Then we have
\begin{lemma} For $\Phi_n$ given by \eqref{kergc}, it holds
\be
\Phi_n^{(0)}=\prod_{k=1}^{n-1} \Bigl(-\sgn\(\nv\cdot b_k\)\Bigr).
\label{expr-fTh}
\ee
\end{lemma}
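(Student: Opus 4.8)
The plan is to recognize the signed Schr\"oder-tree sum in \eqref{deffN} as the inverse, under ``ordered-partition composition'', of the sequence $\sEf$, and thereby reduce the Lemma to an elementary identity of products of sign functions.

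First I would set up the algebra. For sequences $A=(A_n)_{n\ge1}$ of functions of $n$ charges, define the ordered-partition composition
\[
(A\star B)_n(\{\gama_i\})=\sum_{n_1+\cdots+n_m=n}A_m(\{\gama'_i\})\prod_{i=1}^m B_{n_i}\bigl(\gama_{j_{i-1}+1},\dots,\gama_{j_i}\bigr),
\]
where $j_i=n_1+\cdots+n_i$ and $\gama'_i$ is the sum of the charges in the $i$-th block. One checks that $\star$ is associative, is linear in the left slot, has two-sided unit $\unit$ (with $\unit_1=1$, $\unit_n=0$ for $n\ge 2$), and that any $A$ with $A_1\ne0$ admits a two-sided $\star$-inverse. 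Since $\sEf_1=1$ by \eqref{rel-gEf-zero}, and since resolving a Schr\"oder tree at its root (which by definition has at least two children) yields, for the bracket $B_m:=\sum_{T\in\IT_m^{\rm S}}(-1)^{n_T}\prod_{v\in V_T}\sEf_v$ in \eqref{deffN}, the recursion
\[
B_m=\sum_{k\ge2}\sum_{m_1+\cdots+m_k=m}(-\sEf_k)\prod_{i=1}^k B_{m_i},\qquad B_1=1
\]
(with the same block-charge assignments as in $\star$), one obtains after rearrangement $\sEf\star B=\unit$, i.e.\ $B$ is the $\star$-inverse $\sEf^{-1}$. Hence \eqref{deffN} reads $\Phi^{(0)}=\sEf^{-1}\star\Phi$, and, applying $\sEf\star(\cdot)$ and using associativity, the Lemma is equivalent to
\[
\Phi=\sEf\star\tilde\Phi,\qquad\tilde\Phi_n(\{\gama_i\}):=\prod_{k=1}^{n-1}\bigl(-\sgn(\nv\cdot b_k)\bigr).
\]

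To prove this equivalent identity I would expand both sides over subsets of $\Zv_{n-1}$. An ordered partition $n=n_1+\cdots+n_m$ is the datum of the cut set $S=\{j_1<\cdots<j_{m-1}\}\subseteq\Zv_{n-1}$. A short computation from \eqref{gam12} and \eqref{defGam} shows that the quantity $\Gamma$ built from the $m$ block charges at position $p$ equals $\Gamma_{j_p}$, and, since $b_k$ in \eqref{defbk} depends only on the adjacent charges $\gama_k,\gama_{k+1}$, the $\tilde\Phi$ of a block is $\prod(-\sgn(\nv\cdot b_k))$ over the edges $k$ internal to that block. Feeding this into $\sEf\star\tilde\Phi$ and using \eqref{rel-gEf-zero} for the $\sEf$ attached to $S$, the right-hand side becomes
\[
\sum_{S\subseteq\Zv_{n-1}}e_{|S\cap\cI|}\Bigl(\prod_{k\in S\setminus\cI}\sgn(\Gamma_k)\Bigr)\prod_{k\in\Zv_{n-1}\setminus S}\bigl(-\sgn(\nv\cdot b_k)\bigr),\qquad\cI=\{k:\Gamma_k=0\}.
\]
On the other hand, expanding the second factor of \eqref{kergc} over subsets $A\subseteq\Zv_{n-1}\setminus\cI$ (the edges that keep $\sgn(\Gamma_k)$) and the first factor over subsets $\cJ\subseteq\cI$ (weight $e_{|\cJ|}$, with $-\sgn(\nv\cdot b_k)$ carried by $\cI\setminus\cJ$), the substitution $S=A\cup\cJ$ — the tautological splitting of a subset $S\subseteq\Zv_{n-1}$ into $A=S\setminus\cI$ and $\cJ=S\cap\cI$ — turns \eqref{kergc} into exactly the same sum: the $\sgn(\Gamma_k)$-edges are $S\setminus\cI$, the $(-\sgn(\nv\cdot b_k))$-edges are $(\cI\setminus\cJ)\cup\bigl((\Zv_{n-1}\setminus\cI)\setminus A\bigr)=\Zv_{n-1}\setminus S$, and $e_{|\cJ|}=e_{|S\cap\cI|}$. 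This establishes $\Phi^{(0)}=\tilde\Phi$; the same result can alternatively be reached by a direct induction on $n$, peeling off the root term of \eqref{deffN}.

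The main obstacle is the first step: verifying rigorously that the tree sum in \eqref{deffN} is the $\star$-inverse of $\sEf$. This requires careful bookkeeping of which partial sum of charges decorates each internal vertex of a Schr\"oder tree, so that $\star$ is genuinely compatible with the tree recursion, together with a consistent treatment of the zero modes: the convention $\sgn(0)=0$ and the coefficients $e_m$ of \eqref{valek} enter $\sEf$ through \eqref{rel-gEf-zero} and enter $\Phi$ through the prefactor $e_{|\cI\cap\cJ|}$ of \eqref{kergc} in slightly different guises, and it is exactly their matching — encoded in the elementary set-splitting identity above — that produces the recursive cancellation of all the $\sgn(\Gamma_k)$. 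Once that reduction is in place, the rest is the one-line combinatorial computation.
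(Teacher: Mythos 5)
Your proposal is correct and is, in substance, the paper's own proof: identifying the Schr\"oder-tree sum as the $\star$-inverse of $\sEf$ is exactly the paper's recursion \eqref{recurs-fN}, and your verification of $\Phi=\sEf\star\tilde\Phi$ via the cut-set identification of ordered partitions with subsets $S\subseteq\Zv_{n-1}$ and the splitting $S=A\cup\cJ$ is precisely the cancellation carried out in \eqref{tempker1}, where the claimed product survives as the missing $\cJ=\emptyset$ term. The only difference is packaging --- compositional inversion in the ordered-partition algebra versus the paper's induction on $n$ --- and, as you note yourself, these are interchangeable.
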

\begin{proof}
We prove this Lemma by induction. For $n=2$ the definition \eqref{deffN} gives
\be
\Phi_2^{(0)}=\Phi_{2}-\sgn(\gamma_{12})=-\sgn\(\nv\cdot b_k\),
\ee
which indeed agrees with \eqref{expr-fTh}.
Let us now assume that the statement holds for all ranks up to $n-1$.
The key for the proof is the observation that the definition \eqref{deffN} implies
\be
\Phi_n^{(0)}=\Phi_{n}-\sum_{n_1+\cdots n_m=n\atop m\ge 2}
\sEf_{m}\,
\prod_{i=1}^m \Phi_{n_i}^{(0)}.
\label{recurs-fN}
\ee
Using the induction hypothesis, one can replace all $\Phi_{n_i}^{(0)}$ by \eqref{expr-fTh}.
Substituting also $\Phi_n$ from \eqref{kergc} and $\sEf_{m}$ from \eqref{rel-gEf-zero},
and rewriting the sum over splittings $n=n_1+\cdots n_m$ as the sum over subsets $\cJ\subseteq \Zv_{n-1}$, one obtains
\be
\begin{split}
\Phi_n^{(0)}=&\, \(\sum_{\cJ\subseteq \cI}e_{|\cJ|}\prod_{k\in \cI\setminus\cJ}\Bigl(-\sgn\(\nv\cdot b_k\)\Bigr)\)
\prod_{k\in \Zv_{n-1}\setminus \cI}\Bigl(\sgn(\Gamma_k)-\sgn\(\nv\cdot b_k\)\Bigr)
\\
&\, -\sum_{\emptyset\ne \cJ\subseteq \Zv_{n-1}} e_{|\cI\cap\cJ|}
\prod_{k\in \cJ\setminus \cI}\sgn(\Gamma_k)
\prod_{k\in \Zv_{n-1}\setminus \cJ}\Bigl(-\sgn\(\nv\cdot b_k\)\Bigr).
\end{split}
\label{tempker1}
\ee
It is straightforward to see that the second term would cancel the first if the empty set
contributed to the sum over $\cJ$. This implies that the kernel $\Phi_n^{(0)}$ coincides with the one given in
the statement of the Lemma \eqref{expr-fTh}.
\end{proof}

Using \eqref{expr-fTh} in \eqref{whg-fN} and substituting there the explicit expression for $\sEp_n$ and
\eqref{kergc} for $\Phi_n$, one obtains for $\whPhi_n$ the following expression
\bea
\whPhi_n&=& \(\sum_{\cJ\subseteq \cI}e_{|\cJ|}\prod_{k\in \cI\setminus\cJ}\Bigl(-\sgn\(\nv\cdot b_k\)\Bigr)\)
\prod_{k\in \Zv_{n-1}\setminus \cI}\Bigl(\sgn(\Gamma_k)-\sgn\(\nv\cdot b_k\)\Bigr)
\label{tempker2}\\
&&
+\sum_{\emptyset\ne\cJ\subseteq \Zv_{n-1}}\( \Phi_{|J|}^E(\{ \bfv_{\ell}\}_{\ell\in \cJ};\bfx)
-e_{|\cI\cap \cJ| }\prod_{k\in \cJ\setminus\cI }\sgn(\Gamma_k)\)
\prod_{k\in \Zv_{n-1}\setminus \cJ}\Bigl(-\sgn\(\nv\cdot b_k\)\Bigr).
\nn
\eea
Here the origin of the terms is similar to \eqref{tempker1}:
the first term is equal to $\Phi_n$ and the sum over subsets $\cJ$ in the second corresponds to
the sum over splittings $n=n_1+\cdots n_m$ in the second term in \eqref{whg-fN}.
As was already noticed, the first term plus the part of the second proportional to $e_{|\cI\cap \cJ| }$
combine to give $\Phi_n^{(0)}$, which is then can be included into the remaining term as
the contribution of $\cJ=\emptyset$.
As a result, one reproduces the kernel \eqref{kerhgc} in the statement of the theorem.

\subsection*{Modularity}

The final step of the proof is to show that $\whg_{N,\mu}$ defined by \eqref{whgNc} transforms as a vector valued Jacobi form
of weight $\frac12(N-1)b_2(S)$ and index $ -\frac16\, (N^3-N)K_S^2$.
Given the modularity of all $\phi_{N_i,\mu_i}$, it is clear
that this is true if the theta series $\vartheta^{(\vec N)}_{\mu,\vec\mu}(\tau,z;\whPhi_n)$ transform as
vector valued Jacobi forms of the following weight and index
\be
w_\vartheta(\vec N)=\frac{1}2\,(n-1)b_2(S),
\qquad
m_\vartheta(\vec N)= -\( \sum_{i<j<k}N_iN_jN_k +\hf\sum_{i\ne j}N_i^2N_j\)K_S^2
\label{wm-theta}
\ee
for all splittings $N=N_1+\cdots N_n$.

The easiest way to see that this is indeed the case is to contract $\vartheta^{(\vec N)}_{\mu,\vec\mu}$
with the Seigel-Narain theta series
\be
\theta^{\rm SN}_{N,\mu} (\tau)=\sum_{q\in N\Lambda_S+\mu+\frac{N}{2} K_S}\!\!
(-1)^{K_S\cdot q}\, \q^{-\tfrac{1}{2N}\(q^2-\frac{(c_1\cdot q)^2}{c_1^2}\)}
\bar\q^{-\frac{(c_1\cdot q)^2}{2N c_1^2}},
\ee
which is a (non-holomorphic) modular form of weight $(\hf (b_2-1), \hf)$ since it is equal to the theta series
$\vartheta_{\mu}(\tau;N K_S,0,\Phi^{\rm SN})$ \eqref{Vignerasth} for the lattice $\Lat=N\Lambda_S$ with the kernel
\be
\Phi^{\rm SN}(x)=e^{-\pi \, \frac{(c_1\cdot x)^2}{2N c_1^2}}
\ee
satisfying the Vign\'eras equation \eqref{Vigdif} with $\lambda=-1$.
Thus, we consider
\be
\theta^{(\vec N)}_{\vec\mu}(\tau,z)=
\sum_{\mu \in \Lambda_S/N\Lambda_S}\theta^{\rm SN}_{N,\mu} (\tau)\, \vartheta^{(\vec N)}_{\mu,\vec\mu}(\tau,z;\whPhi_n).
\ee
This contraction removes the condition on charges in the definition of
$\vartheta^{(\vec N)}_{\mu,\vec\mu}$ \eqref{newtheta} which crucially simplifies the resulting theta series.
As a result, it also belongs to the class of theta series \eqref{Vignerasth} with the following
data\footnote{Alternatively, one can take $\Lat=\mathop{\oplus}\limits_{i=1}^n N_i\Lambda_{S}$
with the inverse bilinear form and
the vectors obtained from those in \eqref{datathetaN} by contraction with the matrix $\diag(N_i) C_{\alpha\beta}$.}
\be
\begin{split}
\Lat &=\mathop{\oplus}\limits_{i=1}^n \Lambda_{S}\quad \mbox{ with bilinear form \eqref{biform}},
\\
\bfmu_i^\alpha= \tfrac{1}{N_i}\,C^{\alpha\beta}\mu_{i,\beta}, &
\qquad
\bfp_i^\alpha=-c_1^\alpha,
\qquad
\bfptt_i^\alpha=\ptt_i c_1^\alpha,
\qquad
\Phi=\Phi^{\rm SN}\whPhi_n.
\end{split}
\label{datathetaN}
\ee
Since all vectors $\bfv_k$, $\bfw_{k\ell}$ entering the definition of $\whPhi_n$
are orthogonal to any vector with components independent of the index $i$,
which is the case for the one providing the embedding of $\Phi^{\rm SN}$ into $\Lat$,
the action of the Vign\'eras operator on $\Phi$ factorizes.
Because both factors satisfy the Vign\'eras equation with $\lambda=-1$ and $\lambda=0$, respectively, $\Phi$ does so with $\lambda=-1$.
Therefore, $\theta^{(\vec N)}_{\vec\mu}$ is a vector valued Jacobi form of weight $(\frac{1}{2} (nb_2-1),\hf)$ and index
$-\hf\bfptt^2=m_\vartheta(\vec N)$. Subtracting the weight of the Seigel-Narain theta series, one finds
for $\vartheta^{(\vec N)}_{\mu,\vec\mu}$ precisely the required weight and index \eqref{wm-theta}.

Of course, such analysis can be performed directly for $\vartheta^{(\vec N)}_{\mu,\vec\mu}$, but
it is more complicated due to the condition on charges in the sum in \eqref{newtheta},
which results in that the relevant lattice is obtained from \eqref{datathetaN} by factoring out the diagonal $\Lambda_S$.
Nevertheless, one question is worth clarification: how does the reduction of the lattice by the condition
$\sum_{i=1}^n q_i=\mu+\frac{N}{2} K_S$
lead to the increase of the vector valued index $\bfmu$ to $(\mu,\vec \mu)$?

Substituting the decomposition of charges \eqref{quant-q}, the condition can be rewritten as
\be
\sum_{i=1}^n   N_i \, \epsilon_i^{\alpha}
=C^{\alpha\beta}\(\mu_\beta-\sum_{i=1}^n\mu_{i,\beta}\)\equiv \hmu^\alpha.
\label{cond-q}
\ee
Let $n_0=\gcd (\vec N)$. Then the constraint \eqref{cond-q} has solutions only for
$\mu_\beta=\sum_{i=1}^n\mu_{i,\beta} \mod n_0$,
in which case it can be written as
\be
\eps_i^\alpha=\sum_{a=0}^{n-1}\cM_i^a\veps_a^\alpha+\frac{1}{n_0}\,\cM_i^n\hmu^\alpha,
\label{soleps}
\ee
where $\veps_a^\alpha\in \IZ$ span the lattice
defining our theta series and the matrix $\cM\in SL(n,\IZ)$ satisfies
\be
\sum_{i=1}^n N_i\cM_i^a=0,
\qquad
\sum_{i=1}^n N_i\cM_i^n=n_0.
\label{matM}
\ee
Substituting \eqref{quant-q} and \eqref{soleps} into the quadratic form \eqref{defQlr}, one obtains
\be
-Q_n(\{\gama_i\})
=\sum_{i<j}\frac{N_iN_j}{N}\, (\eps_i-\eps_j)^2 +\cdots=\sum_{i,a,b} N_i \cM_i^a\cM_i^b \,\veps_a\cdot \veps_b +\cdots,
\label{defQeps}
\ee
where the dots denote terms linear and constant in $\veps_a$. We are interested in the determinant of the lattice which
is supposed to give the dimension of the representation where the theta series takes values.
It is given by the determinant of the matrix
$\cQ^{ab}=\sum_{i=1}^n N_i \cM_i^a\cM_i^b$ in the power $b_2$.
To evaluate it, we note that, due to $\det \cM=1$, one has
\be
\mathop{\det}\limits_{jk}\[ \sum_{i=1}^n N_i \cM_i^j\cM_i^k\]=\prod_{i=1}^n N_i.
\label{detfullM}
\ee
The determinant on the l.h.s. can be rewritten as
\bea
&&
\frac{1}{N}\left|\begin{array}{cc}
\cQ^{ab} & \sum\limits_{i,j}N_iN_j\cM_i^a\cM_i^n
\\
\sum\limits_{i} N_i \cM_i^n\cM_i^b\ & \sum\limits_{i,j}N_iN_j(\cM_i^n)^2
\end{array}\right|
=
\frac{1}{N}\left|\begin{array}{cc}
\cQ^{ab} & -\sum\limits_{j,c}\cQ^{ac}(\cM^{-1})_c^j
\\
\sum\limits_{i} N_i \cM_i^n\cM_i^b\ & n_0^2-\sum\limits_{i,j,c}N_i\cM_i^n\cM_i^c(\cM^{-1})_c^j
\end{array}\right|
\nn\\
&=& \frac{1}{N}\left|\begin{array}{cc}
\cQ^{ab} & 0
\\
\sum\limits_{i} N_i \cM_i^n\cM_i^b\ & n_0^2
\end{array}\right|=\frac{n_0^2}{N}\, \det \cQ^{ab},
\eea
where we used $\cM_i^n N_j =n_0\delta_i^j-\sum_c \cM_i^c(\cM^{-1})_c^j$ and \eqref{matM}.
Comparing with \eqref{detfullM}, we conclude that
\be
\det \cQ^{ab}=n_0^{-2} N\prod_{i=1}^n N_i.
\ee
Taken to the power $b_2$, this result should be compared with the number of components of the vector $(\mu,\vec \mu)$
equal to $\[ N\prod_{i=1}^n N_i\]^{b_2}$.
Thus, for $n_0=1$ the two dimensions coincide confirming that $\vartheta^{(\vec N)}_{\mu,\vec\mu}$
has the right modular properties.

But how can one understand the mismatch for $n_0>1$? In this case the dimension of the modular representation
is less than the number of components of the theta series. But not all of these components are independent!
We have already seen below \eqref{cond-q} that they vanish if $\mu_\beta\ne \sum_{i=1}^n\mu_{i,\beta} \mod n_0$.
Furthermore, it is easy to check that $\vartheta^{(\vec N)}_{\mu,\vec\mu}(\tau,z;\whPhi_n)$ is invariant under the shifts
$(\mu,\vec \mu)\to (\mu+ND_\alpha/n_0,\vec\mu+\vec N D_\alpha/n_0)$
where $D_\alpha\in\Lambda_S$, $\alpha=1,\dots,b_2$, is a basis of the lattice.
This two facts account for the missing factor $n_0^{2b_2}$
so that the number of independent components of the theta series precisely matches the determinant of the lattice.

However, this is not the end of the story as the above reasoning does not explain
why $\vartheta^{(\vec N)}_{\mu,\vec\mu}$ transforms as a vector.
This follows from Proposition \ref{prop-part} in Appendix \ref{ap-theta} where we should take
\be
\bfmu=(\mu/N,\{\mu_{i}/N_i\}),
\qquad
\bfxi_\alpha=(D_\alpha/n_0, \{D_\alpha/n_0\}),
\qquad
\bfzeta_\alpha=(-D^\star_\alpha/N,\{0\}),
\ee
where $D^\star_\alpha$ is a basis dual to $D_\alpha$, i.e. $D_\alpha\cdot D^\star_\beta=\delta_{\alpha\beta}$,
and $\bfrho$ spans
integer linear combinations of $\bfrho_\alpha^j=(D^\star_\alpha/N,\{\delta_{i}^j D^\star_\alpha/N_i\})$
and $\bfrho_\alpha^0=(n_0 D^\star_\alpha/N,\{0\})$
except $a^\alpha\bfxi_\alpha$, $a^\alpha=1,\dots, n_0-1$.
The conditions of the proposition are satisfied because, with respect to the bilinear form
\be
\bfx\bpt\bfy=\sum_{i=1}^n N_i \, x_i\cdot y_i-N x\cdot y,
\label{biform-factor}
\ee
which can be seen as the inverse of \eqref{defQlr},
one has
\be
\bfxi_\alpha\bpt\, \bfxi_\beta=\bfxi_\alpha\bpt\,\bfrho_\beta^j=0,
\qquad
\bfxi_\alpha\bpt\,\bfrho^0_\beta=-\delta_{\alpha\beta},
\qquad
\bfxi_\alpha\bpt\,\bfzeta_\beta=n_0^{-1}\delta_{\alpha\beta}.
\ee
As a result, the theta series $\vartheta^{(\vec N)}_{\mu,\vec\mu}(\tau,z;\whPhi_n)$ and hence the modular completion
$\whg_{N,\mu}(\tau,z)$ transform according to \eqref{Jacobi} with multiplier system
\eqref{eq:thetatransforms} and with the proper weight and index, as already shown above.

\section{Hirzebruch surfaces}
\label{sec-Fm}

The Hirzebruch surface $S=\Fb_m$ (also known as ruled rational surface) is defined as the projectivization of the
$\cO(m)\oplus \cO(0)$ bundle over $\IP^1$. It has $b_2(S)=2$ and $\chi(S)=4$.
In the basis
\be
D_1=[f], \qquad  D_2=[s]+m[f],
\ee
where $[f]$ and $[s]$ are the curves corresponding to the fiber and the section of the bundle,
the intersection matrix and the first Chern class are the following  \cite{Alim:2010cf,Alexandrov:2017mgi}
\be
\begin{split}
C_{\alpha\beta} =&\, \begin{pmatrix} 0 & 1 \\ 1 & m \end{pmatrix},
\qquad
C^{\alpha\beta} = \begin{pmatrix} -m & 1 \\ 1 & 0 \end{pmatrix},
\\
c_1^\alpha =&\,  ( 2-m , 2 ),
\qquad
c_1^2 =  8.
\end{split}
\label{dataFk}
\ee

Each lattice $\Lambda_{\Fb_m}$ has two null vectors.
Since their properties are slightly different for different $m$, we consider them one by one.
Our purpose here is to find a vector valued Jacobi form $\phi_{2,\mu}$ of weight $\hf b_2=1$
and index $-c_1^2=-8$, with the leading behavior near $z=0$ given by \eqref{resN2}.

\subsection*{$\Fb_0$}

In this case the null vectors are
\be
\nv^\alpha=(1,0),
\qquad
\nv'^\alpha=(0,1).
\label{nullF0}
\ee
Note that these vectors as well as the intersection matrix and the first Chern class
are symmetric under the exchange of the basis vectors.
Therefore, the construction does not depend on which null vector is chosen, up to this change of the basis.

In the construction of $\S$\ref{subsec-lat}, the new basis generating $\Latci{\Fb_0}$ is $\{\nv,\cv=(1,1)\}$, since $\gcd(c_1^\alpha)=2$,
and hence $r=\nv\cdot\cv=1$. Thus, glue vectors are not required and $\Latci{\Fb_0}=\Lambda_{\Fb_0}$.
In the new basis the residue class is $\mu=(\mu^1+\mu^2)\nv+\mu^2\cv$, and the condition \eqref{resN2} takes
a very simple form
\be
\phi_{2,\mu}\sim \delta^{(2)}_{\mu^2}\,\frac{1}{8\pi\I  z}\, ,
\label{resN2F0}
\ee
where we took into account that the function $\Delta(x)$ \eqref{FunDx} vanishes for $x\in\IZ$.
Then there is a natural function satisfying all the required properties (see \eqref{derth1})\footnote{Note that
multiplication of $z$ by $r$ changes the index of a Jacobi form by the factor $r^2$.}
\be
\phi_{2,\mu}(\tau,z)=\delta^{(2)}_{\mu^2}\,
\frac{\I\,\eta(\tau)^{3}}{\theta_1(\tau,4 z)}\, .
\label{phi2F0}
\ee
The only non-trivial thing to check is that it transforms as a modular vector.
This immediately follows from Proposition \ref{prop-part} in Appendix \ref{ap-theta}
where one can take $\bfxi=v_0$ and $\bfzeta=D_2$,
whereas the relevant lattice is $\Lat=2\Lambda_S$ which implies $\bfxi\bpt\,\bfzeta=1/2$.

As indicated above, the choice of the second null vector $\nv'$ leads to the same result with
$\mu^2$ replaced by $\mu^1$ in \eqref{phi2F0}.

\subsection*{$\Fb_1$}

The null vectors are
\be
\nv^\alpha=(1,0),
\qquad
\nv'^\alpha=(-1,2).
\label{nullF1}
\ee
They lead to different constructions because
\be
r=\cdv=2,
\qquad
r'=\cdv'=4
\ee
and hence require 2 and 4 glue vectors, respectively, which can be chosen as
\be
\begin{array}{l}
\vg_0=0,
\\
\vg_1=D_2=\hf(\cv-\nv),
\end{array}
\qquad
\begin{array}{l}
\vg'_0=0,
\\
\vg'_1=D_1=\hf (\cv-\nv'),
\\
\vg'_2=D_2=\frac14(\cv+\nv'),
\\
\vg'_3=D_1+D_2=\frac14(3\cv-\nv').
\end{array}
\ee

For the first choice one has $\mu=\(\mu^1-\hf\mu^2\)\nv+\hf\mu^2\cv$,
so that only the glue vector $\vg_0$ contributes to the condition \eqref{resN2} for $\mu^\alpha\in \{0,1\}$.
As a result, the condition becomes identical to the previous case \eqref{resN2F0} and has the same solution as in \eqref{phi2F0},
\be
\phi_{2,\mu}(\tau,z)=\delta^{(2)}_{\mu^2}\,
\frac{\I\,\eta(\tau)^{3}}{\theta_1(\tau,4 z)}\, .
\label{phi2F1}
\ee

The second choice gives
$\mu=\(\frac{1}{4}\mu^2-\frac{1}{2}\mu^1\)\nv'+\(\frac{1}{2}\mu^1+\frac{1}{4}\mu^2\)\cv$.
Substituting the components of $\vrh=\mu$ in the new basis into \eqref{resN2},
it is easy to see that the vectors $\vg'_1$ and $\vg'_2$ do not contribute,
whereas the contributions of $\vg'_0$ and $\vg'_3$ can be written as
\be
\phi_{2,\mu}\sim \delta^{(2)}_{\mu^2}\,\frac{1}{16\pi\I  z}\, ,
\label{resN2F2p}
\ee
which differs only by factor 1/2 from \eqref{resN2F0}. As a result, in this case one must take
\be
\phi_{2,\mu}(\tau,z)=\delta^{(2)}_{\mu^2}\,
\frac{\I\,\eta(\tau)^{3}}{2\,\theta_1(\tau,4 z)}\, .
\label{phi2F1p}
\ee

\subsection*{$\Fb_2$}

The null vectors read
\be
\nv^\alpha=(1,0),
\qquad
\nv'^\alpha=(-1,1).
\label{nullF2}
\ee
Since $\gcd(c_1^\alpha)=2$ and hence $\cv^\alpha=(0,1)$, both of them give $r=\nv\cdot\cv=1$ so that
no glue vectors are required.
Furthermore, changing the basis to
\be
D'_1=D_1,
\qquad
D'_2=D_2-D_1,
\label{chbF0F2}
\ee
one maps all the geometric data of $\Fb_2$ to those of $\Fb_0$.
In particular, the coordinates of the null vectors in the new basis become as in \eqref{nullF0}.
Hence, one can use the result \eqref{phi2F0} found for $\Fb_0$. Expressing it back in terms of the coordinates $\mu^\alpha$
defined by the original basis, one gets
\be
\phi_{2,\mu}(\tau,z)=\delta^{(2)}_{\mu^2}\,
\frac{\I\,\eta(\tau)^{3}}{\theta_1(\tau,4 z)}\, ,
\qquad
\phi_{2,\mu}(\tau,z)=\delta^{(2)}_{\mu^1+\mu^2}\,
\frac{\I\,\eta(\tau)^{3}}{\theta_1(\tau,4 z)}\, .
\label{phi2F2}
\ee
for the two null vectors \eqref{nullF2}, respectively.

\section{Del Pezzo surfaces}
\label{sec-Bm}

For the del Pezzo surface $S=\Bb_m$, defined as the
blow-up of $\IP^2$ over $m$ generic points,
one has $b_2(S)=m+1$, $\chi(S)=m+3$.
The most convenient basis is given by $D_1$ (the hyperplane class of $\IP^2$) and $D_2,\dots,D_{m+1}$,
(the exceptional divisors of the blow-up) with the diagonal intersection matrix $C_{\alpha\beta}= \diag (1,-1,\dots,-1)$.
In this basis the first Chern class is
\be
c_1^\alpha = (3, -1,\dots, -1),
\qquad
c_1^2 =  9-m.
\label{dataBk}
\ee
Note that $\Bb_0=\IP^2$ and $\Bb_1=\Fb_1$ so that below we assume that $m\ge 2$.

The lattice $\Lambda_{\Bb_m}$ has infinitely many non-equivalent null vectors.
A convenient way to classify them is to order them according to the number $r=\cdv=\nv\cdot c_1$.
Then for each $r$ there is only finitely many null vectors.
Furthermore, the vectors that can be mapped to each other by a permutation of the exceptional divisors $D_\alpha$, $\alpha\ge 2$,
can be identified due to the symmetry of the intersection matrix and the first Chern class.
Still this leaves us with several choices, even for low values of the parameter $r$.
Below we consider only three possible choices of such null vectors.

\subsection*{\be\hspace{-10cm}\mbox{Choice I: }\nv^\alpha=(1,-1,0,\dots,0)\label{nullBm1}\ee}

This choice corresponds to $r=2$. The orthogonal sublattice $\Lambda_{\Bb_m}^\perp$ can be chosen as follows
\be
\begin{array}{l}
d_1=D_1-D_2-2D_3,
\\
d_I=D_{I+1}-D_{I+2},\quad I\ge 2,
\end{array}
\qquad\qquad
C_{IJ}^\perp =  \begin{pmatrix}
-4 & 2 &  & & \\
2 & -2 & 1 & &   \\
 & 1 & -2 &  &  \\
 &  &  & \ddots & 1\\
 &  &  & 1 & -2
\end{pmatrix},
\label{basisBm1}
\ee
where $I=1,\dots, m-1$.
Note that after the change of the first basis element to $d_1+d_2$, the intersection matrix,
becomes identical to the (minus) $D_{m-1}$ Cartan matrix.\footnote{We preferred the basis \eqref{basisBm1}
because it works also for $m=2$.}
In the new basis the residue class is given by
\bea
\mu&=&\textstyle
\frac14\( (m-3)\mu^1+(m-7)\mu^2+2\sum\limits_{\alpha=3}^{m+1}\mu^\alpha\)\nv
+\hf\(\mu^1+\mu^2\)\cv
\label{muBm1}\\
&&\textstyle
-\frac14\( (m-1)(\mu^1+\mu^2)+2\sum\limits_{\alpha=3}^{m+1}\mu^\alpha\)d_1
-\hf\sum\limits_{I=2}^{m-1} \( (m-I)(\mu^1+\mu^2)+2\sum\limits_{\alpha=I+2}^{m+1}\mu^\alpha\)d_I.
\nn\eea
In our notations, the first line is equal to $\vrh$ and the second to $\rho$.

The determinant of $C_{IJ}^\perp$ is equal (up to sign) to 4. It follows that one has to introduce
$r_S=r\sqrt{\det C_{IJ}^\perp}=4$ glue vectors.
A possible choice is
\be
\begin{array}{l}
\vg_0=0,
\\
\vg_1=D_1=\frac14\((m-3)\nv+2\cv-(m-1)d_1-2\sum\limits_{I=2}^{m-1}(m-I)d_I\),
\\
\vg_2=D_3=\hf(\nv-d_1),
\\
\vg_3=D_1+D_3=\frac14\((m-1)\nv+2\cv-(m+1)d_1-2\sum\limits_{I=2}^{m-1}(m-I)d_I\).
\end{array}
\ee
As a result, the condition \eqref{resN2} becomes
\bea
\phi_{2,\mu}&\sim & \delta_{\mu^1=\mu^2=0}\[
\(\frac{1}{8\pi\I z}+\Delta(\haf\mu_{\rm tot})\)\vartheta^{\, \perp}_{\hf\rho}
+\(\frac{1}{8\pi\I z}+\Delta(\haf\mu_{\rm tot}+1)\)\vartheta^{\, \perp}_{\hf\rho+\vd_2}\]
\label{condphiBm1}\\
&&
+\delta_{\mu^1=\mu^2=1}\[
\(\frac{1}{8\pi\I z}+\Delta(\haf\mu_{\rm tot}+m)\)\vartheta^{\, \perp}_{\hf\rho+\vd_1}
+\(\frac{1}{8\pi\I z}+\Delta(\haf\mu_{\rm tot}+m+1)\)\vartheta^{\, \perp}_{\hf\rho+\vd_3}\],
\nn
\eea
where $\mu_{\rm tot}=\sum_{\alpha=3}^{m+1}\mu^\alpha$ and $\vd_a$ are projections of $\vg_a$ on $\Lambda_{\Bb_m}^\perp$.
Note that for $\mu_{\rm tot}$ even, all functions $\Delta$ in \eqref{condphiBm1} vanish since they are evaluated for integer arguments.
For $\mu_{\rm tot}$ odd, instead one gets $\Delta(\haf\mu_{\rm tot})=-\Delta(\haf\mu_{\rm tot}+1)$, and similarly for the second pair
of functions appearing in the second line of \eqref{condphiBm1}. Furthermore, in this case one has
\be
\vartheta^{\, \perp}_{\hf\rho}=\vartheta^{\, \perp}_{\hf\rho+\vd_2},
\qquad
\vartheta^{\, \perp}_{\hf\rho+\vd_1}=\vartheta^{\, \perp}_{\hf\rho+\vd_3}.
\label{identthetaBm1}
\ee
Indeed, $\mu_{\rm tot}$ odd implies $\hf\rho^I+\vd_a^I\in \haf\IZ$ for $I\ge 2$, whereas $\hf\rho^1+\vd_a^1\in \haf\IZ+\frac14$.
Moreover, $\vd_0^1-\vd_2^1=\vd_1^1-\vd_3^1=\hf$. Therefore, the relations in \eqref{identthetaBm1} are
particular cases of the first identity in \eqref{identities-th}.
As a result, all terms with $\Delta$ in \eqref{condphiBm1} cancel leaving
\be
\phi_{2,\mu}\sim  \frac{1}{8\pi\I z}\[\delta_{\mu^1=\mu^2=0}
\(\vartheta^{\, \perp}_{\hf\rho}+\vartheta^{\, \perp}_{\hf\rho+\vd_2}\)
+\delta_{\mu^1=\mu^2=1}\(\vartheta^{\, \perp}_{\hf\rho+\vd_1}+\vartheta^{\, \perp}_{\hf\rho+\vd_3}\)\].
\label{condphiBm12}
\ee

Let us now change the basis of $\Lambda_{\Bb_m}^\perp$ to $\td_I$ where
\be
\td_I=D_{I+2}|_{\Lambda_{\Bb_m}^\perp}\
\Longleftrightarrow\
\begin{array}{l}
d_1=2\td_1,
\\
d_I=\td_{I-1}-\td_{I},\quad I\ge 2.
\end{array}
\label{changebackBm1}
\ee
Since this is the same change of basis as in \eqref{basisBm1}, except for the absence of the null divisor $D_1-D_2$,
the intersection matrix of $\td_I$ is the same as (the restriction of) the intersection matrix of the original basis of $\Lambda_{\Bb_m}$,
i.e. $\td_I\cap\td_J=-\delta_{IJ}$.
Therefore, we conclude that
\be
\IZ^{m-1}=\Lambda_{\Bb_m}^\perp\cup\(\Lambda_{\Bb_m}^\perp+\haf\,d_1\),
\label{latZm}
\ee
which implies
\be
\vartheta^{\, \perp}_{\hf\rho}+\vartheta^{\, \perp}_{\hf\rho+\vd_2}
\ \stackrel{\mu^0=\mu^1=0}{=}\ \prod_{I=1}^{m-1}\theta_{\mu^{I+2}}^{(2)}(\tau,0)
\ \stackrel{\mu^0=\mu^1=1}{=}\
\vartheta^{\, \perp}_{\hf\rho+\vd_1}+\vartheta^{\, \perp}_{\hf\rho+\vd_3},
\label{relthetaBm1}
\ee
where $\theta_\ell^{(2)}(\tau,z)$ is defined in \eqref{deftheta2z}.
and we took into account that $\vd_1=\hf\sum_{I=1}^{m-1}\td_I$,
Thus, the two terms in \eqref{condphiBm12} can be nicely combined into
\be
\phi_{2,\mu}\sim \frac{\delta^{(2)}_{\mu\cdot \nv}}{8\pi\I z}\prod_{\alpha=3}^{m+1}\theta_{\mu^{\alpha}}^{(2)}(\tau,0).
\label{condphiBm13}
\ee
Taking into account the modularity restrictions, a general solution to this condition is given by
\be
\phi_{2,\mu}(\tau,z)=\delta^{(2)}_{\mu\cdot \nv}\,
\frac{\I\kappa\,\eta(\tau)^{3}}{\theta_1(\tau,4\kappa z)}\, \prod_{\alpha=3}^{m+1}\theta^{(2)}_{\mu^{\alpha}}(\tau,\kappa_{\alpha-2} z),
\label{phi2Bm1}
\ee
where the parameters $\kappa,\kappa_I$ are restricted to satisfy
\be
8\kappa^2-\sum_{I=1}^{m-1}\kappa_I^2=9-m.
\label{kappa}
\ee
As in Appendix \ref{sec-Fm}, $\phi_{2,\mu}$ transforms as a modular vector due to
Proposition \ref{prop-part} where one should take $\bfxi=v_0$, $\bfzeta=D_2$ and $\bfrho=\sum_{\alpha=3}^{m+1} \mu^\alpha D_\alpha$.

\subsection*{\be\hspace{-10cm}\mbox{Choice II: }\nv'^\alpha=(1,1,0,\dots,0)\label{nullBm2}\ee}

Although for this choice one has $r=4$, indicating that one has to use more glue vectors,
it is very similar to the previous one. The orthogonal sublattice $\Lambda_{\Bb_m}^\perp$ differs from
\eqref{basisBm1} only by a few coefficients
\be
\begin{array}{l}
d_1=D_1+D_2-4D_3,
\\
d_I=D_{I+1}-D_{I+2},\quad I\ge 2,
\end{array}
\qquad\qquad
C_{IJ}^\perp =  \begin{pmatrix}
-16 & 4 &  & & \\
4 & -2 & 1 & &   \\
 & 1 & -2 &  &  \\
 &  &  & \ddots & 1\\
 &  &  & 1 & -2
\end{pmatrix},
\label{basisBm2}
\ee
and the residue class in the new basis is given by
\bea
\mu&=&\textstyle
\frac{1}{16}\( (m+3)\mu^1-(m-13)\mu^2+4\sum\limits_{\alpha=3}^{m+1}\mu^\alpha\)\nv
+\frac14\(\mu^1-\mu^2\)\cv
\label{muBm2}\\
&&\textstyle
-\frac{1}{16}\( (m-1)(\mu^1-\mu^2)+4\sum\limits_{\alpha=3}^{m+1}\mu^\alpha\)d_1
-\frac14\sum\limits_{I=2}^{m-1} \( (m-I)(\mu^1-\mu^2)+4\sum\limits_{\alpha=I+2}^{m+1}\mu^\alpha\)d_I.
\nn\eea

The matrix \eqref{basisBm2} has $|\det C_{IJ}^\perp|=16$, so that one has to introduce $16$ glue vectors,
which we choose to be
\be
\begin{split}
\vg_{k,\ell} =&\,k D_1+\ell D_3
\\
=&\,\textstyle
\frac{k}{16}\((m+3)\nv+4\cv-(m-1)d_1-4\sum\limits_{I=2}^{m-1}(m-I)d_I\)
+\frac{\ell}{4}\,(\nv-d_1),
\end{split}
\ee
where $k,\ell=0,\dots, 3$.
With this choice, only the vectors $\vg_{0,\ell}$ contribute to the condition \eqref{resN2} which takes
the following form
\bea
\phi_{2,\mu}&\sim &\sum_{\ell=0}^3\[\delta_{\mu^1=\mu^2=0}
\(\frac{1}{16\pi\I z}+\Delta(-\rho^1+\haf\ell)\)\vartheta^{\, \perp}_{\hf\rho-\frac{\ell}{4}d_1}
\right.
\label{condphiBm2}\\
&&\left.
+\delta_{\mu^1=\mu^2=1}
\(\frac{1}{16\pi\I z}+\Delta(-\rho^1+\haf\ell+1)\)\vartheta^{\, \perp}_{\hf\rho-\frac{\ell}{4}d_1}\],
\nn
\eea
where we took into account that for the allowed values of $\mu^1$ and $\mu^2$, one has $ \tfrac14\mu_{\rm tot}=-\rho^1$.
Next, as in the previous case, one can show that all terms with $\Delta$ cancel. To this end, one
should consider 3 cases. If $\rho^1\in \IZ$, then $\Delta$ functions are non-vanishing only for $\ell=1$ and 3,
in which case they differ by sign. But the theta functions they multiply are equal,
$\vartheta^{\, \perp}_{\rho,-\frac{1}{4}d_1}=\vartheta^{\, \perp}_{\rho,-\frac{3}{4}d_1}$,
by the same reason as in \eqref{identthetaBm1}.
Similarly, for $\rho^1\in \IZ+\hf$, only $\ell=0$ and 2 give rise to non-vanishing $\Delta$ functions which cancel each other
due to equality of the corresponding theta functions.
Finally, for $\rho^1\in \IZ+\frac14$ or $\rho^1\in \IZ+\frac34$, all $\ell$ contribute, but they can be split in pairs
so that $-\rho^1+\haf\ell=\pm\frac14\mbox{\ \small mod 2}$ or $\pm\frac34\mbox{\ \small mod 2}$.\footnote{The shift by 1
in the second line in \eqref{condphiBm2} simply exchanges these pairs.}
Then for each pair the $\Delta$ functions differ by sign and the theta functions coincide so that all terms cancel.
As a result, one remains with
\be
\phi_{2,\mu}\sim  \frac{\delta^{(2)}_{\mu\cdot \nv}}{16\pi\I z}\,\sum_{\ell=0}^3\vartheta^{\, \perp}_{\hf\rho-\frac{\ell}{4}d_1}.
\label{condphiBm22}
\ee

Similarly to \eqref{changebackBm1}, we now change the basis of $\Lambda_{\Bb_m}^\perp$ to $\td_I=D_{I+2}|_{\Lambda_{\Bb_m}^\perp}$
where the intersection matrix is again diagonal, $\td_I\cap\td_J=-\delta_{IJ}$.
This shows that
\be
\IZ^{m-1}=\bcup\limits_{\ell=0}^3\(\Lambda_{\Bb_m}^\perp+\ell d_1\),
\ee
which implies that the sum of theta functions in \eqref{condphiBm22} produces a theta function over $\IZ^{m-1}$.
As a result, one obtains
\be
\phi_{2,\mu}\sim \frac{\delta^{(2)}_{\mu\cdot \nv}}{16\pi\I z}\prod_{\alpha=3}^{m+1}\theta_{\mu^{\alpha}}^{(2)}(\tau,0),
\label{condphiBm23}
\ee
with the solution given by
\be
\phi_{2,\mu}(\tau,z)=\delta^{(2)}_{\mu\cdot \nv}\,
\frac{\I\kappa\,\eta(\tau)^{3}}{2\,\theta_1(\tau,4 \kappa z)}\, \prod_{\alpha=3}^{m+1}\theta^{(2)}_{\mu^{\alpha}}(\tau,\kappa_{\alpha-2}z).
\label{phi2Bm2}
\ee
Similarly to the two choices of the null vectors for $\Fb_1$, the two results \eqref{phi2Bm1} and \eqref{phi2Bm2}
differ only by factor 1/2. This is not surprising as $\Fb_1=\Bb_1$.

\subsection*{\be\hspace{-7.3cm}\mbox{Choice III: }\nv^\alpha=(2,-1,-1,-1,-1,0,\dots,0)\label{nullBm3}\ee}

This choice exists only for $m\ge4$ and corresponds to $r=2$. For simplicity we consider it only for $m=4$.
The analysis for $m>4$ is very similar to the one presented below and leads to the same results.

The basis for the orthogonal sublattice $\Lambda_{\Bb_4}^\perp$ and the associated intersection matrix are given by
\be
d_I=D_{I+1}-D_{I+2},
\qquad
C^\perp_{IJ} =  \begin{pmatrix}
-2 & 1 &   \\
1 & -2 & 1 \\
  & 1 & -2
\end{pmatrix},
\ee
where $I=1,2,3$. The residue class in the new basis is found to be
\bea
\mu&=&\textstyle
-\( \mu^1+\tfrac34\sum\limits_{\alpha=2}^5 \mu^\alpha\)\nv
+\(\mu^1+\haf\sum\limits_{\alpha=2}^5 \mu^\alpha\)\cv
\label{muBm3}\\
&&\textstyle
+\frac{1}{4}\bigl( 3\mu^2-\mu^3-\mu^4-\mu^5\bigr)d_1
+\frac{1}{2}\bigl( \mu^2+\mu^3-\mu^4-\mu^5\bigr)d_2
+\frac{1}{4}\bigl( \mu^2+\mu^3+\mu^4-3\mu^5\bigr)d_1.
\nn\eea
One has $|\det C_{IJ}^\perp|=4$, so that we have to provide 4 glue vectors which we choose as
\be
\vg_k=kD_2=\textstyle
\frac{k}{4}\bigl(-3\nv+2\cv+3d_1+2d_2+d_3\bigr),
\ee
where $k=0,\dots, 3$.
As a result, the condition \eqref{resN2} can be rewritten in the following explicit form
\bea
\phi_{2,\mu}&\sim &
\frac{\delta_{\mu^\alpha=0}}{8\pi\I z}\(\vartheta^{\, \perp}_{\hf\rho}
+\vartheta^{\, \perp}_{\hf\rho+2\vd_1}\)
+\delta_{\mu^\alpha=1}\[\frac{1}{8\pi\I z}\(\vartheta^{\, \perp}_{\hf\rho+\vd_1}+\vartheta^{\, \perp}_{\hf\rho+3\vd_1}\)
-\frac14\(\vartheta^{\, \perp}_{\hf\rho+\vd_1}-\vartheta^{\, \perp}_{\hf\rho+3\vd_1}\)
\]
\nn\\
&&
+\delta_{\mu^1+{\haf}\sum\limits_{\alpha=2}^5 \mu^\alpha=1}
\[\(\frac{1}{8\pi\I z}+\Delta(\haf\mu^1-1)\)\vartheta^{\, \perp}_{\hf\rho+\vd_1}
+\(\frac{1}{8\pi\I z}+\Delta(\haf\mu^1)\)\vartheta^{\, \perp}_{\hf\rho+3\vd_1}\]
\\
&&
+\delta_{\mu^1+{\haf}\sum\limits_{\alpha=2}^5 \mu^\alpha=2}
\[\(\frac{1}{8\pi\I z}+\Delta(\haf\mu^1-1)\)\vartheta^{\, \perp}_{\hf\rho}
+\(\frac{1}{8\pi\I z}+\Delta(\haf\mu^1)\)\vartheta^{\, \perp}_{\hf\rho+2\vd_1}\].
\nn
\label{condphiBm3}
\eea
As in all previous cases, one can show that all terms with $\Delta$ function cancel, which follows from the identities
$\vartheta^{\, \perp}_{\hf\rho}=\vartheta^{\, \perp}_{\hf\rho+2\vd_1}$ valid for $\sum\limits_{\alpha=2}^5 \mu^\alpha=2$
and $\vartheta^{\, \perp}_{\hf\rho+\vd_1}=\vartheta^{\, \perp}_{\hf\rho+3\vd_1}$
valid for all $\mu^\alpha$ vanishing or all equal to 1.
All these identities are, as usual, particular cases of \eqref{identities-th}.
Thus, one remains with
\be
\phi_{2,\mu}\sim  \frac{1}{8\pi\I z}\[\delta^{(2)}_{\mu^1+{\haf}\sum\limits_{\alpha=2}^5 \mu^\alpha}
\(\vartheta^{\, \perp}_{\hf\rho}+\vartheta^{\, \perp}_{\hf\rho+2\vd_1}\)
+\delta^{(2)}_{\mu^1+{\haf}\sum\limits_{\alpha=2}^5 \mu^\alpha+1}
\(\vartheta^{\, \perp}_{\hf\rho+\vd_1}+\vartheta^{\, \perp}_{\hf\rho+3\vd_1}\)\].
\label{condphiBm32}
\ee

Next, we change the basis of $\Lambda_{\Bb_4}^\perp$ to
\be
\td_1=\hf\, (d_1+d_3),
\qquad
\td_2=\hf\, (d_1-d_3),
\qquad
\td_3=d_2+\hf\, (d_1+d_3)
\ee
with the diagonal intersection matrix, $\td_I\cap\td_J=-\delta_{IJ}$.
Taking into account that $\vd_1=\hf\sum_{I=1}^{m-1}\td_I$, one has
\be
\IZ^{m-1}=\Lambda_{\Bb_m}^\perp\cup\(\Lambda_{\Bb_m}^\perp+2\vd_1\),
\label{latZm3}
\ee
which implies that the sums of theta functions appearing in \eqref{condphiBm32} can be expressed through
a theta function over $\IZ^{m-1}$. More specifically, using
\be
\rho=\hf\bigl(\mu^2-\mu^3+\mu^4-\mu^5\bigr)\td_1+
\hf\bigl(\mu^2-\mu^3-\mu^4+\mu^5\bigr)\td_2
+\hf\bigl(\mu^2+\mu^3-\mu^4-\mu^5\bigr)\td_3,
\ee
one finds
\be
\phi_{2,\mu}\sim \frac{\delta^{(2)}_{\mu\cdot \nv}}{8\pi\I z}
\prod_{\alpha=3}^{5}\theta_{\mu^1+\mu^2+\mu^\alpha}^{(2)}(\tau,0).
\label{condphiBm233}
\ee
The corresponding solution can be written as
\be
\phi_{2,\mu}(\tau,z)=\delta^{(2)}_{\mu\cdot \nv}\,
\frac{\I\kappa\,\eta(\tau)^{3}}{\theta_1(\tau,4\kappa z)}\, \prod_{I=1}^{3}\theta^{(2)}_{\mu\cdot E_I }(\tau,\kappa_I z),
\label{phi2Bm3}
\ee
where we introduced a set of lattice vectors
$E_I=D_1-D_2-D_{I+2}$.
They can be seen as an orthonormal basis in the negative definite sublattice orthogonal to the two null vectors,
$\nv$ and $D_1-D_2$.
To verify that $\phi_{2,\mu}$ transforms as a modular vector, one should again apply
Proposition \ref{prop-part} where one takes $\bfxi=v_0$, $\bfzeta=D_1-D_2$ and $\bfrho$ lies in the lattice
generated by $E_I$.

\section{Details of calculations for $N=3$}
\label{ap-detail3}

In this appendix we provide details omitted in $\S$\ref{subsubsec-phiN3}.

\subsection*{Derivation of \eqref{theta21one}}
\label{ap-combineterms}

The three contributions coming from \eqref{theta21} with one of $\nb_{k\ell}$ vanishing read
\bea
&&
\frac14\sum_{k_1\in \Lambda_S+\frac13\mu}\sum_{k_2\in \Lambda_S+\frac23\mu}
\biggl[
\delta_{\nb_{12}=0}\Bigl(\sgn(c_1\cdot k_1)-\sgn(\beta)\Bigr)
\Bigl(\sgn(c_1\cdot k_2)-\sgn(\nv\cdot k_2)\Bigr)
\biggr.
\nn\\
&&\
+\delta_{\nb_{23}=0}\Bigl(\sgn(c_1\cdot k_1)-\sgn(\nv\cdot k_1)\Bigr)
\Bigl(\sgn(c_1\cdot k_2)-\sgn(\beta)\Bigr)
\label{thetazeroN31}\\
&&\biggl.\
+\delta_{\nb_{13}=0}\Bigl(\sgn(c_1\cdot k_1)-\sgn(\nv\cdot k_1+\beta)\Bigr)
\Bigl(\sgn(c_1\cdot k_2)-\sgn(\nv\cdot k_2+\beta)\Bigr)\biggr]\, \q^{-k_1^2-k_2^2+k_1k_2}\, y^{2c_1\cdot (k_1+k_2)},
\nn
\eea
where we took into account that for $\beta\ll 0$ the coefficient of $\beta$ in the sign
functions with $\nb_{k\ell}\ne 0$ does not play any role.
Let us rewrite the first term in this expression as a sum of two contributions
\bea
&&\frac14\sum_{k_1\in \Lambda_S+\frac13\mu}\sum_{k_2\in \Lambda_S+\frac23\mu}
\q^{-k_1^2-k_2^2+k_1k_2}
\nn\\
&&\quad \times
\biggl[
\delta_{\nb_{12}=0}\Bigl(\sgn(c_1\cdot (2k_1-k_2))-\sgn(\beta)\Bigr)
\Bigl(\sgn(c_1\cdot k_2)-\sgn(\nv\cdot k_2)\Bigr) \, y^{2c_1\cdot (k_1+k_2)}
\biggr.
\label{term1}\\
&&\biggl. \quad
+\delta_{\nb_{13}=0}\Bigl(\sgn(c_1\cdot k_1)-\sgn(c_1\cdot (k_1+k_2))\Bigr)
\Bigl(\sgn(c_1\cdot (k_1-k_2))+\sgn(\nv\cdot k_2)\Bigr)
\, y^{2c_1\cdot (2k_1-k_2)} \biggr],
\nn
\eea
where the second contribution is obtained by changing the summation variables: $(k_1,k_2)\to(k_1,k_1-k_2)$.
Clearly, the second term in \eqref{thetazeroN31}
can be rewritten in the same way with $k_1$ and $k_2$ exchanged comparing to \eqref{term1},
\bea
&&\frac14\sum_{k_1\in \Lambda_S+\frac13\mu}\sum_{k_2\in \Lambda_S+\frac23\mu}
\q^{-k_1^2-k_2^2+k_1k_2}
\nn\\
&&\quad \times
\biggl[
\delta_{\nb_{23}=0}\Bigl(\sgn(c_1\cdot (2k_2-k_1))-\sgn(\beta)\Bigr)
\Bigl(\sgn(c_1\cdot k_1)-\sgn(\nv\cdot k_1)\Bigr) \, y^{2c_1\cdot (k_1+k_2)}
\biggr.
\label{term2}\\
&&\biggl. \quad
-\delta_{\nb_{13}}\Bigl(\sgn(c_1\cdot k_2)-\sgn(c_1\cdot (k_1+k_2))\Bigr)
\Bigl(\sgn(c_1\cdot (k_1-k_2))-\sgn(\nv\cdot k_1)\Bigr)
\, y^{2c_1\cdot (2k_2-k_1)} \biggr].
\nn
\eea
Finally, the third term in \eqref{thetazeroN31} is dealt with the help of the following sign identity
\be
\sgn(x_1)\,\sgn(x_2)=1-\sgn(x_1-x_2)\bigl( \sgn(x_1)-\sgn(x_2)\bigr).
\label{signident}
\ee
Applying it, one finds
\be
\begin{split}
&\,
-\frac14\sum_{k_1\in \Lambda_S+\frac13\mu}\sum_{k_2\in \Lambda_S+\frac23\mu}
\delta_{\nb_{13}=0}\,\Bigl[\sgn(c_1\cdot(k_1-k_2))\bigl(\sgn(c_1\cdot k_1)-\sgn(c_1\cdot k_2)\bigr)
\\
&\,\qquad
+\sgn(c_1\cdot k_1)\,\sgn(\nv\cdot k_2)+\sgn(c_1\cdot k_2)\,\sgn(\nv\cdot k_1)
\Bigr]\,\q^{-k_1^2-k_2^2+k_1k_2}\, y^{2c_1\cdot (k_1+k_2)}.
\end{split}
\ee
The first contributions in \eqref{term1} and \eqref{term2} are exactly the last two in \eqref{theta21one},
whereas the remaining contributions, replacing in half of them $(k_1,k_2)\to(-k_2,-k_1)$, are combined into
the first contribution in \eqref{theta21one}.

\subsection*{Condition on $\phi_{3,\mu}$}

Our goal is to expand the two contributions, \eqref{cvthetaN3} and \eqref{N3termphi}, around $z=0$ up to terms $O(z)$.
For Hirzebruch surfaces it is easy to check that these two expansions are given by
\be
\begin{split}
\eqref{cvthetaN3}\sim &\,  -\delta^{(3)}_{\nv\cdot\mu}\[\frac{1}{(4\pi\cdv z)^2}+\frac{1}{12}-\frac13\, \delta^{(3)}_{c_1\cdot\mu}\]
+O(z^2),
\\
\eqref{N3termphi}\sim &\,  \delta^{(3)}_{\nv\cdot\mu}\[\frac{1-(4\kappa z)^2\cD(\tau)}{3(2\pi\cdv z)^2}+\frac{1}{12}
-\frac13\, \delta^{(3)}_{c_1\cdot\mu}\]
+O(z^2).
\end{split}
\ee
They are nicely combined into
\be
\eqref{cvthetaN3}+\eqref{N3termphi}\sim  \delta^{(3)}_{\nv\cdot\mu}\frac{1-(8\kappa z)^2\cD(\tau)}{3(4\pi\cdv z)^2}
\ee
reproducing (minus) the r.h.s. of \eqref{expan-phi3}.

For del Pezzo surfaces the presence of theta series significantly complicates the analysis.
Nevertheless, using the identities \eqref{identities-th}, it is not difficult to show that all terms of order
$\sim \!z^{-1}$ and $\sim\! z$ cancel,
analogously to the $\Delta$-dependent terms in Appendix \ref{sec-Bm}.
The leading terms can also be easily evaluated. The point is that the sum over glue vectors, from which only $\cdv$
terms survive, allows to pass from the lattice $\Lambda_{\Bb_m}^\perp$ to $\IZ^{m-1}$ as in \eqref{latZm}.
As a result, the leading terms in the two expansions read
\be
\begin{split}
\eqref{cvthetaN3}\sim &\,  -\frac{\delta^{(3)}_{\nv\cdot\mu}}{(4\pi\cdv z)^2}\prod_{I=1}^{m-1}\theta^{(3)}_{\mu\cdot E_I }(\tau,0)
+O(1),
\\
\eqref{N3termphi}\sim &\,  \frac{\delta^{(3)}_{\nv\cdot\mu}}{3(2\pi\cdv z)^2}\prod_{I=1}^{m-1}\[\,\sum_{\ell_I=0,1}
\theta^{(2)}_{\ell_I }(\tau,0)\,\theta^{(2)}_{\frac23 \mu\cdot E_I+\ell_I }(3\tau,0)\]
+O(1).
\end{split}
\ee
Applying the identity \eqref{ident2theta}, one expresses the second expansion in terms of $\theta^{(3)}_{\mu\cdot E_I}$
so that the two contributions can be combined and reproduce (minus) the leading term in \eqref{expan-phi3}.

The most complicated is to show that the only constant terms which survive are those that are obtained from the expansion of
$\theta_1(\tau,\kappa z)$ and $\theta^{(2)}_{\mu\cdot E_I}(\tau, \kappa_I z)$ in $\phi_{2,\mu}$ \eqref{genphi2}.
Unfortunately, we did not find a general pattern for the constant terms appearing in the expansion of
\eqref{cvthetaN3} and \eqref{N3termphi} before they are combined together. Therefore, one has to analyze these terms case by case.
To illustrate the procedure, we present here the case of the null vector \eqref{nullBm1} for which one finds
\bea
O(1)[\eqref{cvthetaN3}]&= &  -\delta^{(3)}_{\nv\cdot\mu}\(\frac{1}{12}
-\frac13\, \delta^{(3)}_{c_1\cdot\mu}\)\vartheta^{\, \perp,2}_{\frac13\trho,\frac23\trho}(\tau),
\label{O1term1}\\
O(1)[\eqref{N3termphi}]&=&   \delta^{(3)}_{\nv\cdot\mu}\sum_{\ell_I=0,1\atop I=1,\dots m-1}
\Biggl[
\frac{1}{3(2\pi\cdv z)^2} \(\sum\limits_{I=1}^{b_2-2}\frac{\kappa_I^2}{2}\,\p_z^2 \log\theta^{(2)}_{\ell_I}(\tau,0)
-16\kappa^2\cD(\tau)\)
\Biggr.
\nn\\
&+& \Biggl.
\frac{1+3(-1)^{\ell_{\rm tot}}}{4}\(\frac{1}{12}-\frac13\, \delta^{(3)}_{c_1\cdot\mu}\)\Biggr]
\prod_{I=1}^{m-1}\(\theta^{(2)}_{\ell_I }(\tau,0)\,\theta^{(2)}_{\frac23\mu\cdot E_I+\ell_I }(3\tau,0) \) ,
\label{O1term2}
\eea
where $\ell_{\rm tot}=\sum_{I=1}^{m-1}\! \ell_I$ and $\trho=\rho|_{\mu^1+\mu^2=0}$.
Applying the identities \eqref{ident2theta} and \eqref{ident2thetader}, it is immediate to see that
the first term in the expansion \eqref{O1term2} reproduces (minus) the constant term in \eqref{expan-phi3}.
To show the cancelation of the remaining terms, let us decompose the lattice $\IZ^{m-1}$ as in \eqref{latZm}
which allows to write
\be
\begin{split}
\prod_{I=1}^{m-1}\theta^{(2)}_{\ell_I }(\tau,0)
=&\, \vartheta^{\, \perp}_{\hf\trho(\ell)}(\tau)
+\vartheta^{\, \perp}_{\hf\trho(\ell)+\hf d_1}(\tau),
\\
\prod_{I=1}^{m-1}\theta^{(2)}_{\frac23\mu\cdot E_I+\ell_I }(3\tau,0)
=&\, \vartheta^{\, \perp}_{\frac13\trho+\hf\trho(\ell)}(3\tau)
+\vartheta^{\, \perp}_{\frac13\trho+\hf\trho(\ell)+\hf d_1}(3\tau),
\end{split}
\ee
where (c.f. \eqref{muBm1})
\be
\trho(\ell)=
-\frac12\(\sum\limits_{I=1}^{m-1}\ell_I\)d_1
-\sum\limits_{I=2}^{m-1} \( \sum\limits_{J=I}^{m-1}\ell_J\)d_I.
\ee
Note that for $\ell_{\rm tot}$ even, all components of $\trho(\ell)$ are integer, whereas
if $\ell_{\rm tot}$ is odd, the component $\trho^1(\ell)$ is half-integer. Due to this, one has
\be
\begin{split}
&
\sum_{\ell_I=0,1\atop I=1,\dots m-1}\frac{1+3(-1)^{\ell_{\rm tot}}}{4}
\,\prod_{I=1}^{m-1}\,\(\theta^{(2)}_{\ell_I }(\tau,0)\,\theta^{(2)}_{\frac23\mu\cdot E_I+\ell_I }(3\tau,0)\)
\\
=&\, \sum_{\ell_I=0,1, \ \ell_{\rm tot}\in 2\IZ\atop I=1,\dots m-1}
\(\vartheta^{\, \perp}_{\hf\trho(\ell)}(\tau)
+\vartheta^{\, \perp}_{\hf\trho(\ell)+\hf d_1}(\tau)\)
\( \vartheta^{\, \perp}_{\frac13\trho+\hf\trho(\ell)}(3\tau)
+\vartheta^{\, \perp}_{\frac13\trho+\hf\trho(\ell)+\hf d_1}(3\tau)\)
\\
&\,
-\hf\sum_{\ell_I=0,1, \ \ell_{\rm tot}\in 2\IZ+1\atop I=1,\dots m-1}
\(\vartheta^{\, \perp}_{\hf\trho(\ell)}(\tau)
+\vartheta^{\, \perp}_{\hf\trho(\ell)+\hf d_1}(\tau)\)
\( \vartheta^{\, \perp}_{\frac13\trho+\hf\trho(\ell)}(3\tau)
+\vartheta^{\, \perp}_{\frac13\trho+\hf\trho(\ell)+\hf d_1}(3\tau)\)
\\
=&\,
\sum_{k^I=0,1 \atop I=1,\dots m-1}
\(\vartheta^{\, \perp}_{\hf\sum_I k^I d_I}(\tau)\, \vartheta^{\, \perp}_{\frac13\trho+\hf\sum_I k^I d_I}(3\tau)
+\vartheta^{\, \perp}_{\hf\sum_I k^I d_I+\hf d_1}(\tau)\, \vartheta^{\, \perp}_{\frac13\trho+\hf\sum_I k^I d_I}(3\tau)
\right.
\\
&\, \left. \qquad\qquad
-\vartheta^{\, \perp}_{\frac14+\hf\sum_I k^I d_I}(\tau)\vartheta^{\, \perp}_{\frac13\trho+\frac14+\hf\sum_I k^I d_I}(3\tau)
\).
\end{split}
\label{O1term22}
\ee
The sum over all $k^I$ is equivalent to the sum over $\Lambda_{\Bb_m}^\perp/2\Lambda_{\Bb_m}^\perp$,
which allows to apply the identity \eqref{ident2vth}. Thus, the combination of theta series \eqref{O1term22}
reduces to\footnote{One should distinguish the two case whether $\mu_{\rm tot}=\sum_{\alpha=3}^{m+1}\mu^\alpha$ is even or odd.
In the former case the first (resp. second) term in \eqref{O1term22} gives rise to the first (resp. second) term in
\eqref{O1term23}, and in the latter case the identification is vice versa.}
\be
\vartheta^{\, \perp,2}_{\frac13\trho,\frac23\trho}(\tau)
+\vartheta^{\, \perp,2}_{\frac13\trho+\frac14 d_1,\frac23\trho+\frac14 d_1}(\tau)
-\vartheta^{\, \perp,2}_{\frac13\trho,\frac23\trho+\frac14 d_1}(\tau)
=\vartheta^{\, \perp,2}_{\frac13\trho,\frac23\trho}(\tau),
\label{O1term23}
\ee
where at the second step we applied the identity \eqref{identities-th}.
As a result, one obtains the contribution cancelling the one in \eqref{O1term1}.
This finishes the proof of \eqref{expan-phi3} for $S=\Bb_m$ and the null vector $\nv$ chosen as in \eqref{nullBm1}.

\providecommand{\href}[2]{#2}\begingroup\raggedright\endgroup

%\bibliography{combined}
%\bibliographystyle{utphys}

\end{document}